\documentclass[11pt]{article}

\usepackage[a4paper,margin=1in]{geometry}
\usepackage{amsmath,amssymb,amsthm,mathtools,bm}
\usepackage{booktabs,array,multirow}
\usepackage{graphicx,subcaption}
\usepackage{microtype}
\usepackage{enumitem}
\usepackage{hyperref}
\usepackage{float,placeins}
\usepackage{tikz}
\usetikzlibrary{arrows.meta,positioning}
\hypersetup{colorlinks=true,linkcolor=blue,citecolor=blue,urlcolor=blue}
\graphicspath{{figures/}}
\allowdisplaybreaks
\newtheorem{theorem}{Theorem}[section]
\newtheorem{proposition}[theorem]{Proposition}
\newtheorem{corollary}[theorem]{Corollary}
\newtheorem{lemma}[theorem]{Lemma}
\theoremstyle{definition}

\theoremstyle{remark}
\newtheorem{remark}[theorem]{Remark}

\newcommand{\E}{\mathbb E}
\newcommand{\Pp}{\mathbb P}
\newcommand{\R}{\mathbb R}
\newcommand{\N}{\mathbb N}
\newcommand{\Unif}{\operatorname{Unif}}
\newcommand{\op}{\operatorname{op}}
\newcommand{\TV}{\operatorname{TV}}

\newcommand{\bxi}{\boldsymbol{\xi}}
\newcommand{\bzeta}{\boldsymbol{\zeta}}
\newcommand{\X}{\mathcal X}
\newcommand{\dd}{\,\mathrm d}

\title{Hankel--Christoffel--Nevai Screening for Bayesian Inverse Problems}
\author{Zhiliang Deng\thanks{Corresponding author. School of Mathematical Sciences, University of Electronic Science and Technology of China. Email: dengzhl@uestc.edu.cn.}
\and Xiaomei Yang\thanks{School of Mathematics, Southwest Jiaotong University. Email: yangxiaomath@swjtu.edu.cn.}}
\date{}

\begin{document}
\maketitle

\begin{abstract}
Likelihood evaluation in Bayesian inverse problems often requires a forward-model solve. We study candidate screening using a likelihood-weighted prior moment matrix and two associated scores: a Christoffel ratio and a Nevai polynomial average. A finite pilot supplies likelihood information that is reused to rank further prior candidates. We connect finite-pilot matrix error, quantitative Legendre localization, and fixed-budget posterior-mass regret. Conditional likelihoods distinguish feature loss from polynomial and sampling errors, and a filtered Gaussian construction permits controlled feature selection and localization. In a nonlinear function-coefficient PDE, equal-budget comparisons separate methods using scalar likelihoods from surrogates using forward outputs. The Christoffel ratio outperforms the tested direct likelihood regressions, but forward-response surrogates capture more posterior mass at small retention budgets. Independent-pilot tests and reference resampling quantify two distinct sources of uncertainty. Controlled Gaussian experiments illustrate finite-filter truncation and the benefit of selecting relevant features. These results identify when moment geometry provides useful screening information, without asserting universal superiority over surrogate models or an exact posterior sampling method.
\end{abstract}

\noindent\textbf{Keywords.}
Bayesian inverse problems; uncertainty quantification; posterior screening; likelihood-weighted moments; Christoffel function; Nevai score; function-valued parameter.

\section{Introduction}

Bayesian inverse problems update a prior measure by reweighting it
according to agreement with observed data
\cite{KaipioSomersalo2005,Stuart2010}.
Computing posterior expectations generally requires numerical methods
such as Markov chain Monte Carlo (MCMC), whose cost can be dominated
by repeated evaluations of an expensive forward model.
This motivates methods that use information about the posterior
to guide exploration.
Likelihood-informed subspace methods identify directions along which
the data most strongly modify the prior
\cite{CuiMarzoukWillcox2016}, and dimension-independent
likelihood-informed MCMC incorporates this information into
posterior sampling \cite{CuiLawMarzouk2016}.
Here we investigate how likelihood information can instead guide
the allocation of forward evaluations among candidates drawn from
the prior.
When most posterior mass lies in a region of small prior probability,
evaluating every candidate can devote substantial computation to
regions with little posterior relevance.
We therefore ask whether likelihood evaluations on a finite pilot
sample can be reused to rank a much larger prior sample, allowing
further forward evaluations to focus on the selected candidates.

Our approach encodes the Bayesian update through likelihood-weighted moment geometry. Multiplying the prior by the likelihood changes its polynomial moments; these moments form a weighted Gram matrix, with multivariate Hankel structure in monomial coordinates. In a prior-orthonormal basis, we derive two inexpensive screening scores from this matrix: a Christoffel ratio based on inverse-moment geometry and a Nevai score that averages the likelihood through a polynomial kernel. The pilot evaluations are reused across the candidate pool, so scoring requires no additional forward solves. Polynomial dependence can represent disconnected posterior-relevant regions even when the underlying feature map is linear. Screening alone, however, does not produce exact posterior samples: discarding candidates changes the sampling law unless a separate, properly corrected inference procedure is used.

Christoffel functions and Nevai operators are classical tools of orthogonal-polynomial and moment theory, with established density, support, and localization interpretations \cite{DunklXu2001,KrooLubinsky2013,Lubinsky2011}. Empirical Christoffel functions have been used for support inference and anomaly detection \cite{LasserreSlot2026,LasserrePauwels2019,PauwelsPutinarLasserre2021}, and an infinite-dimensional construction with separate algebraic and harmonic resolutions has been developed for support and trajectory-outlier detection \cite{HenrionLasserre2025}. Weighted Gram matrices also arise in measure modification, where their Cholesky factors connect polynomial families \cite{GumerovRiggSlevinsky2026}. Here we compare the prior with the likelihood-weighted prior to measure posterior relevance. Our contribution concerns the resulting screening scores and their approximation and sampling errors, rather than the existence of the polynomial change of basis.

The analysis connects finite-pilot matrix concentration, quantitative polynomial localization, the discrepancy between the two scores, and posterior-mass regret at a fixed screening budget. For function-valued unknowns, nested feature maps separate three sources of error: unresolved features, finite polynomial resolution, and empirical moment estimation. The conditional likelihood describes the exact Bayesian update visible through the selected features, preserving the evidence and the posterior on feature-measurable events. The weighted moment matrix is a finite-dimensional compression of likelihood multiplication. For Gaussian priors, a separate filtered Nevai construction controls feature selection, localization, and finite-degree truncation; the corresponding rates are not asserted for the unfiltered Hermite kernel or the Gaussian Christoffel ratio.

The numerical experiments examine whether moment geometry provides useful screening at a fixed likelihood budget. In a nonlinear function-coefficient PDE, we compare the scores with direct likelihood regression and forward-response surrogates trained on the same solves. The former use scalar likelihoods, whereas the latter also use the vector of model outputs. The Christoffel ratio improves on the tested scalar regressions but does not outperform the forward surrogates; we therefore report screening quality, information requirements, and computational costs separately. Multimodal and diffusion examples illustrate posterior-relevant geometry, while Gaussian experiments distinguish feature resolution, score calibration, filter truncation, and independent-pilot feature selection.

Section~\ref{sec:hankel} develops likelihood-weighted moment geometry and pilot estimators. Section~\ref{sec:function-space} introduces conditional likelihoods, the operator interpretation, and the three-scale error decomposition. Section~\ref{sec:scores} analyzes the two scores, localization, screening regret, and the filtered Gaussian extension. Section~\ref{sec:numerics} presents the experiments, and Section~\ref{sec:conclusion} discusses the scope and limitations. Appendix~\ref{app:proofs} contains selected proofs.

\section{Posterior-weighted Hankel geometry}
\label{sec:hankel}

Let $\X$ be a separable Banach space, possibly finite-dimensional, equipped with a Borel prior probability measure $\mu_0$. Let $L(\cdot; y): \X\to[0,1]$ be measurable. Define the weighted measure $\nu^y$ and posterior $\mu^y$ by
\begin{equation}
 \frac{\dd\nu^y}{\dd\mu_0}(q)=L(q; y),
 \qquad Z(y):=\int_\X L(q; y)\,\dd\mu_0(q)>0,
 \qquad \frac{\dd\mu^y}{\dd\mu_0}(q)=\frac{L(q; y)}{Z(y)}.
 \label{eq:posterior-measures}
\end{equation}
We first describe polynomial moment geometry for finite-dimensional parameters and then extend it to features of general $q\in\X$. Compact support is imposed only for the bounded-feature concentration estimate below.

A likelihood bounded by a known positive constant can be reduced to this form by division; this scaling leaves the posterior unchanged. The data $y$ are fixed throughout. We retain the superscript $y$ on posterior measures but suppress it from moment and Gram matrices. We write $L$ and $Z$ when no ambiguity arises. Subscripts $d$ and $m$ indicate polynomial degree and feature dimension, respectively, not matrix order. Throughout, $\preceq$ denotes the Loewner order for symmetric matrices. For vectors, $\|\cdot\|_2$ is the Euclidean norm; $\|\cdot\|_{\op}$ denotes the induced operator norm and $\|\cdot\|_{\mathrm F}$ the Frobenius matrix norm. Function norms are written $\|\cdot\|_{L^r(\mu)}$, with $\|\cdot\|_r$ used locally when the reference measure is specified. For probability measures, $\|\mu-\nu\|_{\TV}:=\sup_A|\mu(A)-\nu(A)|$. We denote the one-dimensional standard Gaussian measure by $\gamma_{\mathrm G}:=\mathcal N(0,1)$ and, for a finite coordinate set $S$, its product measure on $\mathbb R^S$ by $\gamma_{\mathrm G,S}:=\bigotimes_{j\in S}\gamma_{\mathrm G}$. Expectations of empirical errors are over the pilot draws. All bases and feature maps are fixed independently of those draws unless a separate selection pilot is explicitly introduced.

\subsection{Monomial moments and the Hankel structure}

In this subsection and the next, take $\X=\R^p$. Assume that the prior has finite moments of every order used below; compact support is sufficient but is not needed for the moment identities. We first make the multi-index notation explicit.  Let $\N_0=\{0, 1, \ldots\}$ and, for $\alpha=(\alpha_1,\ldots,\alpha_p)\in\N_0^p$, define $|\alpha|=\alpha_1+\cdots+\alpha_p$, $q^\alpha=\prod_{j=1}^p q_j^{\alpha_j}$.  For polynomial degree $d$, let $\mathcal A_{p, d}:=\{\alpha\in\N_0^p:|\alpha|\le d\}$ and $s_d:=|\mathcal A_{p, d}|=\binom{p+d}{d}$.
 Define the polynomial space by
\begin{equation}
  \mathcal P_d:=\operatorname{span}\{q^\alpha:\alpha\in\mathcal A_{p, d}\}.
  \label{eq:polynomial-space}
\end{equation}
The function-space construction below uses the corresponding polynomial
space on a feature domain and pulls it back to the original parameter space.
Enumerate $\mathcal A_{p, d}$ as $\alpha^{(0)},\ldots,\alpha^{(s_d-1)}$, with $\alpha^{(0)}=0$. The monomial vector is $\boldsymbol{v}_d(q):=(q^{\alpha^{(j)}})_{j=0}^{s_d-1}\in\R^{s_d}$. For a finite measure $\rho$ on $\X$ with finite moments through degree $2d$, define its degree-$d$ moment matrix by
\begin{equation}
  \mathsf M_d(\rho):=\int_\X \boldsymbol{v}_d(q)\boldsymbol{v}_d(q)^\top\dd\rho(q).
  \label{eq:hankel-monomial}
\end{equation}
For integer row and column indices $i, j\in\{0,\ldots, s_d-1\}$,
$$[\mathsf M_d(\rho)]_{ij} =\int_\X q^{\alpha^{(i)}+\alpha^{(j)}}\dd\rho(q).$$
Here $\alpha^{(i)}$ and $\alpha^{(j)}$ are the multi-indices associated with rows $i$ and $j$, and their addition is componentwise. Thus an entry depends on the associated multi-indices only through their sum, which is the usual multivariate Hankel structure.

For analysis and computation we replace the monomials by a basis that is orthonormal under the prior.  We assume that the prior moment matrix is nondegenerate on the chosen polynomial space, $\mathsf M_d(\mu_0)\succ0$, which means that no polynomial with a nonzero ambient monomial coefficient vector vanishes $\mu_0$-almost everywhere.  Under this assumption there is a nonsingular matrix $\mathsf A_d$ such that $\boldsymbol{\phi}_d(q)=\mathsf A_d \boldsymbol{v}_d(q)$, and
\begin{equation}
  \int_\X\boldsymbol{\phi}_d(q)\boldsymbol{\phi}_d(q)^\top\dd\mu_0(q)=I_{s_d}.
  \label{eq:orthonormal-basis}
\end{equation}
We choose scalar prior-orthonormal basis functions $\{\phi_j\}_{j=0}^{s_d-1}$ with $\phi_0\equiv1$ and collect them into $\boldsymbol{\phi}_d=(\phi_0,\ldots,\phi_{s_d-1})^\top$. Bold symbols distinguish basis and coefficient vectors from their scalar components; the subscript $d$ on $\boldsymbol{\phi}_d$ denotes polynomial resolution, whereas $j$ on $\phi_j$ denotes a component index. Gram matrices are indexed from $0$ throughout. The symbol $\mathsf A_d$ is used only for this change of polynomial basis.  The posterior-weighted moment matrix in prior-orthonormal coordinates is
\begin{equation}
  H_d:=\int_\X\boldsymbol{\phi}_d(q)\boldsymbol{\phi}_d(q)^\top L(q; y)\dd\mu_0(q).
  \label{eq:Hy}
\end{equation}
It is a congruence transform of the monomial Hankel matrix $\mathsf M_d(\nu^y)$ and therefore contains the same moment information, while the prior matrix has been normalized to the identity.  Literal Hankel indexing is present in monomial coordinates; below we retain the term \emph{posterior Hankel matrix} for this prior-orthonormal representation because it is the same moment matrix expressed in a numerically preferable basis.

\begin{proposition}
\label{prop:order}
Assume that the constant polynomial is the first element of \(\boldsymbol{\phi}_d\).  Then
$0\preceq H_d\preceq I_{s_d}$,  $(H_d)_{00}=Z(y)$.
If \(L(q; y)>0\) for \(\mu_0\)-almost every \(q\) and no nonzero polynomial in the chosen space vanishes \(\mu_0\)-almost everywhere, then \(H_d\succ0\).
\end{proposition}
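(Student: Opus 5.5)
The proof works through quadratic forms. For a coefficient vector $c\in\R^{s_d}$, write $f_c:=c^\top\boldsymbol{\phi}_d\in\mathcal P_d$. By \eqref{eq:Hy} and \eqref{eq:orthonormal-basis},
\begin{equation*}
 c^\top H_d c=\int_\X f_c(q)^2 L(q; y)\dd\mu_0(q),
 \qquad
 c^\top c=\int_\X f_c(q)^2\dd\mu_0(q).
\end{equation*}
Exchanging the finite sum and the integral is justified because the prior has finite moments through degree $2d$ and $0\le L\le 1$. The pointwise bounds $0\le L(q; y)\le 1$ then give
\[
0\le c^\top H_d c\le c^\top c
\]
for every $c$. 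This is exactly $0\preceq H_d\preceq I_{s_d}$.

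For the corner entry, take $c=e_0$. Since $\phi_0\equiv 1$ by assumption,
\[
(H_d)_{00}=\int_\X L\dd\mu_0=Z(y),
\]
using \eqref{eq:posterior-measures}.

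For strict positivity, suppose $c^\top H_d c=0$. Then $f_c^2L=0$ holds $\mu_0$-a.e., since the integrand is nonnegative. Because $L>0$ $\mu_0$-a.e., it follows that $f_c=0$ $\mu_0$-a.e.

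The step needing care is turning this into $c=0$. Write $f_c=(\mathsf A_d^\top c)^\top\boldsymbol{v}_d$, a polynomial with ambient monomial coefficient vector $\mathsf A_d^\top c$. The nondegeneracy hypothesis then forces $\mathsf A_d^\top c=0$, and since $\mathsf A_d$ is nonsingular, $c=0$. A shorter route is available: $\|c\|_2^2=\int f_c^2\dd\mu_0=0$ directly by orthonormality.

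So $H_d$ is positive semidefinite with trivial kernel, hence $H_d\succ0$. No real obstacle arises; the only subtlety is keeping the a.e. qualifiers consistent.
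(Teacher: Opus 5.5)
Your proof is correct and follows the paper's argument: you use the quadratic form $c^\top H_d c=\int L\,(c^\top\boldsymbol\phi_d)^2\dd\mu_0$ with $0\le L\le1$ and prior orthonormality, read off the $(0,0)$ entry from $\phi_0\equiv1$, and get strict positivity from $L>0$ $\mu_0$-a.e. together with nondegeneracy. Your remark that orthonormality alone forces $c=0$ once $f_c=0$ $\mu_0$-a.e. is also correct; it shows the nondegeneracy hypothesis is automatic once a prior-orthonormal basis exists.
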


\begin{proof}
For every \(\mathbf{a}\in\R^{s_d}\), $$\mathbf{a}^\top H_d\mathbf{a} =\int L(q; y)[\mathbf{a}^\top\boldsymbol{\phi}_d(q)]^2\dd\mu_0(q)$$ is nonnegative and, since \(L\le1\), is at most \(\|\mathbf{a}\|_2^2\) by \eqref{eq:orthonormal-basis}.  The \((0,0)\) entry is the integral of \(L\) because the first basis function is one.  Strict positivity follows from the stated nondegeneracy.
\end{proof}

\subsection{Likelihood-weighted pilot estimators}

Throughout the paper the empirical moment matrix is formed from ordinary likelihood evaluations at iid prior draws.  Let $Q_i\stackrel{\rm iid}{\sim}\mu_0$,  $i=1,\ldots, M$, and define
\begin{equation}
  \widehat H_d^{(M)}:=\frac1M\sum_{i=1}^M
  L(Q_i; y)\boldsymbol{\phi}_d(Q_i)\boldsymbol{\phi}_d(Q_i)^\top.
  \label{eq:Hhat}
\end{equation}
Then $\E\widehat H_d^{(M)}=H_d$, so the empirical matrix is the standard Monte Carlo estimator of the likelihood-weighted moment matrix.  In particular, its $(0, 0)$ entry is an unbiased estimator of the evidence $Z(y)$. For the concentration bound below, assume additionally that $\operatorname{supp}\mu_0$ is compact and define $\kappa_d:=\sup_{q\in\operatorname{supp}\mu_0}\|\boldsymbol\phi_d(q)\|_2^2\in[1,\infty)$.

\begin{theorem}[Matrix concentration for a weighted pilot]
\label{thm:matrix-concentration}
For every \(t>0\),
\begin{equation}
 \Pp\!\left(\|\widehat H_d^{(M)}-H_d\|_{\op}\ge t \right) \le 2s_d
 \exp\!\left\{-\frac{Mt^2}{2\kappa_d+\frac43 \kappa_d t} \right\}.
 \label{eq:matrix-concentration}
\end{equation}
\end{theorem}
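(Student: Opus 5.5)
We apply the matrix Bernstein inequality for sums of independent, centered, self-adjoint random matrices, in Tropp's form
\[
\Pp\Bigl(\Bigl\|\sum_i X_i\Bigr\|_{\op}\ge u\Bigr)\le 2s\exp\Bigl\{-\frac{u^2/2}{\sigma^2+Ru/3}\Bigr\},
\]
where $s$ is the dimension, $\|X_i\|_{\op}\le R$ almost surely, and $\sigma^2=\|\sum_i\E X_i^2\|_{\op}$. The summands are
\[
X_i:=\frac1M\Bigl(L(Q_i;y)\boldsymbol\phi_d(Q_i)\boldsymbol\phi_d(Q_i)^\top-H_d\Bigr),\qquad i=1,\ldots,M .
\]
They are iid, symmetric, of size $s_d\times s_d$, and have mean zero because $\E\widehat H_d^{(M)}=H_d$. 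Their sum is $\widehat H_d^{(M)}-H_d$, and we take $u=t$.

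\emph{Uniform bound.} Set $Y_i:=L(Q_i;y)\boldsymbol\phi_d(Q_i)\boldsymbol\phi_d(Q_i)^\top$. This matrix is positive semidefinite with
\[
0\preceq Y_i\preceq \|\boldsymbol\phi_d(Q_i)\|_2^2\, I\preceq \kappa_d I,
\]
using $0\le L\le1$ and $Q_i\in\operatorname{supp}\mu_0$ almost surely. By Proposition~\ref{prop:order}, $0\preceq H_d\preceq I\preceq\kappa_d I$, since $\kappa_d\ge1$. The difference of two positive semidefinite matrices, each bounded by $\kappa_d I$, lies between $-\kappa_d I$ and $\kappa_d I$. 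Hence $\|X_i\|_{\op}\le \kappa_d/M=:R$. This avoids the cruder bound $2\kappa_d/M$, and it is exactly what is needed to obtain the constant $\tfrac43\kappa_d t$ after rescaling.

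\emph{Variance bound.} First, $\E (Y_i-H_d)^2=\E Y_i^2-H_d^2\preceq \E Y_i^2$. Next,
\[
Y_i^2=L^2\|\boldsymbol\phi_d(Q_i)\|_2^2\,\boldsymbol\phi_d(Q_i)\boldsymbol\phi_d(Q_i)^\top\preceq \kappa_d\,\boldsymbol\phi_d(Q_i)\boldsymbol\phi_d(Q_i)^\top,
\]
where we used $L^2\le1$. Taking expectations and applying \eqref{eq:orthonormal-basis} gives $\E Y_i^2\preceq \kappa_d I$. Summing over $i$ then yields
\[
\sigma^2\le M\cdot \frac{\kappa_d}{M^2}=\frac{\kappa_d}{M}.
\]

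\emph{Conclusion.} Substituting into Bernstein's inequality gives the exponent
\[
-\frac{t^2/2}{\kappa_d/M+\kappa_d t/(3M)}=-\frac{Mt^2}{2\kappa_d+\tfrac23\kappa_d t}.
\]
This is at least as strong as the claimed exponent $-Mt^2/(2\kappa_d+\tfrac43\kappa_d t)$, so \eqref{eq:matrix-concentration} follows. If we had used the cruder bound $R=2\kappa_d/M$, we would get exactly the stated constant $\tfrac43$. The main point requiring care is therefore the uniform bound: either route works, and the stated constant corresponds to $R=2\kappa_d/M$.
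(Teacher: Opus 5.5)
Your proof is correct and follows the paper's route: matrix Bernstein applied to the same centered iid summands, with the same variance proxy $\kappa_d$ per unscaled term. The only difference is the uniform bound: the paper bounds $L(Q_i;y)\boldsymbol\phi_d(Q_i)\boldsymbol\phi_d(Q_i)^\top-H_d$ by $\kappa_d+1\le2\kappa_d$, which gives the stated constant $\tfrac43$ exactly, whereas your difference-of-PSD-matrices bound $\kappa_d$ gives the stronger constant $\tfrac23$ (so delete your remark that the sharper bound is what produces $\tfrac43$, since your own conclusion contradicts it).
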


\noindent\emph{The proof is given in Appendix~\ref{app:matrix-concentration}.}

At fixed degree $d$, \eqref{eq:matrix-concentration} implies
\[
 \|\widehat H_d^{(M)}-H_d\|_{\op}
 \le
 \sqrt{\frac{2\kappa_d\log(2s_d/\delta)}{M}}
 +\frac{4\kappa_d\log(2s_d/\delta)}{3M}
\]
with probability at least $1-\delta$, for $0<\delta<1$.
As shown in Section~\ref{sec:scores}, this matrix error directly
bounds the error in the Nevai score. For the Christoffel ratio,
the perturbation bound also depends on $\lambda_{\min}(H_d)$,
reflecting the sensitivity of matrix inversion.

\section{Function-space posterior relevance}
\label{sec:function-space}
Let $\mu_0$ be the prior on the parameter space $\X$ introduced
in Section~\ref{sec:hankel}.
To describe finite feature information, let
$\Xi_m:\X\to\mathbb R^m$ be measurable and write
$\Xi_m(q)=(\xi_1(q),\ldots,\xi_m(q))^\top$.
Define $\mathcal F_m:=\sigma(\Xi_m)$ and assume
$\mathcal F_m\subseteq\mathcal F_{m+1}$.
Thus $\mathcal F_m$ consists of the events determined by the
selected features.
For function-valued parameters, typical choices are the first
$m$ Karhunen--Lo\`eve (KL), Fourier, or wavelet coefficients.
The feature-resolved likelihood is
\begin{equation}
  L_m(q):=\E_{\mu_0}[L(Q)\mid\mathcal F_m](q).
  \label{eq:Lm}
\end{equation}
Here $Q$ denotes a prior-distributed random parameter. The $\sigma$-algebra $\mathcal F_m$ is on $\X$. By the Doob--Dynkin lemma there is a measurable $\bar L_m:\mathbb R^m\to[0,1]$ such that $L_m=\bar L_m\circ\Xi_m$ almost surely. Thus $L_m$ is the orthogonal projection of $L$ onto $L^2(\mathcal X,\mathcal F_m,\mu_0)$. For KL features, this averages over the conditional prior of the omitted coefficients, rather than setting them to zero.

\begin{theorem}[Exact feature posterior and function-space limit]
\label{thm:feature-posterior}
Let $\rho_m^0=(\Xi_m)_\#\mu_0$ and $\rho_m^y=(\Xi_m)_\#\mu^y$, where $\Xi_\#\mu$ denotes the pushforward of a measure $\mu$ under $\Xi$. Then
\begin{equation}
  \frac{\dd\rho_m^y}{\dd\rho_m^0}(\bzeta)=\frac{\bar L_m(\bzeta)}{Z}
  \qquad \rho_m^0\text{-a.e.}
  \label{eq:feature-rn}
\end{equation}
and $\int_\X L_m(q)\,\dd\mu_0(q)=Z$. Define the lifted feature posterior $\dd\mu_m^y(q)=Z^{-1}L_m(q)\,\dd\mu_0(q)$. Then
\begin{equation}
  \mu_m^y(A)=\mu^y(A)
  \qquad\text{for every }A\in\mathcal F_m.
  \label{eq:feature-exact}
\end{equation}
Let $\mathcal F_\infty
:=\sigma\!\left(\bigcup_{m\ge1}\mathcal F_m\right)$.
If every Borel set in $\X$ agrees $\mu_0$-almost surely
with a set in $\mathcal F_\infty$, then
$L_m\to L$ $\mu_0$-almost surely and, for every
$1\le r<\infty$,
\begin{equation}
\begin{aligned}
& \|L_m-L\|_{L^r(\mu_0)}\longrightarrow0,\\
& \|\mu_m^y-\mu^y\|_{\TV}=\frac{1}{2Z}\|L_m-L\|_{L^1(\mu_0)}
 \longrightarrow0.
\end{aligned}
\label{eq:function-TV}
\end{equation}

\end{theorem}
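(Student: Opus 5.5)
The plan is to derive everything from the defining property of conditional expectation together with martingale convergence.

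\emph{Radon--Nikodym identity.} Fix a Borel set $B\subseteq\R^m$ and put $A=\Xi_m^{-1}(B)\in\mathcal F_m$. Then
\[
\rho_m^y(B)=\mu^y(A)=Z^{-1}\int_A L\,\dd\mu_0 .
\]
Because $A$ is $\mathcal F_m$-measurable, the conditional-expectation property lets us replace $L$ by $L_m=\bar L_m\circ\Xi_m$ in this integral. A change of variables then gives
\[
\rho_m^y(B)=Z^{-1}\int_B \bar L_m\,\dd\rho_m^0 ,
\]
which is \eqref{eq:feature-rn}. Taking $A=\X$ yields $\int L_m\,\dd\mu_0=Z$, so $\mu_m^y$ is a probability measure. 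The same computation, run for an arbitrary $A\in\mathcal F_m$ rather than a preimage, gives $\mu_m^y(A)=Z^{-1}\int_A L_m\,\dd\mu_0=Z^{-1}\int_A L\,\dd\mu_0=\mu^y(A)$, which is \eqref{eq:feature-exact}. Since every $A\in\mathcal F_m$ has the form $\Xi_m^{-1}(B)$, the two statements are really the same computation.

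\emph{Limit statements.} The sequence $(L_m)$ is a bounded martingale, with values in $[0,1]$, with respect to the filtration $(\mathcal F_m)$ under $\mu_0$. By L\'evy's upward theorem, $L_m\to\E_{\mu_0}[L\mid\mathcal F_\infty]$ both $\mu_0$-a.s.\ and in $L^1$. The step requiring care is identifying this limit with $L$, and this is where the hypothesis enters. $L$ is Borel measurable, so each level set $\{L>c\}$ agrees $\mu_0$-a.s.\ with a set in $\mathcal F_\infty$. Taking rational $c$, it follows that $L$ agrees $\mu_0$-a.s.\ with an $\mathcal F_\infty$-measurable function, namely the supremum of $c\,\mathbf 1_{A_c}$ over the corresponding sets $A_c$. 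Hence $\E[L\mid\mathcal F_\infty]=L$ a.s. (Equivalently, one can work in the $\mu_0$-completion of $\mathcal F_\infty$, which contains the Borel sets.)

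For $1\le r<\infty$, we have $|L_m-L|\le1$, so dominated convergence upgrades the a.s.\ convergence to convergence in $L^r(\mu_0)$.

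\emph{Total variation.} For the TV identity, use the densities of $\mu_m^y$ and $\mu^y$ with respect to $\mu_0$, namely $L_m/Z$ and $L/Z$. The standard formula
\[
\sup_A|\mu(A)-\nu(A)|=\tfrac12\int|f-g|\,\dd\mu_0
\]
holds for probability measures with densities $f,g$: the supremum is attained at $A=\{f>g\}$, because $\int(f-g)\,\dd\mu_0=0$. Applying it gives $\|\mu_m^y-\mu^y\|_{\TV}=(2Z)^{-1}\|L_m-L\|_{L^1(\mu_0)}$, and this tends to zero by the case $r=1$.

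The only step I expect to need care is the measurability identification $\E[L\mid\mathcal F_\infty]=L$ under the "agrees a.s.'' hypothesis. The remaining steps are routine.
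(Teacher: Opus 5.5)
Your proposal is correct and follows the paper's proof essentially step for step. Both use the conditional-expectation identity on $\mathcal F_m$-sets (the paper phrases it with bounded test functions $g\circ\Xi_m$), L\'evy's upward theorem for the bounded martingale $(L_m)$, boundedness for the $L^r$ upgrade, and the common normalization $Z$ for the total-variation identity. Your level-set argument for $\E[L\mid\mathcal F_\infty]=L$ spells out a step the paper simply asserts from the completeness hypothesis.
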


\begin{proof}
For bounded measurable $g$ on the feature space, 
\[
\int g(\Xi_m(q))L(q)\,\dd\mu_0(q) =\int g(\Xi_m(q))L_m(q)\,\dd\mu_0(q),
\]
 which proves \eqref{eq:feature-rn} after pushforward. Taking $g\equiv1$ establishes preservation of $Z$. If $A\in\mathcal F_m$, the defining property of conditional expectation implies $$\int_A L_m(q)\,\dd\mu_0(q)=\int_A L(q)\,\dd\mu_0(q),$$
  proving \eqref{eq:feature-exact}. Finally, $(L_m,\mathcal F_m)$ is a bounded martingale. L\'evy's upward theorem ensures convergence to $\E[L\mid\mathcal F_\infty]=L$ almost surely and in $L^1(\mu_0)$ under the stated completeness assumption. Boundedness upgrades this to every finite $L^r(\mu_0)$. Formula \eqref{eq:function-TV} follows from the common normalization $Z$.
\end{proof}

The theorem shows that feature reduction is not the same as replacing the unknown by a truncated parameter. At fixed $m$, $\mu_m^y$ reproduces the exact posterior on all questions measurable through the chosen features. Increasing $m$ reveals progressively finer posterior information while the mathematical target remains a measure on $\X$.

\subsection{Likelihood multiplication operator and finite-dimensional compression}

Let $\mathcal M_L: L^2(\mu_0)\to L^2(\mu_0)$ be multiplication by the likelihood, $$(\mathcal M_L f)(q)=L(q)f(q).$$  Since $0\le L\le1$, the multiplication operator is self-adjoint and
$0\preceq\mathcal M_L\preceq I$. For the feature levels and degrees
considered below, assume the feature coordinates have finite moments through
order $2d$. For $m\ge1$ and $d\ge0$, let $$\mathcal P_d^{(m)}:=\operatorname{span}\{\bzeta^\alpha:\alpha\in\mathcal A_{m, d}\}$$ be the degree-$d$ polynomial space on $\R^m$, in direct analogy with
\eqref{eq:polynomial-space}. Its pullback through the feature map is the
finite-dimensional space of cylinder polynomials
\begin{equation}
 V_{d,m}:=\left\{r\circ\Xi_m:r\in\mathcal P_d^{(m)}\right\}
 \subset L^2(\mu_0).
 \label{eq:cylinder-space}
\end{equation}
In particular, if $\X=\R^p$ and $\Xi_p$ is the identity, then $V_{d, p}=\mathcal P_d$ as subspaces of $L^2(\mu_0)$, with almost-sure identification understood. For a function-valued $q$,
$\Xi_m(q)$ may consist of its first $m$ KL coefficients, but
$r\circ\Xi_m$ remains a function of the original $q$: pilot weights use
$L(q)$, without replacing the Bayesian model by a truncated-parameter
likelihood. Elements that agree $\mu_0$-almost everywhere are identified. Write $$s_{d, m}:=\dim V_{d, m}\le\binom{m+d}{d},$$ with equality when the degree-$d$ feature moment matrix is nondegenerate. Collect a prior-orthonormal basis into the vector
$\boldsymbol{\phi}_{d, m}=(\phi_0,\ldots,\phi_{s_{d, m}-1})^\top$ with
$\phi_0\equiv1$, and let $P_{d, m}$ be the orthogonal projection onto
$V_{d, m}$. The scalar basis functions $\phi_j$ depend on the fixed resolution $(d, m)$; this dependence is suppressed in their component notation. Thus $H_{d, m}$ below is an $s_{d, m}\times s_{d, m}$ matrix. Choose polynomial representatives $p_j$ on the feature space such that $\phi_j=p_j\circ\Xi_m$ and set
\begin{equation}
 \boldsymbol p_{d, m}(\bxi):=(p_0(\bxi),\ldots, p_{s_{d, m}-1}(\bxi))^\top,
 \qquad \boldsymbol\phi_{d, m}(q)=\boldsymbol p_{d, m}(\Xi_m(q)),
 \label{eq:feature-basis-lift}
\end{equation}
with $p_0\equiv1$. The vector $\boldsymbol p_{d, m}$ is orthonormal under $\rho_m^0$, whereas $\boldsymbol\phi_{d, m}$ is orthonormal under $\mu_0$. If the feature moment matrix is singular, evaluations of polynomial equivalence classes are only intrinsic $\rho_m^0$-almost everywhere; all probabilistic statements use this convention. When it is positive definite, the polynomials define unambiguous pointwise scores on $\mathbb R^m$.

\begin{proposition}
\label{prop:finite-section}
The matrix
\begin{equation}
  H_{d,m}
  =\int_\X \boldsymbol{\phi}_{d, m}(q)\boldsymbol{\phi}_{d, m}(q)^\top L(q)\,\dd\mu_0(q)
  \label{eq:function-H}
\end{equation}
is the matrix representation of the finite section, or finite-dimensional compression, $P_{d, m}\mathcal M_LP_{d, m}$ on $V_{d, m}$. Moreover,
\begin{equation}
  H_{d,m}
  =\int_\X \boldsymbol{\phi}_{d,m}(q)\boldsymbol{\phi}_{d,m}(q)^\top L_m(q)\,\dd\mu_0(q),
  \label{eq:H-Lm}
\end{equation}
so the moment geometry at resolution $m$ depends on the full likelihood only through the exact conditional likelihood $L_m$. 
Let \((d_n, m_n)_{n\ge1}\) be a sequence such that
\(V_{d_n, m_n}\subseteq V_{d_{n+1},m_{n+1}}\) and
\(\overline{\bigcup_{n\ge1}V_{d_n, m_n}}=L^2(\mu_0)\).
Writing \(P_n:=P_{d_n, m_n}\), we then have
$$P_n\mathcal M_LP_n f\to\mathcal M_Lf$$ in $L^2(\mu_0)$ for every $f\in L^2(\mu_0)$.
\end{proposition}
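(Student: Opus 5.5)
The proof has three parts, taken in the order of the statement.

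\emph{Matrix representation.} Since $\{\phi_j\}_{j=0}^{s_{d,m}-1}$ is an orthonormal basis of $V_{d,m}$ in $L^2(\mu_0)$, the matrix of the compression $P_{d,m}\mathcal M_LP_{d,m}$ restricted to $V_{d,m}$ has entries $\langle \phi_i,P_{d,m}\mathcal M_LP_{d,m}\phi_j\rangle_{L^2(\mu_0)}$. Because $P_{d,m}$ is self-adjoint and fixes each $\phi_j$, this equals $\langle\phi_i,L\phi_j\rangle=\int \phi_i\phi_j L\,\dd\mu_0$, which is exactly $(H_{d,m})_{ij}$ in \eqref{eq:function-H}. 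Equivalently, for $f=\mathbf a^\top\boldsymbol\phi_{d,m}$ we have $P_{d,m}\mathcal M_L f=(H_{d,m}\mathbf a)^\top\boldsymbol\phi_{d,m}$.

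\emph{Conditional likelihood.} Each product $\phi_i\phi_j=(p_ip_j)\circ\Xi_m$ is $\mathcal F_m$-measurable. It is integrable because the feature moments through order $2d$ are finite. The defining property of conditional expectation, extended from bounded $\mathcal F_m$-measurable test functions to integrable ones, gives $\int \phi_i\phi_jL\,\dd\mu_0=\int\phi_i\phi_j\,\E[L\mid\mathcal F_m]\,\dd\mu_0$. The extension is justified by truncating $\phi_i\phi_j$ and applying dominated convergence, using $0\le L,L_m\le1$. This is the same computation as in the proof of Theorem~\ref{thm:feature-posterior}, applied entrywise, and it proves \eqref{eq:H-Lm}.

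\emph{Strong convergence.} Fix $f\in L^2(\mu_0)$ and split
\[
\|P_n\mathcal M_LP_nf-\mathcal M_Lf\|\le \|P_n\mathcal M_L(P_nf-f)\|+\|P_n\mathcal M_Lf-\mathcal M_Lf\|.
\]
Since $\|P_n\|_{\op}\le1$ and $\|\mathcal M_L\|_{\op}\le1$, the first term is at most $\|P_nf-f\|$. The second term is $\|P_ng-g\|$ with $g=\mathcal M_Lf\in L^2(\mu_0)$. It therefore suffices to show $P_nh\to h$ for every $h\in L^2(\mu_0)$. Given $\varepsilon>0$, density of the union gives some $N$ and $v\in V_{d_N,m_N}$ with $\|h-v\|<\varepsilon$. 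For $n\ge N$, nestedness gives $v\in V_{d_n,m_n}$, so $\|P_nh-h\|\le\|h-v\|<\varepsilon$ by the best-approximation property of orthogonal projection. Hence both terms tend to zero.

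None of these steps poses a real obstacle. The only points needing care are the integrability justification in the conditional expectation step, which relies on the finite feature moments, and the convention that the spaces consist of $\mu_0$-a.e. equivalence classes. Under that convention $P_{d,m}$ is a well-defined orthogonal projection even when $\dim V_{d,m}<\binom{m+d}{d}$.
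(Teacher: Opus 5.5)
Your proposal is correct and follows essentially the same route as the paper: the entrywise identification of the compression, conditioning each $\mathcal F_m$-measurable product $\phi_i\phi_j$ to pass from $L$ to $L_m$, and the same split $\|P_n\mathcal M_L(P_nf-f)\|+\|(P_n-I)\mathcal M_Lf\|$ for strong convergence. You additionally spell out details the paper leaves implicit, namely the truncation argument for integrable test functions and the proof that nestedness plus density give $P_nh\to h$. The paper gets integrability of $\phi_i\phi_j$ from Cauchy--Schwarz rather than from the finite feature moments, which is an immaterial difference.
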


\begin{proof}
The first statement is the definition of the matrix of a compressed multiplication operator. Every product $\phi_j\phi_k$ is $\mathcal F_m$-measurable and integrable by Cauchy--Schwarz. Since $L$ is bounded, conditioning on $\mathcal F_m$ proves \eqref{eq:H-Lm}. For the last claim, the nestedness and density assumptions imply
\(P_n g\to g\) in \(L^2(\mu_0)\) for every
\(g\in L^2(\mu_0)\).
Since \(P_n\) is an orthogonal projection and
\(\mathcal M_L\) is bounded, we have
\[
 \|P_n\mathcal M_LP_n f-\mathcal M_L f\|_2 \le
 \|\mathcal M_L\|\,\|P_n f-f\|_2
 +\|(P_n-I)\mathcal M_L f\|_2
 \longrightarrow 0.
\]

\end{proof}

Equivalently, if $\mathbf{a}$ is the coefficient vector of $P_{d, m}f$ in the chosen basis, then $H_{d, m}\mathbf{a}$ is the coefficient vector of $P_{d, m}\mathcal M_LP_{d, m}f$. The matrix thus represents likelihood reweighting within the selected polynomial space. The density assumption in Proposition~\ref{prop:finite-section} is additional: completeness of the feature filtration alone does not imply polynomial density for an arbitrary prior.

\subsection{Projection benchmark and a three-scale error decomposition}

The operator interpretation suggests a natural projection
benchmark for the screening scores. Since \(\mathcal M_L1=L\),
applying the compressed operator to the constant function
recovers the orthogonal projection of the likelihood onto \(V_{d, m}\).

To express this projection in coordinates, index the chosen
prior-orthonormal basis by \(j=0,\ldots, s_{d, m}-1\), with
\(\phi_0\equiv1\). The entries in column \(0\) of \(H_{d, m}\)
then satisfy $(H_{d, m})_{j0}=\langle L,\phi_j\rangle_{L^2(\mu_0)}$. Define
\[
 G_{d, m}(q) :=\sum_{j=0}^{s_{d, m}-1} (H_{d, m})_{j0}\,\phi_j(q).
\]
Then
\begin{equation}
  G_{d,m}=P_{d,m}L=P_{d,m}L_m.
  \label{eq:G-projection}
\end{equation}
This identity separates feature resolution, controlled by \(m\),
from polynomial approximation within the selected features,
controlled by \(d\). Estimating the projection from a finite
pilot sample introduces a third source of error.

For function-valued draws $Q_i\sim\mu_0$ on $\X$, use the same likelihood-weighted pilot construction and set
\begin{equation*}
\widehat c_j
  =\frac1M\sum_{i=1}^ML(Q_i)\phi_j(Q_i),
  \qquad
  \widehat G_{d,m}^{(M)}=\sum_{j=0}^{s_{d,m}-1}\widehat c_j\phi_j.
\end{equation*}
Write $c_j=(H_{d,m})_{j0}$ for the population coefficients. The empirical polynomial approximation need not take values
in \([0,1]\), even though the likelihood does. We therefore
define $$\Pi_{[0,1]}(t):=\min\left\{1,\max\{0,t\}\right\}$$ and its pointwise clipped version $$\widetilde G_{d, m}^{(M)}:=\Pi_{[0,1]}\widehat G_{d, m}^{(M)}.$$  Since \(L(q)\in[0,1]\), this clipping cannot increase the
pointwise error relative to \(L\). The clipped function remains
\(\mathcal F_m\)-measurable, although it need not belong to
\(V_{d,m}\).

\begin{theorem}[Function-space Hankel error decomposition]
\label{thm:function-error}
With iid prior draws and deterministic likelihood weights,

\[
 \E\|\widehat G_{d, m}^{(M)}-G_{d, m}\|_{L^2(\mu_0)}^2 \le \frac{s_{d,m}}{M}
.
\]
Consequently,
\begin{equation}
  \E\|\widetilde G_{d, m}^{(M)}-L\|_{L^2(\mu_0)}
  \le
  \underbrace{\|L-L_m\|_2}_{\text{feature error}}
  +\underbrace{\|L_m-P_{d,m}L_m\|_2}_{\text{polynomial error}}
  +\underbrace{\sqrt{s_{d,m}/M}}_{\text{pilot error}}.
  \label{eq:three-error}
\end{equation}
In particular, along any schedule $(M_n, d_n, m_n)$ for which the three terms on the right vanish, $\widetilde G_{d_n, m_n}^{(M_n)}\to L$ in mean $L^2(\mu_0)$.
\end{theorem}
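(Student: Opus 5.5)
By orthonormality, the first bound reduces to a coefficient-wise variance computation. Since $\{\phi_j\}$ is orthonormal in $L^2(\mu_0)$, Parseval gives
\[
 \|\widehat G_{d,m}^{(M)}-G_{d,m}\|_{L^2(\mu_0)}^2=\sum_{j=0}^{s_{d,m}-1}(\widehat c_j-c_j)^2 .
\]
Each $\widehat c_j$ is an average of the iid variables $L(Q_i)\phi_j(Q_i)$, whose mean is $c_j$. Hence $\E(\widehat c_j-c_j)^2=M^{-1}\operatorname{Var}(L(Q)\phi_j(Q))$. This variance is at most $M^{-1}\E[L(Q)^2\phi_j(Q)^2]$. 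Because $0\le L\le1$, the last expectation is at most $\E\phi_j(Q)^2=1$. Summing over the $s_{d,m}$ indices yields $\E\|\widehat G-G\|_2^2\le s_{d,m}/M$. The only care needed is that $\phi_j$ is square-integrable, which holds by the moment assumptions on the features. This fixes the pilot term, and it needs no compact support or $\kappa$, unlike Theorem~\ref{thm:matrix-concentration}.

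For \eqref{eq:three-error}, I would first remove the clipping. Since $L(q)\in[0,1]$ and $\Pi_{[0,1]}$ is $1$-Lipschitz with $\Pi_{[0,1]}(L(q))=L(q)$, we have $|\widetilde G(q)-L(q)|\le|\widehat G(q)-L(q)|$ pointwise. Hence $\|\widetilde G-L\|_2\le\|\widehat G-L\|_2$. Next I would apply the triangle inequality:
\[
 \|\widehat G-L\|_2\le\|L-L_m\|_2+\|L_m-G_{d,m}\|_2+\|\widehat G-G_{d,m}\|_2 .
\]
By \eqref{eq:G-projection}, $G_{d,m}=P_{d,m}L_m$, so the middle term is exactly the polynomial error. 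Taking expectations and applying Jensen, $\E\|\widehat G-G\|_2\le(\E\|\widehat G-G\|_2^2)^{1/2}\le\sqrt{s_{d,m}/M}$, which is the pilot error. The first two terms are deterministic.

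The convergence statement is then immediate. If all three terms vanish along $(M_n,d_n,m_n)$, the right-hand side tends to zero, giving convergence in mean. For the feature term, Theorem~\ref{thm:feature-posterior} supplies conditions under which it vanishes, though the statement only assumes that it does.

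I do not expect a genuine obstacle. The step most worth stating carefully is the justification of \eqref{eq:G-projection}, which reads $P_{d,m}L=P_{d,m}L_m$. Since $V_{d,m}$ consists of $\mathcal F_m$-measurable functions, $\langle L,\phi_j\rangle=\langle L_m,\phi_j\rangle$ by the tower property. I would recall this briefly because the decomposition hinges on it. I would also note the role of the iid, deterministic-weight hypothesis: it is what makes the $\widehat c_j$ unbiased and what keeps the variance bound free of cross terms.
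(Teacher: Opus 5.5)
Your proposal is correct and follows the paper's proof essentially step for step. You bound the pilot term via Parseval and the coefficient-wise bound $\operatorname{Var}(L\phi_j)\le\E[\phi_j^2]=1$. You then obtain the three-term estimate from the nonexpansiveness of clipping, the triangle inequality through $L_m$ and $G_{d,m}=P_{d,m}L_m$, and Jensen's inequality.
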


\begin{proof}
Since $0\le L\le1$ and the basis is prior-orthonormal,

\[
 \operatorname{Var}_{\mu_0}(L\phi_j)\le\E[L^2\phi_j^2]
\le\E[L\phi_j^2]\le\E[\phi_j^2]=1
.
\]
By independence of the pilot draws and orthonormality,
\[
 \E\|\widehat G_{d, m}^{(M)}-G_{d, m}\|_2^2
 =\sum_j\E|\widehat c_j-c_j|^2
 =M^{-1}\sum_j\operatorname{Var}_{\mu_0}(L\phi_j)
 \le s_{d, m}/M.
\]
Clipping is nonexpansive relative to $L\in[0,1]$; the triangle
inequality, \eqref{eq:G-projection}, and Jensen's inequality yield
\eqref{eq:three-error} and the asserted convergence.
\end{proof}

Equation~\eqref{eq:three-error} is a direct projection benchmark.  It separates three quantities that are often conflated in high-dimensional inversion: how much posterior information lies outside the selected function modes, how well the conditional likelihood is represented inside the selected polynomial space, and how accurately the required moments are estimated.  The screening procedures below use Nevai and Christoffel scores; the first-column projection is a comparison method, and Section~\ref{sec:scores} establishes the corresponding three-scale Nevai bound.

The compact-domain concentration result in Section~\ref{sec:hankel} is useful for bounded features, but Gaussian function-space priors lead to unbounded Hermite features.  The following second-moment estimate covers that setting without a uniform envelope.

\begin{proposition}
\label{prop:unbounded-feature}
Let the components of $\boldsymbol{\phi}_{d, m}$ be orthonormal in $L^2(\mu_0)$ and suppose $\chi_{d, m}:=\E_{\mu_0}\|\boldsymbol{\phi}_{d, m}(Q)\|_2^4<\infty$. With iid prior draws $Q_i\sim\mu_0$ on $\X$, define
\begin{equation}
  \widehat H_{d, m}^{(M)}
  :=\frac1M\sum_{i=1}^M
  L(Q_i)\boldsymbol{\phi}_{d, m}(Q_i)\boldsymbol{\phi}_{d, m}(Q_i)^\top.
  \label{eq:function-Hhat}
\end{equation}
Then
\begin{equation}
\label{eq:moment-error-bounds}
 \E\|\widehat H_{d, m}^{(M)}-H_{d, m}\|_{\mathrm F}^2\le\chi_{d, m}/M,
\quad \E\|\widehat H_{d, m}^{(M)}-H_{d,m}\|_{\op}\le\sqrt{\chi_{d, m}/M}.
\end{equation}
\end{proposition}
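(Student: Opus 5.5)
The plan is a direct second-moment computation for iid matrix-valued summands, followed by the comparison of operator and Frobenius norms and Jensen's inequality. Set
\[
X_i:=L(Q_i)\boldsymbol{\phi}_{d,m}(Q_i)\boldsymbol{\phi}_{d,m}(Q_i)^\top-H_{d,m},\qquad i=1,\ldots,M.
\]
By \eqref{eq:function-H} these are iid, mean-zero random symmetric matrices. The estimator error is $\widehat H_{d,m}^{(M)}-H_{d,m}=M^{-1}\sum_i X_i$.

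First, I will check integrability. The entries of $X_i$ are square integrable because
\[
\|L\boldsymbol{\phi}\boldsymbol{\phi}^\top\|_{\mathrm F}^2=L^2\|\boldsymbol{\phi}\|_2^4\le\|\boldsymbol{\phi}\|_2^4,
\]
and $\chi_{d,m}<\infty$ by hypothesis. This justifies all expectations below.

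Next, I will expand the Frobenius norm as a sum of squared entries, which is the Hilbert--Schmidt inner product. Independence and zero mean kill the cross terms $\E\langle X_i,X_k\rangle_{\mathrm F}$ for $i\ne k$, so
\[
\E\Bigl\|\tfrac1M\textstyle\sum_i X_i\Bigr\|_{\mathrm F}^2=\tfrac1M\,\E\|X_1\|_{\mathrm F}^2 .
\]
The one-sample variance is bounded by the raw second moment, because the variance of a Hilbert-space-valued variable is at most its second moment:
\[
\E\|X_1\|_{\mathrm F}^2=\E\|L\boldsymbol{\phi}\boldsymbol{\phi}^\top\|_{\mathrm F}^2-\|H_{d,m}\|_{\mathrm F}^2\le \E\bigl[L^2\|\boldsymbol{\phi}\|_2^4\bigr]\le\chi_{d,m}.
\]
Here I use $\|\mathbf v\mathbf v^\top\|_{\mathrm F}=\|\mathbf v\|_2^2$ and $0\le L\le1$. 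This gives the first bound in \eqref{eq:moment-error-bounds}.

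For the operator norm, I will use $\|\cdot\|_{\op}\le\|\cdot\|_{\mathrm F}$ and Jensen's inequality, $\E Y\le(\E Y^2)^{1/2}$, which give $\E\|\widehat H-H\|_{\op}\le\sqrt{\chi_{d,m}/M}$.

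There is no real obstacle here. The only point needing care is to note that no boundedness of $\boldsymbol{\phi}$ is required, so the argument works for unbounded Hermite features. The fourth-moment assumption is exactly what makes the Frobenius variance finite. Orthonormality of the basis is not actually needed for the bound itself; it only fixes the normalization of $H_{d,m}$.
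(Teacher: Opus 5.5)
Your proposal is correct and matches the paper's proof essentially step for step. Independence and centering reduce the Frobenius error to $M^{-1}$ times a one-sample second moment, which is bounded by $\chi_{d,m}$ via $\|\boldsymbol\phi\boldsymbol\phi^\top\|_{\mathrm F}=\|\boldsymbol\phi\|_2^2$ and $0\le L\le1$, and the operator-norm bound then follows from $\|\cdot\|_{\op}\le\|\cdot\|_{\mathrm F}$ and Jensen. Your remark that orthonormality is not used in the bound itself is also accurate.
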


\begin{proof}
Write $$X_i=L(Q_i)\boldsymbol{\Phi}_i\boldsymbol{\Phi}_i^\top$$ with $\boldsymbol{\Phi}_i=\boldsymbol{\phi}_{d, m}(Q_i)$.  By independence and centering, $$\E\bigl\|\widehat H_{d,m}^{(M)}-H_{d, m}\bigr\|_{\mathrm F}^2 =\frac1M\E\|X_1-H_{d, m}\|_{\mathrm F}^2 \le \frac1M\E\|X_1\|_{\mathrm F}^2.$$ Since $$\|\boldsymbol{\phi}\boldsymbol{\phi}^\top\|_{\mathrm F}^2=\|\boldsymbol{\phi}\|_2^4$$ and $0\le L\le1$, $$\E\|X_1\|_{\mathrm F}^2 =\E\!\left[L(Q_1)^2\|\boldsymbol{\phi}_{d,m}(Q_1)\|_2^4\right] \le \chi_{d,m}.$$ This proves the first estimate in \eqref{eq:moment-error-bounds}; the second follows from $\|A\|_{\op}\le\|A\|_{\mathrm F}$ and Jensen's inequality.
\end{proof}
This estimate applies, in particular, to Gaussian priors with
Hermite cylinder features. For each fixed \((d, m)\), the basis
contains finitely many polynomials of Gaussian coordinates,
so \(\chi_{d, m}<\infty\). Hence the empirical moment matrix
satisfies $$\E\|\widehat H_{d, m}^{(M)}-H_{d, m}\|_{\mathrm F}^2\to0$$ as
\(M\to\infty\), without a compact-support assumption.
If \(d\) and \(m\) also increase with \(M\), the same estimate
ensures convergence provided that \(\chi_{d, m}/M\to0\).

\subsection{Posterior-relevance spectrum}

Assume in this subsection the completeness condition of
Theorem~\ref{thm:feature-posterior}, so that $L_m\to L$ in $L^2(\mu_0)$.
Let $\mathcal F_0$ be the trivial $\sigma$-field, so $L_0=Z$. The martingale increments $$D_m:=L_m-L_{m-1}$$ are orthogonal in $L^2(\mu_0)$. Define the relevance energy $$\mathcal E_m:=\|D_m\|_2^2.$$  Then
\begin{align*}
\begin{aligned}
&\|L_m-Z\|_2^2=\sum_{k=1}^m\mathcal E_k,\\  
&\|L-L_m\|_2^2=\sum_{k>m}\mathcal E_k.
\end{aligned}
\end{align*}
When $\operatorname{Var}_{\mu_0}(L)>0$, the resolved fraction
\begin{equation}
  \mathfrak R_m:=1-\frac{\|L-L_m\|_2^2}{\operatorname{Var}_{\mu_0}(L)}
  =\frac{\operatorname{Var}_{\mu_0}(L_m)}{\operatorname{Var}_{\mu_0}(L)}
  \label{eq:resolved-fraction}
\end{equation}
is nondecreasing and converges to one.  The sequence $(\mathcal E_m)$ is relative to the chosen nested ordering of the features: reordering the coordinates generally changes the individual increments, although the cumulative resolved fraction at a fixed generated $\sigma$-algebra is unchanged.  Within a fixed ordering the spectrum is posterior-driven: a mode with small prior variance may still carry substantial relevance energy if the data are sensitive to it.

For a Gaussian prior on a separable Hilbert space with a trace-class covariance and its KL expansion 
\begin{align*}
q=\bar q+\sum_{k\ge1}\sqrt{\lambda_k}\,\xi_ke_k,  \qquad \xi_k\stackrel{\rm iid}{\sim}\gamma_{\mathrm G},
\end{align*}
we take $\Xi_m(q)=(\xi_1(q),\ldots,\xi_m(q))^\top$ and Hermite cylinder polynomials. The union of finite Hermite chaoses is dense in $L^2(\mu_0)$, so the preceding theory applies without replacing the posterior target by a fixed truncation. This differs from likelihood-informed subspace methods, which identify informative linear directions through gradient or Hessian information \cite{CuiLawMarzouk2016}; here posterior relevance is recovered from likelihood-weighted moments.

Henrion and Lasserre's infinite-dimensional Christoffel construction \cite{HenrionLasserre2025} provides a complementary direct Hilbert-space viewpoint with algebraic and harmonic degrees. Our indices $(d, m)$ play analogous resolution roles, but our object is likelihood-weighted prior geometry and our target is Bayesian relevance rather than support detection.

\section{Christoffel and Nevai screening}
\label{sec:scores}

\subsection{Score construction and stability}

We use the feature formulation of Section~\ref{sec:function-space}: the unknown is $q\in\X$, while scores are evaluated at $\bxi=\Xi_m(q)\in\mathbb R^m$. The finite-dimensional parameter model is recovered with $m=p$ and the identity feature map. Define the weighted feature measure $\nu_m:=\bar L_m\rho_m^0=(\Xi_m)_\#\nu^y$. The same moment matrix can be written as
\begin{equation}
\begin{aligned}
 H_{d, m}
 &=\int_\X L(q)\boldsymbol\phi_{d, m}(q)\boldsymbol\phi_{d, m}(q)^\top\,\dd\mu_0(q)\\
 &=\int_{\mathbb R^m}\bar L_m(\bxi)\boldsymbol p_{d, m}(\bxi)\boldsymbol p_{d, m}(\bxi)^\top\,\dd\rho_m^0(\bxi).
\end{aligned}
\label{eq:feature-matrix-bridge}
\end{equation}
Thus the pilot still uses the original likelihood; the finite feature map restricts the scoring functions, not the underlying parameter space.

Define the Christoffel--Darboux kernel and its diagonal by
\begin{equation}
 K_{d,m}(\bxi, \bzeta):=\boldsymbol p_{d, m}(\bxi)^\top\boldsymbol p_{d, m}(\bzeta),
 \qquad \mathcal K_{d, m}(\bxi):=K_{d, m}(\bxi, \bxi).
 \label{eq:kernel}
\end{equation}
For the selected polynomial span $W_{d, m}:=\operatorname{span}\{p_j: 0\le j<s_{d, m}\}$ and a measure $\rho$ with positive definite Gram matrix on this span, write
\[
 \Lambda_{d,m}^{\rho}(\bxi):=\min_{\substack{r\in W_{d,m}\\r(\bxi)=1}}
 \int_{\mathbb R^m}r(\bzeta)^2\,\dd\rho(\bzeta).
\]
When the degree-$d$ feature moment matrix is nondegenerate, $W_{d, m}=\mathcal P_d^{(m)}$. 
Prior orthonormality shows $\Lambda_{d, m}^{\rho_m^0}(\bxi)=\mathcal K_{d, m}(\bxi)^{-1}$. 
If $H_{d, m}\succ0$, define the Christoffel relevance score by
\begin{equation}
 R_{d,m}(\bxi):=\frac{\Lambda_{d, m}^{\nu_m}(\bxi)}{\Lambda_{d, m}^{\rho_m^0}(\bxi)}
 =\frac{\mathcal K_{d, m}(\bxi)}{\boldsymbol p_{d, m}(\bxi)^\top H_{d, m}^{-1}\boldsymbol p_{d, m}(\bxi)}.
 \label{eq:Rscore}
\end{equation}
The Nevai relevance score uses the same matrix without inversion:
\begin{equation}
 T_{d, m}(\bxi):=\frac{\boldsymbol p_{d, m}(\bxi)^\top H_{d, m}\boldsymbol p_{d, m}(\bxi)}{\mathcal K_{d, m}(\bxi)}.
 \label{eq:Tscore}
\end{equation}
The corresponding scores for a parameter $q$ are $R_{d, m}(\Xi_m(q))$ and $T_{d, m}(\Xi_m(q))$.  In the identity-feature model we abbreviate these as $R_d(q)$ and $T_d(q)$.

\begin{proposition}
\label{prop:scores}
Assume $H_{d, m}\succ0$. Then $0<R_{d, m}(\bxi)\le T_{d, m}(\bxi)\le1$, and
\begin{equation}
 T_{d, m}(\bxi)=\int_{\mathbb R^m}\bar L_m(\bzeta)\,\dd\pi_{d,m,\bxi}(\bzeta),\qquad
 \dd\pi_{d, m, \bxi}(\bzeta):=\frac{K_{d, m}(\bxi,\bzeta)^2}{\mathcal K_{d, m}(\bxi)}\dd\rho_m^0(\bzeta),
\label{eq:nevai-average}
\end{equation}
where $\pi_{d, m,\bxi}$ is a probability measure. The Nevai representation and $0\le T_{d, m}\le1$ remain valid when $H_{d, m}$ is singular.
\end{proposition}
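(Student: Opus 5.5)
The plan is to reduce all three claims to Rayleigh-quotient statements about the single vector $\mathbf{u}:=\boldsymbol p_{d,m}(\bxi)$, using that $0\preceq H_{d,m}\preceq I$. This ordering follows exactly as in Proposition~\ref{prop:order}: for any coefficient vector $\mathbf a$, the bridge \eqref{eq:feature-matrix-bridge} gives $\mathbf a^\top H_{d,m}\mathbf a=\int \bar L_m(\mathbf a^\top\boldsymbol p_{d,m})^2\dd\rho_m^0$, and $0\le\bar L_m\le1$ together with $\rho_m^0$-orthonormality of $\boldsymbol p_{d,m}$ does the rest. Since $p_0\equiv1$, we have $\mathcal K_{d,m}(\bxi)=\|\mathbf u\|_2^2\ge1>0$, so both scores are well defined whenever the denominators in \eqref{eq:Rscore} and \eqref{eq:Tscore} are.

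For the chain $0<R\le T\le1$, assume $H_{d,m}\succ0$. The upper bound $T\le1$ is immediate from $\mathbf u^\top H_{d,m}\mathbf u\le\|\mathbf u\|_2^2$. Positivity $R>0$ follows because $H_{d,m}^{-1}\succ0$ and $\mathbf u\neq0$, so the denominator of $R$ is positive. The key inequality $R\le T$ is equivalent to
\[
(\mathbf u^\top H_{d,m}\mathbf u)(\mathbf u^\top H_{d,m}^{-1}\mathbf u)\ge\|\mathbf u\|_2^4 .
\]
This is Cauchy--Schwarz applied to $H_{d,m}^{1/2}\mathbf u$ and $H_{d,m}^{-1/2}\mathbf u$, whose inner product is $\|\mathbf u\|_2^2$. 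I would also verify the variational identity in \eqref{eq:Rscore}. For $r=\mathbf a^\top\boldsymbol p_{d,m}$ with $r(\bxi)=\mathbf a^\top\mathbf u=1$, we have $\int r^2\dd\nu_m=\mathbf a^\top H_{d,m}\mathbf a$. Minimizing a positive definite quadratic form under one linear constraint gives $\Lambda^{\nu_m}_{d,m}(\bxi)=(\mathbf u^\top H_{d,m}^{-1}\mathbf u)^{-1}$. The case $H_{d,m}=I$ gives $\Lambda^{\rho_m^0}_{d,m}=\mathcal K_{d,m}^{-1}$, and the ratio of the two is the stated formula.

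For the Nevai representation, I would expand $\mathbf u^\top H_{d,m}\mathbf u$ using \eqref{eq:feature-matrix-bridge}:
\[
\mathbf u^\top H_{d,m}\mathbf u=\int\bar L_m(\bzeta)\bigl(\mathbf u^\top\boldsymbol p_{d,m}(\bzeta)\bigr)^2\dd\rho_m^0(\bzeta)=\int\bar L_m(\bzeta)K_{d,m}(\bxi,\bzeta)^2\dd\rho_m^0(\bzeta).
\]
Dividing by $\mathcal K_{d,m}(\bxi)$ gives \eqref{eq:nevai-average}. The measure $\pi_{d,m,\bxi}$ is nonnegative. Its total mass is one by the reproducing property
\[
\int K_{d,m}(\bxi,\bzeta)^2\dd\rho_m^0(\bzeta)=\mathbf u^\top\Bigl(\int\boldsymbol p_{d,m}\boldsymbol p_{d,m}^\top\dd\rho_m^0\Bigr)\mathbf u=\|\mathbf u\|_2^2 .
\]
None of this uses invertibility of $H_{d,m}$, so the representation holds when $H_{d,m}$ is singular. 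In that case $0\le T\le1$ follows because $T$ is an average of $\bar L_m\in[0,1]$ against a probability measure.

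There is no substantive obstacle; the one step needing care is the justification of the bridge identity \eqref{eq:feature-matrix-bridge}. It rests on \eqref{eq:H-Lm} from Proposition~\ref{prop:finite-section} together with the change of variables $\boldsymbol\phi_{d,m}=\boldsymbol p_{d,m}\circ\Xi_m$ under the pushforward $\rho_m^0=(\Xi_m)_\#\mu_0$. Integrability of $\bar L_m K^2$ holds because the feature coordinates have finite moments through order $2d$. The pointwise evaluation at $\bxi$ uses the chosen polynomial representatives, consistent with the paper's convention in the singular case.
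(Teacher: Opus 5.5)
Your proof is correct and follows the paper's argument. It uses the Loewner bound $0\preceq H_{d,m}\preceq I$ for $T_{d,m}\le1$ and Cauchy--Schwarz for $R_{d,m}\le T_{d,m}$. It then expands the quadratic form and uses orthonormality to get the Nevai representation and unit mass of $\pi_{d,m,\bxi}$, none of which requires invertibility. The differences are cosmetic. You work with the unnormalized vector $\boldsymbol p_{d,m}(\bxi)$ rather than $\boldsymbol p_{d,m}(\bxi)/\sqrt{\mathcal K_{d,m}(\bxi)}$. You also verify the variational form of $R_{d,m}$, and you make $\mathcal K_{d,m}\ge1$ and $R_{d,m}>0$ explicit, which the paper leaves implicit.
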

\begin{proof}
Set $$\mathbf u=\boldsymbol p_{d, m}(\bxi)/\sqrt{\mathcal K_{d, m}(\bxi)}.$$ 
Orthonormality and $0\le\bar L_m\le1$ imply $$0\preceq H_{d, m}\preceq I,$$ 
hence $T_{d, m}\le1$. By Cauchy--Schwarz, $$(\mathbf u^\top H_{d, m}\mathbf u)(\mathbf u^\top H_{d, m}^{-1}\mathbf u)\ge1,$$ 
proving the ordering. Expanding the quadratic form proves the integral representation, while orthonormality implies $$\int K_{d, m}(\bxi, \bzeta)^2\dd\rho_m^0(\bzeta)=\mathcal K_{d, m}(\bxi).$$ The latter arguments do not require invertibility.
\end{proof}

\paragraph{Matrix perturbations.}

At fixed $(d, m)$, both scores are normalized quadratic forms: with $$\mathbf u=\boldsymbol p_{d, m}(\bxi)/\sqrt{\mathcal K_{d, m}(\bxi)},$$
 they are $T_{d, m}=\mathbf u^\top H_{d, m}\mathbf u$ and $R_{d, m}=(\mathbf u^\top H_{d, m}^{-1}\mathbf u)^{-1}$. Their perturbation bounds are purely matrix statements and do not depend on the dimension of $\X$.

\begin{proposition}
\label{prop:score-stability}
Let $H\succ0$ and $\widehat H=H+E$ be symmetric.  For a unit vector $\mathbf{u}$, define 
\begin{align*}
T=\mathbf{u}^\top H\mathbf{u},  \quad \widehat T=\mathbf{u}^\top\widehat H\mathbf{u},  \quad R=(\mathbf{u}^\top H^{-1}\mathbf{u})^{-1}.
\end{align*}
 Then
$|\widehat T-T|\le\|E\|_{\op}$.
If $\delta:=\|H^{-1/2}EH^{-1/2}\|_{\op}<1$, then $\widehat H\succ0$ and, with $\widehat R=(\mathbf{u}^\top\widehat H^{-1}\mathbf{u})^{-1}$,
\begin{equation}
  (1-\delta)R\le\widehat R\le(1+\delta)R.
  \label{eq:christoffel-relative}
\end{equation}
Moreover,
\begin{equation}
  \delta\le\frac{\|E\|_{\op}}{\lambda_{\min}(H)}.
  \label{eq:delta-lambda}
\end{equation}
\end{proposition}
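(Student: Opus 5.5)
The plan has three short matrix steps: a Rayleigh-quotient bound for the Nevai score, a Loewner sandwich for the Christoffel ratio, and a norm bound for $\delta$.

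\textbf{Nevai bound.} Since $\widehat T-T=\mathbf u^\top E\mathbf u$ and $\|\mathbf u\|_2=1$, the Rayleigh-quotient characterization of the operator norm of the symmetric matrix $E$ gives $|\widehat T-T|\le\|E\|_{\op}$ directly.

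\textbf{Christoffel bound.} I would work in the whitened coordinates defined by $H^{1/2}$. Set $F:=H^{-1/2}EH^{-1/2}$, which is symmetric with $\|F\|_{\op}=\delta<1$. Then
\[
\widehat H=H^{1/2}(I+F)H^{1/2},
\qquad
(1-\delta)I\preceq I+F\preceq(1+\delta)I .
\]
Congruence by $H^{1/2}$ preserves the Loewner order, so
\[
(1-\delta)H\preceq\widehat H\preceq(1+\delta)H .
\]
In particular $\widehat H\succ0$, because $H\succ0$ and $1-\delta>0$. Matrix inversion is order-reversing on positive definite matrices, which yields
\[
(1+\delta)^{-1}H^{-1}\preceq\widehat H^{-1}\preceq(1-\delta)^{-1}H^{-1}.
\]
Evaluating these quadratic forms at $\mathbf u$ gives
\[
(1+\delta)^{-1}\,\mathbf u^\top H^{-1}\mathbf u
\le \mathbf u^\top\widehat H^{-1}\mathbf u
\le (1-\delta)^{-1}\,\mathbf u^\top H^{-1}\mathbf u .
\]
All three quantities are positive, so taking reciprocals reverses the inequalities and gives $(1-\delta)R\le\widehat R\le(1+\delta)R$, which is \eqref{eq:christoffel-relative}.

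\textbf{Bound on $\delta$.} Submultiplicativity gives $\delta\le\|H^{-1/2}\|_{\op}^2\|E\|_{\op}$. Since $\|H^{-1/2}\|_{\op}^2=\lambda_{\min}(H)^{-1}$, this is \eqref{eq:delta-lambda}.

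The only step needing care is the order-reversal of inversion. I would justify it via the congruence $A\preceq B \iff B^{-1/2}AB^{-1/2}\preceq I$, together with the fact that a positive definite matrix $C$ satisfies $C\preceq I$ if and only if $C^{-1}\succeq I$ (compare eigenvalues). Apart from that, every step is a routine Loewner-order manipulation.
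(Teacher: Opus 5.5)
Your proof is correct and follows the paper's argument: the Rayleigh-quotient bound for $T$, the sandwich $(1-\delta)H\preceq\widehat H\preceq(1+\delta)H$, order reversal under inversion followed by quadratic forms and reciprocals, and submultiplicativity with $\|H^{-1/2}\|_{\op}^2=\lambda_{\min}(H)^{-1}$. You also spell out two steps the paper leaves implicit: the congruence $\widehat H=H^{1/2}(I+F)H^{1/2}$ that yields the sandwich, and the justification that inversion reverses the Loewner order.
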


\begin{proof}
The first bound follows from $|\mathbf{u}^\top E\mathbf{u}|\le\|E\|_{\op}$.  By the definition of $\delta$, $(1-\delta)H\preceq\widehat H\preceq(1+\delta)H$. Inverting these inequalities reverses the Loewner order: $$(1+\delta)^{-1}H^{-1} \preceq\widehat H^{-1} \preceq(1-\delta)^{-1}H^{-1}.$$ Taking quadratic forms with $\mathbf{u}$ and reciprocals yields \eqref{eq:christoffel-relative}.  Finally, \eqref{eq:delta-lambda} follows from
$$\|H^{-1/2}\|_{\op}^2=\lambda_{\min}(H)^{-1}.$$
\end{proof}

The Nevai score is $1$-Lipschitz in the moment matrix and remains defined
when the empirical matrix is singular. The Christoffel ratio requires
inversion, and its relative perturbation bound depends on
$\lambda_{\min}(H_{d, m})$. Thus localization at higher degree must be balanced
against the accuracy required for empirical Christoffel inversion; neither
score is uniformly more accurate than the other.

The averaging representation also controls localization bias. If $|\bar L_m(\bzeta)-\bar L_m(\bxi)|\le\omega_m(\|\bzeta-\bxi\|_2)$ on the feature support, then
\begin{equation}
 |T_{d, m}(\bxi)-\bar L_m(\bxi)|\le\Delta_{d, m}(\bxi)
 :=\int\omega_m(\|\bzeta-\bxi\|_2)\,\dd\pi_{d,m,\bxi}(\bzeta).
 \label{eq:modulus}
\end{equation}
This follows by subtracting $\bar L_m(\bxi)$ inside the probability average. For an $\ell_m$-Lipschitz feature likelihood, take $\omega_m(r)=\min\{1,\ell_m r\}$. Localization here is in feature coordinates; it does not by itself control the omitted-feature error $L-L_m$.

\subsection{Localization and comparison of the scores}
\label{sec:quantitative-localization}

The preceding bound depends on kernel localization. We now quantify this dependence for the uniform feature prior $\rho_m^0=\Unif([-1,1]^m)$, with $m$ fixed. Here $W_{d,m}=\mathcal P_d^{(m)}$, whose pullback is $V_{d,m}$ in \eqref{eq:cylinder-space}. Its orthonormal basis consists of the products $p_\alpha(\bxi)=\prod_{j=1}^m \mathsf P_{\alpha_j}(\xi_j)$ with $|\alpha|\le d$, where $\mathsf P_k$ is the degree-$k$ orthonormal Legendre polynomial under $\Unif([-1,1])$. Using the probability measure in \eqref{eq:nevai-average}, set
\begin{align*}
\begin{aligned}
 \mathfrak m_{2,d,m}(\bxi)&:=\int_{[-1,1]^m}\|\bzeta-\bxi\|_2^2\,\dd\pi_{d,m,\bxi}(\bzeta),\\
 \mathcal V_{d,m}&:=\int_{[-1,1]^m}\mathfrak m_{2,d,m}(\bxi)\,\dd\rho_m^0(\bxi).
\end{aligned}
\end{align*}

\begin{theorem}[Quantitative localization for total-degree Legendre spaces]
\label{thm:legendre-localization}
For every $m\ge1$ and $d\ge0$,
\begin{equation}
  \mathcal V_{d, m}
  \le
  \frac{2m(2\pi)^m}{3} \frac{\binom{d+m-1}{m-1}}{(\lfloor d/m\rfloor+1)^m}.
  \label{eq:Vdm-exact-bound}
\end{equation}
In particular, if $d\ge m$, then
\begin{equation}
  \mathcal V_{d, m}
  \le \frac{C_m}{d+1},
  \qquad
  C_m:=\frac{2^{m+1}(2\pi)^m m^{m+1}}{3(m-1)!}.
  \label{eq:Vdm-Cm}
\end{equation}
Consequently, if the feature likelihood $\bar L_m$ is $\ell_m$-Lipschitz, then the population Nevai score satisfies
\begin{equation}
  \beta_{d, m}
  :=\|T_{d,m}-\bar L_m\|_{L^2(\rho_m^0)}
  \le
  \ell_m\sqrt{\mathcal V_{d, m}}
  \le
  \ell_m\sqrt{\frac{C_m}{d+1}},
  \qquad d\ge m.
  \label{eq:nevai-legendre-rate}
\end{equation}
Thus, for fixed $m$, the present argument establishes the explicit rate
$\beta_{d,m}=O_m(d^{-1/2})$.
\end{theorem}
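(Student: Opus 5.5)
The plan is to bound $\mathcal V_{d,m}$ by splitting it into two separate estimates: a uniform lower bound on the diagonal kernel $\mathcal K_{d,m}$, and an exact trace identity for the remaining double integral. By \eqref{eq:nevai-average},
\[
 \mathcal V_{d,m}=\int\!\!\int \|\bzeta-\bxi\|_2^2\,\frac{K_{d,m}(\bxi,\bzeta)^2}{\mathcal K_{d,m}(\bxi)}\,\dd\rho_m^0(\bzeta)\,\dd\rho_m^0(\bxi)
 \le \Bigl(\inf_{\bxi}\mathcal K_{d,m}(\bxi)\Bigr)^{-1}\sum_{j=1}^m\int\!\!\int (\zeta_j-\xi_j)^2K_{d,m}(\bxi,\bzeta)^2\,\dd\rho_m^0\,\dd\rho_m^0 .
\]

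\emph{Step 1 (numerator).} Let $X_j$ be multiplication by $\xi_j$ on $L^2(\rho_m^0)$, and let $P$ be the orthogonal projection onto $W_{d,m}$. Expanding the square, I expect
\[
 \int\!\!\int(\zeta_j-\xi_j)^2K^2=2\operatorname{tr}(PX_j^2P)-2\operatorname{tr}(PX_jPX_jP)=2\|(I-P)X_jP\|_{\mathrm F}^2 .
\]
The Legendre three-term recurrence gives $\xi_j p_\alpha=a_{\alpha_j+1}p_{\alpha+e_j}+a_{\alpha_j}p_{\alpha-e_j}$, with $a_k=k/\sqrt{4k^2-1}$ and hence $a_k^2\le 1/3$. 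Since $W_{d,m}$ is the total-degree space, $(I-P)X_jP$ kills every $p_\alpha$ with $|\alpha|<d$. For $|\alpha|=d$ it sends $p_\alpha$ to $a_{\alpha_j+1}p_{\alpha+e_j}$. There are $\binom{d+m-1}{m-1}$ indices with $|\alpha|=d$, so summing over $j$ bounds the numerator by $\tfrac{2m}{3}\binom{d+m-1}{m-1}$. This produces exactly the prefactor in \eqref{eq:Vdm-exact-bound}.

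\emph{Step 2 (denominator).} Set $n=\lfloor d/m\rfloor$. The tensor space spanned by $p_\alpha$ with $\max_j\alpha_j\le n$ is contained in $W_{d,m}$. The diagonal kernel is monotone under inclusion of subspaces, by the extremal characterization $\mathcal K=1/\Lambda$. Hence $\mathcal K_{d,m}(\bxi)\ge\prod_{j=1}^m k_n(\xi_j)$, where $k_n(x)=\sum_{k\le n}\mathsf P_k(x)^2=\sum_{k\le n}(2k+1)P_k(x)^2$ in terms of the classical Legendre $P_k$. It then suffices to prove the one-dimensional estimate
\[
 \inf_{x\in[-1,1]}k_n(x)\ge \frac{n+1}{2\pi}.
\]
This is the main obstacle.

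I would first try the Bernstein-type inequality $\sqrt{1-x^2}\,P_k(x)^2\ge$ something averaged over $k$. The cleaner route is a known Christoffel-function bound for the Legendre weight, $\lambda_n(x)\le \pi\sqrt{1-x^2}/(n+1)$ for the $\dd x$-weight (an Erdős–Turán type estimate). It gives $k_n(x)\ge (n+1)/(\pi\sqrt{1-x^2}\cdot 2)\ge (n+1)/(2\pi)$ after accounting for the factor $1/2$ in the uniform density. The infimum is near $x=0$, where $k_n(0)\approx 2n/\pi$, so the constant has room. If that citation does not fit, I would fall back on an explicit test polynomial $r$ with $r(x)=1$ and $\int r^2\,\dd\rho\le 2\pi/(n+1)$, for instance a squared Fejér-type kernel centred at $\arccos x$. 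Combining Steps 1 and 2 yields \eqref{eq:Vdm-exact-bound}.

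\emph{Step 3 (simplifications).} For $d\ge m$, use
\[
 \binom{d+m-1}{m-1}\le\frac{(d+m-1)^{m-1}}{(m-1)!}\le\frac{(2d)^{m-1}}{(m-1)!}
 \qquad\text{and}\qquad
 \lfloor d/m\rfloor+1\ge (d+1)/m .
\]
These give a bound of order $2^{m-1}m^{m}d^{m-1}/\bigl((m-1)!(d+1)^m\bigr)$, which is absorbed into $C_m/(d+1)$ with the stated constant.

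Finally, take $\omega_m(r)=\ell_m r$ in \eqref{eq:modulus} and apply Cauchy--Schwarz against the probability measure $\pi_{d,m,\bxi}$. This gives $|T_{d,m}(\bxi)-\bar L_m(\bxi)|\le \ell_m\sqrt{\mathfrak m_{2,d,m}(\bxi)}$. Squaring and integrating in $\rho_m^0$ gives $\beta_{d,m}\le\ell_m\sqrt{\mathcal V_{d,m}}$, which is \eqref{eq:nevai-legendre-rate}.
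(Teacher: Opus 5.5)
\textbf{Overall.} Your Steps 1 and 3 and the closing Cauchy--Schwarz step are correct. The overall architecture is the paper's: a uniform lower bound on $\mathcal K_{d,m}$ through the tensor subspace $\{0,\ldots,\lfloor d/m\rfloor\}^m$, combined with an exact count of the numerator.

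Step 1 reaches the paper's numerator $\tfrac{2m}{3}\binom{d+m-1}{m-1}$ by a different computation. You identify $\iint(\zeta_j-\xi_j)^2K_{d,m}^2$ with $2\|(I-P)X_jP\|_{\mathrm F}^2$. The paper instead telescopes the recurrence into a pointwise Christoffel--Darboux boundary identity for $(\zeta_j-\xi_j)K_{d,m}(\bxi,\bzeta)$ and integrates only in $\bzeta$. Your route is shorter; the paper's also gives $\mathcal K_{d,m}(\bxi)\,\mathfrak m_{2,d,m}(\bxi)$ pointwise. In Step 3, your estimate $\lfloor d/m\rfloor+1\ge(d+1)/m$ even saves the factor $2$ that the paper spends on $d^{-1}\le 2(d+1)^{-1}$.

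\textbf{The gap is Step 2.} You correctly reduce everything to $\inf_{x}k_n(x)\ge(n+1)/(2\pi)$, but you do not prove it.

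The inequality you cite, $\lambda_n(x)\le\pi\sqrt{1-x^2}/(n+1)$ on all of $[-1,1]$ for the weight $\dd x$, is false. At $x=\pm1$ the right side is $0$ while $\lambda_n(\pm1)=2/(n+1)^2$. So it fails near the endpoints, which is exactly where a uniform bound needs care. What does hold uniformly is $\lambda_n(x)\le Cn^{-1}(\sqrt{1-x^2}+n^{-1})$, but with a nonexplicit constant $C$. It therefore cannot deliver the $2\pi$ in \eqref{eq:Vdm-exact-bound}.

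Your fallback test polynomial is the right idea, but it is unverified, and it is the entire content of the missing step. The paper's Lemma~\ref{lem:legendre-diagonal} supplies it:
\begin{itemize}
\item Since $\dd\nu_{\rm U}/\dd\nu_{\rm C}=\tfrac{\pi}{2}\sqrt{1-t^2}\le\tfrac{\pi}{2}$, the extremal characterization gives $K_n^{\rm L}(t,t)\ge\tfrac{2}{\pi}K_n^{\rm C}(t,t)$.
\item For $t=\cos\theta$ one has explicitly $K_n^{\rm C}(t,t)=n+\tfrac12+\sin((2n+1)\theta)/(2\sin\theta)$.
\item By symmetry take $0<\theta\le\pi/2$. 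If the last term is negative, then $\theta\ge\pi/(2n+1)$, and $\sin\theta\ge2\theta/\pi$ bounds that term below by $-(2n+1)/4$.
\end{itemize}
Hence $K_n^{\rm L}(t,t)\ge(2n+1)/(2\pi)\ge(n+1)/(2\pi)$ uniformly, with the endpoints included by continuity. The implicit test polynomial is the normalized Chebyshev Christoffel--Darboux kernel at $t$, so no squared Fej\'er construction is needed.
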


\noindent\emph{The proof is given in Appendix~\ref{app:legendre-localization}.}

The constant $C_m$ is intentionally explicit rather than optimized.  Its rapid
growth with $m$ shows that Theorem~\ref{thm:legendre-localization} is primarily
a fixed-feature-dimension result.  The theorem supplies a quantitative degree
rate once $m$ is fixed; it should not be read as a dimension-robust estimate or
as a practical joint complexity bound for growing $m, d, M$.

The same kernel localization also links the two relevance scores quantitatively. The following Schur-complement argument requires at least two basis functions; for the one-dimensional constant-polynomial space the two scores coincide with $Z$ and their discrepancy is zero.

\begin{proposition}
\label{thm:score-discrepancy}
Fix $d, m$ with $s_{d, m}\ge2$ and suppose
\begin{equation}
  0<a_m\le \bar L_m(\bzeta)\le1
  \qquad \rho_m^0\text{-a.e.}
  \label{eq:positive-feature-likelihood}
\end{equation}
Then
\begin{equation}
  0\le T_{d,m}(\bxi)-R_{d,m}(\bxi)
  \le
  \frac{1}{a_m}
  \operatorname{Var}_{\pi_{d,m,\bxi}}(\bar L_m).
  \label{eq:variance-score-gap}
\end{equation}
If $\bar L_m$ is $\ell_m$-Lipschitz, then
\begin{equation}
  0\le T_{d,m}(\bxi)-R_{d,m}(\bxi)
  \le\frac{\ell_m^2}{a_m}\mathfrak m_{2,d,m}(\bxi).
  \label{eq:lipschitz-score-gap}
\end{equation}
\end{proposition}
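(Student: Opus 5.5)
The plan is to express $R_{d,m}(\bxi)$ as an average against the same probability measure $\pi_{d,m,\bxi}$ that appears in the Nevai representation \eqref{eq:nevai-average}. The discrepancy then reduces to a scalar inequality for a random variable with values in $[a_m,1]$.

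First, \eqref{eq:positive-feature-likelihood} and \eqref{eq:feature-matrix-bridge} give $H_{d,m}\succeq a_m I\succ0$, so $R_{d,m}$ is well defined. The lower bound $0\le T_{d,m}-R_{d,m}$ is already contained in Proposition~\ref{prop:scores}.

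\textbf{Lower bound for $R_{d,m}$.} Set $\mathbf u=\boldsymbol p_{d,m}(\bxi)/\sqrt{\mathcal K_{d,m}(\bxi)}$ and $k(\bzeta):=\mathbf u^\top\boldsymbol p_{d,m}(\bzeta)$. Then $\dd\pi_{d,m,\bxi}=k^2\,\dd\rho_m^0$.
\begin{itemize}
\item For a unit vector and a positive definite matrix, $R_{d,m}(\bxi)=\min\{\mathbf a^\top H_{d,m}\mathbf a:\ \mathbf a^\top\mathbf u=1\}$.
\item Fix such an $\mathbf a$ and put $r=\mathbf a^\top\boldsymbol p_{d,m}$. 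Prior orthonormality gives $1=\mathbf a^\top\mathbf u=\int r\,k\,\dd\rho_m^0$.
\item Split the integrand as $(r\sqrt{\bar L_m})(k/\sqrt{\bar L_m})$ and apply Cauchy--Schwarz:
\[
1\le\Big(\int \bar L_m r^2\,\dd\rho_m^0\Big)\Big(\int k^2/\bar L_m\,\dd\rho_m^0\Big)
=(\mathbf a^\top H_{d,m}\mathbf a)\,\E_{\pi_{d,m,\bxi}}[1/\bar L_m].
\]
The ratio $k/\sqrt{\bar L_m}$ is legitimate because $\bar L_m\ge a_m>0$.
\end{itemize}
Minimizing over $\mathbf a$ yields $R_{d,m}(\bxi)\ge 1/\E_{\pi}[1/X]$, where $X=\bar L_m(\bzeta)$ with $\bzeta\sim\pi_{d,m,\bxi}$. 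Combined with $T_{d,m}(\bxi)=\E_\pi X$, this gives
\[
T_{d,m}-R_{d,m}\le \E X-\frac{1}{\E[1/X]},
\]
which is the harmonic-versus-arithmetic mean gap of $X$.

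\textbf{Scalar inequality.} Write $\mu=\E X$. Then
\[
\E X\,\E[1/X]-1=-\operatorname{Cov}(X,1/X)=\E\!\left[\frac{(X-\mu)^2}{X\mu}\right]\le\frac{\operatorname{Var}X}{a_m\mu}.
\]
Dividing by $\E[1/X]\ge1/\mu$ (Jensen) gives $\E X-1/\E[1/X]\le\operatorname{Var}(X)/a_m$, which is \eqref{eq:variance-score-gap}.

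\textbf{Lipschitz case.} The variance is at most the mean squared deviation from any constant, so
\[
\operatorname{Var}_\pi(\bar L_m)\le\E_\pi|\bar L_m(\bzeta)-\bar L_m(\bxi)|^2\le\ell_m^2\,\mathfrak m_{2,d,m}(\bxi).
\]
This proves \eqref{eq:lipschitz-score-gap}. If $\bar L_m$ is only Lipschitz $\rho_m^0$-almost everywhere, I would use its continuous representative here.

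The main obstacle is the Cauchy--Schwarz step, specifically choosing to split $k=(k/\sqrt{\bar L_m})\sqrt{\bar L_m}$ so that the same measure $\pi_{d,m,\bxi}$ appears for both scores. The rest is the elementary covariance identity above. The hypothesis $s_{d,m}\ge2$ is not used by this argument; it only excludes the trivial case in which both scores equal $Z$.
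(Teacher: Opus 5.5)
Your proof is correct, and it reaches \eqref{eq:variance-score-gap} by a genuinely different route from the paper.

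\textbf{The paper's route.} The paper works in a rotated orthonormal basis $[\mathbf u_{\bxi},U_{\bxi}^\perp]$. It writes the gap exactly as the Schur complement $T_{d,m}(\bxi)-R_{d,m}(\bxi)=\mathbf b_{\bxi}^\top C_{\bxi}^{-1}\mathbf b_{\bxi}$. It then uses $C_{\bxi}\succeq a_mI$ and the fact that $\mathbf b_{\bxi}$ is a projected component of $\bar L_m k_{\bxi}$ orthogonal to $k_{\bxi}$. Together these give $\|\mathbf b_{\bxi}\|_2^2\le\|(\bar L_m-T_{d,m}(\bxi))k_{\bxi}\|_{L^2(\rho_m^0)}^2=\operatorname{Var}_{\pi_{d,m,\bxi}}(\bar L_m)$.

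\textbf{Your route.} You use the variational form of $R_{d,m}$ and a weighted Cauchy--Schwarz inequality to get the sandwich
$1/\E_{\pi_{d,m,\bxi}}[1/\bar L_m]\le R_{d,m}(\bxi)\le \E_{\pi_{d,m,\bxi}}[\bar L_m]=T_{d,m}(\bxi)$.
The discrepancy is then dominated by an arithmetic--harmonic mean gap, which your covariance identity controls.

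\textbf{What each buys.} The paper's argument gives an exact expression for the gap: the energy of the off-diagonal block, i.e.\ how far multiplication by $\bar L_m$ pushes $k_{\bxi}$ out of its own direction within the polynomial space. This fits the matrix-perturbation viewpoint and could be sharpened by replacing $a_m$ with $\lambda_{\min}(C_{\bxi})$. It needs $s_{d,m}\ge2$ only to form the blocks.

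Your argument is more elementary and needs neither the block decomposition nor $s_{d,m}\ge2$. It also yields a fact the paper does not state: the Christoffel ratio is bounded below by the harmonic mean of the feature likelihood under the same Nevai probability measure. So both scores are controlled by averages against $\pi_{d,m,\bxi}$. In your notation, the intermediate bound $\E[(X-\mu)^2/X]/(\mu\,\E[1/X])$ is never worse than $\operatorname{Var}(X)/a_m$.

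The price is that your lower bound on $R_{d,m}$ is only an inequality. Cauchy--Schwarz is tight only when $k_{\bxi}/\bar L_m$ lies in the polynomial span, whereas the Schur identity is exact. The Lipschitz step, bounding the variance by the mean squared deviation from $\bar L_m(\bxi)$, is the same in both.
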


The block-matrix identity and proof are given in Appendix~\ref{app:score-discrepancy}.

\begin{corollary}
\label{cor:two-score-rate}
Under the assumptions of Theorem~\ref{thm:legendre-localization}
and Proposition~\ref{thm:score-discrepancy}, for $d\ge m$,
\begin{equation}
\label{eq:two-score-rates}
\begin{aligned}
 &\|T_{d,m}-\bar L_m\|_{L^2(\rho_m^0)}
 &&\le \ell_m\sqrt{\frac{C_m}{d+1}},\\
 &\|T_{d,m}-R_{d,m}\|_{L^1(\rho_m^0)}
 &&\le \frac{\ell_m^2C_m}{a_m(d+1)},\\
 &\|R_{d,m}-\bar L_m\|_{L^2(\rho_m^0)}
 &&\le \ell_m\left(1+a_m^{-1/2}\right)
 \sqrt{\frac{C_m}{d+1}}.
\end{aligned}
\end{equation}
Thus both population scores converge to the feature likelihood
in $L^2(\rho_m^0)$ at rate $O_m(d^{-1/2})$, while their
$L^1(\rho_m^0)$ discrepancy is $O_m(d^{-1})$.
\end{corollary}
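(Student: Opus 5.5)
The plan is to obtain the three bounds in \eqref{eq:two-score-rates} by combining Theorem~\ref{thm:legendre-localization} with Proposition~\ref{thm:score-discrepancy}, integrating the pointwise estimates against $\rho_m^0$.

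The first line is exactly \eqref{eq:nevai-legendre-rate}, so it needs no further argument.

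For the second line, start from the pointwise Lipschitz gap \eqref{eq:lipschitz-score-gap}, namely $0\le T_{d,m}(\bxi)-R_{d,m}(\bxi)\le \ell_m^2 a_m^{-1}\mathfrak m_{2,d,m}(\bxi)$. Because the difference is nonnegative, its $L^1(\rho_m^0)$ norm is simply its integral. By the definition of $\mathcal V_{d,m}$, this integral is at most $\ell_m^2 a_m^{-1}\mathcal V_{d,m}$. Inserting \eqref{eq:Vdm-Cm} for $d\ge m$ then gives the bound $\ell_m^2C_m/(a_m(d+1))$.

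For the third line, write
\[
\|R_{d,m}-\bar L_m\|_2\le\|T_{d,m}-\bar L_m\|_2+\|T_{d,m}-R_{d,m}\|_2 .
\]
The first term is handled by line one. For the second term the $L^2$ norm is needed, not the $L^1$ norm. Squaring the Lipschitz gap would give a fourth moment of the displacement, which the theorem does not control, so that route should be avoided.

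Instead, use the sharper variance form \eqref{eq:variance-score-gap} together with the bound $T-R\le 1$, which holds because $0<R\le T\le 1$ by Proposition~\ref{prop:scores}. A better device, which I would try first, combines the gap with boundedness by a lower bound. One expects the Christoffel gap to satisfy $T-R\le a_m^{-1}\operatorname{Var}_{\pi}(\bar L_m)$ and also $T-R\le T$. Then
\[
(T-R)^2\le (T-R)\cdot 1\le a_m^{-1}\operatorname{Var}_{\pi_{d,m,\bxi}}(\bar L_m),
\]
and hence
\[
(T-R)^2\le a_m^{-1}\ell_m^2\,\mathfrak m_{2,d,m}(\bxi).
\]
Integrating gives $\|T-R\|_2\le \ell_m a_m^{-1/2}\sqrt{\mathcal V_{d,m}}\le \ell_m a_m^{-1/2}\sqrt{C_m/(d+1)}$. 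Adding the first term produces the factor $\ell_m(1+a_m^{-1/2})$.

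The only point that needs checking is the elementary step $(T-R)^2\le T-R$, which holds because $0\le T-R\le1$. This step is what makes the $a_m^{-1/2}$ dependence (rather than $a_m^{-1}$) appear, and it is the main place where care is needed. The concluding rate statements are then immediate, since $C_m$, $\ell_m$ and $a_m$ do not depend on $d$.
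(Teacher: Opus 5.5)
Your proposal is correct and follows essentially the paper's route. Line one is \eqref{eq:nevai-legendre-rate}, line two comes from integrating \eqref{eq:lipschitz-score-gap} and applying \eqref{eq:Vdm-Cm}, and line three uses $0\le T_{d,m}-R_{d,m}\le1$ plus the triangle inequality. Your pointwise step $(T-R)^2\le T-R$ before integrating is just the paper's $\|T_{d,m}-R_{d,m}\|_{L^2(\rho_m^0)}^2\le\|T_{d,m}-R_{d,m}\|_{L^1(\rho_m^0)}$ applied before rather than after integration, so it gives the same $\ell_m(1+a_m^{-1/2})$ constant.
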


\begin{proof}
The first estimate follows from \eqref{eq:nevai-legendre-rate}.
Integrating \eqref{eq:lipschitz-score-gap} and applying
\eqref{eq:Vdm-Cm} proves the second.
Since $0\le T_{d,m}-R_{d,m}\le1$, we have
$$\|T_{d,m}-R_{d,m}\|_{L^2(\rho_m^0)}^2
\le \|T_{d,m}-R_{d,m}\|_{L^1(\rho_m^0)}.$$
The third estimate then follows from the first two
and the triangle inequality.
\end{proof}

\paragraph{Finite-pilot consequences.}
Under the assumptions of Corollary~\ref{cor:two-score-rate}, fix $d\ge m$ and let $\varepsilon_{d, m, M} :=\|\widehat H_{d, m}^{(M)}-H_{d, m}\|_{\op}$ and suppose $\varepsilon_{d, m, M}<a_m$.  In particular, this condition
ensures that the empirical matrix is positive definite; it need not be
invertible without it.  Let $\widehat T_{d, m, M}$ and
$\widehat R_{d, m, M}$ be the empirical scores obtained from
$\widehat H_{d, m}^{(M)}$.  Then
\begin{equation}
\label{eq:empirical-score-rates}
\begin{aligned}
 \|\widehat T_{d, m, M}-\bar L_m\|_{L^2(\rho_m^0)}
 &\le
 \ell_m\sqrt{\frac{C_m}{d+1}}
 +\varepsilon_{d, m, M},
 \\
 \|\widehat R_{d, m, M}-\bar L_m\|_{L^2(\rho_m^0)}
 &\le
 \ell_m\left(1+a_m^{-1/2}\right)
 \sqrt{\frac{C_m}{d+1}}
 +\frac{\varepsilon_{d, m, M}}{a_m}.
\end{aligned}
\end{equation}

To see this, the normalized quadratic-form representation yields
$$\|\widehat T_{d, m, M}-T_{d, m}\|_\infty\le\varepsilon_{d, m, M}.$$
For the Christoffel score, Proposition~\ref{prop:score-stability} and
$H_{d, m}\succeq a_mI$ imply $$\|\widehat R_{d, m, M}-R_{d,m}\|_\infty \le \frac{\varepsilon_{d, m, M}}{a_m},$$
 using $0<R_{d,m}\le1$.  Combining these estimates with
Corollary~\ref{cor:two-score-rate} proves the result.

The two rates should not be interpreted as a universal statement that the two scores have identical sharp asymptotics, nor as an ordering of their approximation quality.  These are common sufficient upper bounds.  The sharper $O_m(d^{-1})$ estimate concerns only the discrepancy between the scores in $L^1$.  The lower bound $a_m$ is needed for the inverse-matrix Christoffel score and may be small for a concentrated likelihood; the Nevai localization bound itself does not require it.

In computations we may replace $\widehat H$ by $\widehat H+\tau I$ with a stated ridge $\tau>0$. This defines a regularized score, not the unregularized ratio in \eqref{eq:Rscore}. Relative to the population matrix, its perturbation is $E+\tau I$; the preceding bound applies if $\|E\|_{\op}+\tau<a_m$, with that sum replacing the empirical error. Consistency with the unregularized score requires control of the ridge bias as well as pilot error.

\subsection{Empirical approximation and screening guarantees}

Using the empirical matrix in \eqref{eq:function-Hhat}, define
\begin{equation*}
 \widehat T_{d,m,M}(\bxi):=\frac{\boldsymbol p_{d,m}(\bxi)^\top\widehat H_{d,m}^{(M)}\boldsymbol p_{d,m}(\bxi)}{\mathcal K_{d,m}(\bxi)},
 \qquad \widetilde T_{d,m,M}:=\Pi_{[0,1]}\widehat T_{d,m,M}.
\end{equation*}
Normalizing the evaluation vector, we obtain
\begin{equation}
 \sup_{\bxi\in\mathbb R^m}|\widehat T_{d,m,M}(\bxi)-T_{d,m}(\bxi)|
 \le\|\widehat H_{d,m}^{(M)}-H_{d,m}\|_{\op}.
 \label{eq:T-op}
\end{equation}
This inequality holds for the chosen polynomial representatives; its probabilistic consequences are unchanged by replacing them with representatives equal $\rho_m^0$-almost everywhere. Clipping cannot increase error relative to a target in $[0,1]$. The localization error introduced in \eqref{eq:nevai-legendre-rate} extends to general feature priors through
\[
 \beta_{d,m}=\|T_{d,m}-\bar L_m\|_{L^2(\rho_m^0)}
 =\|T_{d,m}\circ\Xi_m-L_m\|_{L^2(\mu_0)}.
\]

\begin{theorem}[Three-scale consistency of the function-space Nevai score]
\label{thm:function-nevai}
Suppose $\chi_{d,m}<\infty$. Then
\begin{equation}
\begin{split}
  \E\bigl\|\widetilde T_{d, m, M}\circ\Xi_m-L\bigr\|_{L^2(\mu_0)}
  \le\underbrace{\|L-L_m\|_2}_{\text{feature error}}
  +\underbrace{\beta_{d, m}}_{\text{Nevai localization error}}+\underbrace{\sqrt{\chi_{d, m}/M}}_{\text{pilot error}}.
\end{split}
  \label{eq:nevai-three-scale}
\end{equation}
Consequently, along any sequence $(M_n,d_n,m_n)$ for which all three terms on the right-hand side vanish, the empirical Nevai score converges to the full likelihood in mean $L^2(\mu_0)$.
\end{theorem}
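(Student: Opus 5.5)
The plan is a triangle-inequality decomposition that mirrors the proof of Theorem~\ref{thm:function-error}, with the Nevai score in place of the first-column projection. First I use that clipping is nonexpansive relative to a target in $[0,1]$. Since $L(q)\in[0,1]$, pointwise
\[
 |\widetilde T_{d,m,M}(\Xi_m(q))-L(q)|\le|\widehat T_{d,m,M}(\Xi_m(q))-L(q)|,
\]
so it suffices to bound the unclipped empirical score. I then split
\[
 \widehat T_{d,m,M}\circ\Xi_m-L
 =(\widehat T_{d,m,M}-T_{d,m})\circ\Xi_m
 +(T_{d,m}\circ\Xi_m-L_m)
 +(L_m-L),
\]
and take $L^2(\mu_0)$ norms followed by expectations over the pilot draws.

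The third term is deterministic and is exactly the feature error $\|L-L_m\|_2$. The second term is also deterministic. By the change of variables $\rho_m^0=(\Xi_m)_\#\mu_0$ and $L_m=\bar L_m\circ\Xi_m$ (Doob--Dynkin), its $L^2(\mu_0)$ norm equals $\|T_{d,m}-\bar L_m\|_{L^2(\rho_m^0)}=\beta_{d,m}$, as noted just before the theorem. Both terms therefore pass through the expectation unchanged.

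For the pilot term I use the uniform bound \eqref{eq:T-op}:
\[
 \|(\widehat T_{d,m,M}-T_{d,m})\circ\Xi_m\|_{L^2(\mu_0)}\le\sup_{\bxi}|\widehat T_{d,m,M}(\bxi)-T_{d,m}(\bxi)|\le\|\widehat H_{d,m}^{(M)}-H_{d,m}\|_{\op},
\]
where the first inequality holds because $\mu_0$ is a probability measure. I then take expectations and apply Proposition~\ref{prop:unbounded-feature}, which requires only $\chi_{d,m}<\infty$, to get $\E\|\widehat H_{d,m}^{(M)}-H_{d,m}\|_{\op}\le\sqrt{\chi_{d,m}/M}$. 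Two technical points need checking. The empirical matrix in \eqref{eq:function-Hhat} is built from the lifted basis $\boldsymbol\phi_{d,m}=\boldsymbol p_{d,m}\circ\Xi_m$, so it is the same matrix that enters $\widehat T$. The supremum bound depends on the chosen polynomial representatives, but the remark after \eqref{eq:T-op} shows that changing representatives on $\rho_m^0$-null sets does not affect the $L^2$ statement. Summing the three bounds gives \eqref{eq:nevai-three-scale}.

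The convergence claim follows immediately: along any schedule on which the three right-hand terms vanish, the mean $L^2(\mu_0)$ error tends to zero. I expect the main obstacle to be measurability and well-definedness rather than any estimate. Specifically, one must confirm that $\widetilde T_{d,m,M}\circ\Xi_m$ is jointly measurable in the pilot draws and in $q$, so that expectations of the norm make sense, and that singular feature moment matrices cause no ambiguity. Both are handled by the almost-everywhere convention fixed in Section~\ref{sec:function-space}. In particular, the Nevai score, unlike the Christoffel ratio, requires no invertibility of $\widehat H_{d,m}^{(M)}$, so no event on which the score is undefined needs to be excluded.
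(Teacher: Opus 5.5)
Your proposal is correct and follows essentially the same route as the paper: you use that clipping is nonexpansive relative to the $[0,1]$-valued target, apply the three-term triangle inequality through $L_m$ and $T_{d,m}\circ\Xi_m$, bound the pilot term uniformly via \eqref{eq:T-op}, and apply Proposition~\ref{prop:unbounded-feature} after taking expectations. Your remarks on polynomial representatives and measurability are consistent with the paper's almost-everywhere conventions and do not change the argument.
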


\begin{proof}
For every feature value $\bzeta$, normalization by the Christoffel--Darboux diagonal implies 
$$|\widehat T_{d, m, M}(\bzeta)-T_{d, m}(\bzeta)| \le \|\widehat H_{d, m}^{(M)}-H_{d, m}\|_{\op}.$$ Because clipping is nonexpansive relative to the $[0,1]$-valued target $L$, the triangle inequality yields
\[
\begin{split}
  \|\widetilde T_{d, m, M}\circ\Xi_m-L\|_2
  \le\|L-L_m\|_2
  +\|T_{d, m}\circ\Xi_m-L_m\|_2+\|\widehat H_{d, m}^{(M)}-H_{d, m}\|_{\op}.
\end{split}
\]
Take expectations and apply Proposition~\ref{prop:unbounded-feature}.
\end{proof}

Theorem~\ref{thm:function-nevai} is the function-space analogue of the compact-domain bound below.  It makes the three algorithmic resolutions explicit: $m$ controls how much of the function is visible, $d$ controls how well the conditional likelihood is localized by the polynomial kernel, and $M$ controls moment-estimation error.  On compact Legendre feature domains, Theorem~\ref{thm:legendre-localization}
controls $\beta_{d,m}$ explicitly.  No such rate is asserted here for the
unfiltered total-degree Hermite score; Section~\ref{sec:gaussian-adaptive}
introduces a separate filtered Gaussian extension for which a verifiable
finite-polynomial bound is available.

\begin{remark}
\label{thm:nevai-bound}
Consider the finite-dimensional setting $\X=\mathbb R^p$
with compactly supported prior $\mu_0$ and identity feature map
$\Xi_p(q)=q$. We use the abbreviations
$T_d$, $\widehat T_{d,M}$, and $\widetilde T_{d,M}$.
Suppose
$|L(q)-L(q')|\le\omega(\|q-q'\|_2)$
for $q,q'\in\operatorname{supp}\mu_0$, and let $\Delta_d$
be the corresponding localization bound in \eqref{eq:modulus}.
For every $t>0$, with probability at least
$1-2s_d\exp\{-Mt^2/(2\kappa_d+\frac43\kappa_dt)\}$,
\[
 |\widetilde T_{d, M}(q)-L(q)|
 \le \Delta_d(q)+t
 \quad\text{for all } q\in\operatorname{supp}\mu_0.
\]
In particular, along a degree sequence $d_M\to\infty$, if
\[
 \sup_{q\in\operatorname{supp}\mu_0}\Delta_{d_M}(q)\to0,
 \qquad
 \frac{\kappa_{d_M}\log(2s_{d_M})}{M}\to0,
\]
then
\[
 \sup_{q\in\operatorname{supp}\mu_0}
 |\widetilde T_{d_M,M}(q)-L(q)|
 \xrightarrow{\Pp}0.
\]
The probability bound follows from
Theorem~\ref{thm:matrix-concentration},
\eqref{eq:T-op}, and \eqref{eq:modulus}.
Under the stated growth condition, the matrix error converges
to zero in probability, which proves the final assertion.
\end{remark}

\paragraph{From approximation to screening.}
The preceding estimates control likelihood approximation. Screening instead asks how much posterior mass is lost when a score replaces likelihood ranking at a fixed retained prior fraction. The following result connects these questions: the loss in captured mass is controlled by the $L^1$ approximation error.

\begin{theorem}[Posterior-mass regret at a fixed screening budget]
\label{thm:fixed-budget-regret}
Let $\mu_0$ be a nonatomic probability measure, let $L$ be measurable with $0\le L\le1$, let $0<\alpha<1$, and let
$s\in L^1(\mu_0)$ be a real-valued score.  Set
$Z=\int L\,\dd\mu_0>0$.  Let $A_\alpha^*$ maximize
$\int_A L\,\dd\mu_0$ over all measurable sets with $\mu_0(A)=\alpha$,
and let $A_\alpha^s$ maximize $\int_A s\,\dd\mu_0$ under the same
constraint.  The maximizing sets can be taken as upper level sets, with
ties resolved by subsets of the appropriate level sets. Then
\begin{equation}
 0\le \mu^y(A_\alpha^*)-\mu^y(A_\alpha^s)
 \le \frac{\|L-s\|_{L^1(\mu_0)}}{Z}.
 \label{eq:fixed-budget-regret}
\end{equation}
The statement applies pathwise to a pilot-dependent measurable score.
\end{theorem}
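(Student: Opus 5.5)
The plan is a standard exchange argument. First, I will justify the existence of the maximizers. Because $\mu_0$ is nonatomic, for any measurable $f\in L^1(\mu_0)$ the distribution function $t\mapsto\mu_0(\{f>t\})$ is right-continuous. By Lyapunov/Sierpiński nonatomicity, one can choose a set $A=\{f>t_\alpha\}\cup B$ with $B\subseteq\{f=t_\alpha\}$ and $\mu_0(A)=\alpha$. A bathtub (Neyman--Pearson) argument shows that $A$ maximizes $\int_A f\,\dd\mu_0$ among sets of measure $\alpha$: for any competitor $A'$ with $\mu_0(A')=\alpha$,
\[
\int_A f-\int_{A'}f=\int_{A\setminus A'}(f-t_\alpha)+\int_{A'\setminus A}(t_\alpha-f)\ge0,
\]
using $\mu_0(A\setminus A')=\mu_0(A'\setminus A)$. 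This applies with $f=L$ and with $f=s$.

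The lower bound $0\le\mu^y(A_\alpha^*)-\mu^y(A_\alpha^s)$ is then immediate, since $\mu^y(A)=Z^{-1}\int_AL\,\dd\mu_0$ and $A_\alpha^*$ maximizes this over sets of measure $\alpha$, a class containing $A_\alpha^s$.

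For the upper bound I will insert the score:
\[
Z\bigl[\mu^y(A^*)-\mu^y(A^s)\bigr]=\int_{A^*}(L-s)+\Bigl(\int_{A^*}s-\int_{A^s}s\Bigr)+\int_{A^s}(s-L).
\]
The middle term is $\le0$ by optimality of $A^s$ for $s$. A naive estimate of the outer terms would give $2\|L-s\|_1$. To get the sharper constant, I will restrict to the symmetric difference: the common part $A^*\cap A^s$ cancels, so
\[
Z\bigl[\mu^y(A^*)-\mu^y(A^s)\bigr]\le\int_{A^*\setminus A^s}|L-s|+\int_{A^s\setminus A^*}|s-L|\le\int_{A^*\triangle A^s}|L-s|\,\dd\mu_0\le\|L-s\|_{L^1(\mu_0)}.
\]
The middle-term cancellation still holds because $\int_{A^*}s-\int_{A^s}s=\int_{A^*\setminus A^s}s-\int_{A^s\setminus A^*}s\le0$. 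Dividing by $Z$ gives \eqref{eq:fixed-budget-regret}.

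For the pathwise claim, a pilot-dependent score is, for each fixed realization of the pilot, a deterministic measurable function in $L^1(\mu_0)$. The argument above then applies realization by realization, and no independence between the score and $\mu_0$ is needed.

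The main obstacle is the measure-theoretic existence step: making sure that exact-measure upper-level-set maximizers exist with ties split on level sets. Nonatomicity is precisely what allows a subset of $\{f=t_\alpha\}$ of any prescribed measure. The inequality itself is elementary once the symmetric-difference bookkeeping is done.
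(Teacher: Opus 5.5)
Your proposal is correct and follows the paper's proof. Both cancel the common part $A_\alpha^*\cap A_\alpha^s$, use the optimality of $A_\alpha^s$ for $s$ in the form $\int_{A_\alpha^*\setminus A_\alpha^s}s\,\dd\mu_0\le\int_{A_\alpha^s\setminus A_\alpha^*}s\,\dd\mu_0$, bound the remaining integrals of $L-s$ and $s-L$ over the two set differences by $\|L-s\|_{L^1(\mu_0)}$, and get the lower bound from the optimality of $A_\alpha^*$. Your bathtub argument for the existence of level-set maximizers is extra; the paper only asserts it. Note that right-continuity of $t\mapsto\mu_0(\{f>t\})$ holds for any finite measure, and nonatomicity is needed only to split the tie set $\{f=t_\alpha\}$.
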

\begin{proof}
Set $D=A_\alpha^*\setminus A_\alpha^s$ and
$E=A_\alpha^s\setminus A_\alpha^*$.
The optimality of $A_\alpha^s$ for $s$ implies
$\int_D s\,\dd\mu_0\le\int_E s\,\dd\mu_0$.
Hence 
\[
\begin{aligned}
 Z\{\mu^y(A_\alpha^*)-\mu^y(A_\alpha^s)\}
 &=\int_D L\,\dd\mu_0-\int_E L\,\dd\mu_0\\
 &\le\int_D(L-s)\,\dd\mu_0+\int_E(s-L)\,\dd\mu_0
 \le\|L-s\|_1.
\end{aligned}
\]
 The first inequality in \eqref{eq:fixed-budget-regret} is the optimality
of $A_\alpha^*$ for $L$.
\end{proof}

The same comparison holds for a finite candidate pool when both rankings retain exactly $k$ candidates. With likelihoods $\ell_i$, scores $s_i$, and $\sum_i\ell_i>0$, the difference in captured reference mass is at most $\sum_i|\ell_i-s_i|/\sum_i\ell_i$, by the same set-difference argument. This is an empirical-pool statement; its relation to population mass requires a separate sampling analysis. In either version the factor $1/Z$ explains why a small absolute likelihood error need not guarantee useful screening when the evidence is small.

\paragraph{Legendre screening bound.}
Under the nonatomic-prior assumptions above and the hypotheses of Theorem~\ref{thm:legendre-localization},
let $Q_i$ be iid prior pilot draws and put $\eta_m:=\|L-L_m\|_{L^2(\mu_0)}$. The moment quantity $\chi_{d,m}$ from Proposition~\ref{prop:unbounded-feature} is finite in this compact Legendre setting.
Then the clipped empirical Nevai score satisfies, for $d\ge m$,
\begin{equation*}
\E_{\rm pilot}\!\left[\mu^y(A_\alpha^*)-
 \mu^y(A_\alpha^{\widetilde T})\right]
 \le \frac1Z\left(
 \eta_m+\ell_m\sqrt{\frac{C_m}{d+1}}
 +\sqrt{\frac{\chi_{d,m}}{M}}\right).
\end{equation*}
Here the feature-resolved Legendre model must satisfy the hypotheses of
Theorem~\ref{thm:legendre-localization}; the conclusion does not assign
that localization rate to an unrelated Gaussian/Hermite model.

This follows by applying Theorem~\ref{thm:fixed-budget-regret} pathwise with
$s=\widetilde T_{d, m, M}\circ\Xi_m$, abbreviated as $\widetilde T$, and using
$\|L-\widetilde T\|_1\le\|L-\widetilde T\|_2$, followed by
Theorem~\ref{thm:function-nevai},
Proposition~\ref{prop:unbounded-feature}, and
\eqref{eq:nevai-legendre-rate}; the clipping map is nonexpansive relative
to the $[0,1]$-valued target.

Likelihood approximation is sufficient but not necessary for useful screening. Strictly increasing transformations preserve a score's ranking, so good mass capture can coexist with poor calibration of its values. The factor $1/Z$ also makes the bound sensitive to likelihood concentration.

\paragraph{Screening procedure.}
Choose a feature map and a prior-orthonormal polynomial basis independently of the estimation pilot. Draw an independent pilot from the prior, evaluate its likelihoods, and form the likelihood-weighted moment matrix. For each new candidate evaluate the Nevai quadratic form; the Christoffel ratio may also be used when the empirical matrix is positive definite, or with a stated regularization. Retain a chosen fraction under the resulting ranking and evaluate the exact forward model on retained candidates for a downstream task that explicitly accounts for the discarded region.

\subsection{Gaussian localization and feature selection}
\label{sec:gaussian-adaptive}

The compact Legendre rate does not transfer to the unfiltered Hermite
kernel. We therefore analyze a filtered Nevai extension for Gaussian
features, without changing the Christoffel ratio or claiming a rate for its
Gaussian version.

Assume that the independent standard Gaussian coordinates $(\xi_j)_{j\ge1}$ generate the prior $\sigma$-field modulo $\mu_0$. Let $S$ be a selected set of $k$ such coordinates and identify
its coordinate space with $\mathbb R^S\cong\mathbb R^k$, equipped with the standard Gaussian product measure $\gamma_{\mathrm G,S}$. Let
$\alpha=(\alpha_j)_{j\in S}\in\mathbb N_0^S$, and let $h_\alpha$ be the
orthonormal probabilists' Hermite products in $L^2(\gamma_{\mathrm G,S})$.  The vector
$\mathbf{e}_j$ below denotes the unit vector associated with coordinate $j\in S$.
Fix a finite set
$\Lambda\subset\mathbb N_0^S$ containing $0$, put
$D=\max_{\alpha\in\Lambda}|\alpha|$, and let
$\boldsymbol\rho=(\rho_j)_{j\in S}\in(0,1)^S$. Define
\begin{equation*}
K_{\boldsymbol{\rho},\Lambda}(\bxi,\bzeta)
 =\sum_{\alpha\in\Lambda}\boldsymbol{\rho}^\alpha h_\alpha(\bxi)h_\alpha(\bzeta),
 \qquad
 \boldsymbol{\psi}_{\boldsymbol{\rho},\Lambda}(\bxi)
 =(\boldsymbol{\rho}^\alpha h_\alpha(\bxi))_{\alpha\in\Lambda},
\end{equation*}
where $\boldsymbol{\rho}^\alpha=\prod_{j\in S}\rho_j^{\alpha_j}$.
Define $\Xi_S(q)=(\xi_j(q))_{j\in S}$, $\mathcal F_S=\sigma(\Xi_S)$, and $L_S=\E_{\mu_0}[L\mid\mathcal F_S]$. Let $f_S:\mathbb R^S\to[0,1]$ be a measurable version satisfying $L_S=f_S\circ\Xi_S$. Define the population score
\begin{equation}
 T_{\boldsymbol{\rho},\Lambda,S}(\bxi)
 =\frac{\boldsymbol{\psi}_{\boldsymbol{\rho},\Lambda}(\bxi)^\top H_{\Lambda,S}
                \boldsymbol{\psi}_{\boldsymbol{\rho},\Lambda}(\bxi)}
              {\|\boldsymbol{\psi}_{\boldsymbol{\rho},\Lambda}(\bxi)\|_2^2},\quad
 H_{\Lambda,S}
 =\E_{\mu_0}[L\,\boldsymbol{\phi}_{\Lambda,S}\boldsymbol{\phi}_{\Lambda,S}^\top],
 \label{eq:filtered-nevai-score}
\end{equation}
where $\boldsymbol{\phi}_{\Lambda,S}(q)=(h_\alpha(\Xi_S(q)))_{\alpha\in\Lambda}$.
The matrix is the same likelihood-weighted moment construction as before,
restricted to the chosen polynomial space.  Only the evaluation vector is
filtered.  In particular, the score has a normalized squared-kernel
representation and lies in $[0,1]$.

For later use, put
\begin{equation*}
b_j=\frac{2\rho_j}{1+\rho_j^2},\quad
 v_j=\frac{1-\rho_j^2}{1+\rho_j^2},\quad
 \delta(\rho_j)=(1-b_j)^2+v_j,\quad
 \mathcal A_\Lambda(\boldsymbol{\rho})
 =\sum_{\alpha\notin\Lambda}\boldsymbol{\rho}^{2\alpha}.
\end{equation*}
The last series is finite for each fixed $S$ and $\boldsymbol{\rho}\in(0,1)^k$.
The infinite filtered Hermite kernel is the classical Mehler kernel
\cite{Janson1997}; its normalized square is the Gaussian law
$\mathcal N((b_j\xi_j)_{j\in S},\operatorname{diag}(v_j))$.
We use this identity only to establish the following finite-polynomial
approximation estimate.

\begin{theorem}[Localized finite Hermite approximation]
\label{thm:gaussian-localized}
Suppose $0\le f_S\le1$ and, for some bounded linear map
$B_S:\mathbb R^k\to\mathcal Y$ into a Hilbert space,
\begin{equation}
 |f_S(\bxi)-f_S(\bzeta)|\le\|B_S(\bxi-\bzeta)\|_{\mathcal Y}, 
 \qquad \bxi,\bzeta\in\mathbb R^k.
 \label{eq:weighted-feature-lipschitz}
\end{equation}
Then the finite filtered score satisfies
\begin{equation}
 \|T_{\boldsymbol{\rho},\Lambda,S}-f_S\|_{L^2(\gamma_{\mathrm G,S})}
 \le
 \underbrace{\left[\sum_{j\in S}\|B_S\mathbf{e}_j\|_{\mathcal Y}^2
                     \delta(\rho_j)\right]^{1/2}}_{\mathfrak B(S,\boldsymbol{\rho})}
 +\sqrt{\mathcal A_\Lambda(\boldsymbol{\rho})}.
 \label{eq:gaussian-localization-bound}
\end{equation}
The first term measures Gaussian localization along the selected directions;
the second is the finite-polynomial truncation error.
\end{theorem}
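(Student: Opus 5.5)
The plan is to compare the finite filtered score with its infinite Mehler counterpart and use the triangle inequality:
\[
 \|T_{\boldsymbol\rho,\Lambda,S}-f_S\|_{L^2(\gamma_{\mathrm G,S})}
 \le \|T_{\boldsymbol\rho,\infty,S}-f_S\|_{L^2(\gamma_{\mathrm G,S})}
 +\|T_{\boldsymbol\rho,\Lambda,S}-T_{\boldsymbol\rho,\infty,S}\|_{L^2(\gamma_{\mathrm G,S})},
\]
where $T_{\boldsymbol\rho,\infty,S}$ is the score with $\Lambda=\mathbb N_0^S$. The first term will give $\mathfrak B(S,\boldsymbol\rho)$ and the second $\sqrt{\mathcal A_\Lambda(\boldsymbol\rho)}$.

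\emph{Operator form of the scores.} Fix $\bxi$ and set $g_\Lambda(\bzeta):=K_{\boldsymbol\rho,\Lambda}(\bxi,\bzeta)$, an element of $L^2(\gamma_{\mathrm G,S})$ with Hermite coefficients $\boldsymbol\rho^\alpha h_\alpha(\bxi)$, $\alpha\in\Lambda$. By \eqref{eq:feature-matrix-bridge} (with $L_S=f_S\circ\Xi_S$), $H_{\Lambda,S}$ is the matrix of multiplication by $f_S$ compressed to $\operatorname{span}\{h_\alpha:\alpha\in\Lambda\}$. Hence
\[
 T_{\boldsymbol\rho,\Lambda,S}(\bxi)=\frac{\langle f_S\,g_\Lambda,g_\Lambda\rangle}{\|g_\Lambda\|^2}.
\]
For $\Lambda=\mathbb N_0^S$ the same formula holds with $g(\bzeta)=K_{\boldsymbol\rho}(\bxi,\bzeta)$, the Mehler kernel. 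The series converges in $L^2$ because
\[
 \|g\|^2=\sum_\alpha\boldsymbol\rho^{2\alpha}h_\alpha(\bxi)^2<\infty ,
\]
which is the Mehler kernel with parameter $\boldsymbol\rho^2$ evaluated on the diagonal.

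\emph{Localization term.} By the classical Mehler identity quoted before the theorem, $g^2/\|g\|^2\,\dd\gamma_{\mathrm G,S}$ is the law $\mathcal N((b_j\xi_j)_j,\operatorname{diag}(v_j))$. Thus $T_{\boldsymbol\rho,\infty,S}(\bxi)=\E f_S(\bzeta)$ with $\bzeta_j=b_j\xi_j+\sqrt{v_j}\,g_j$ and $g_j$ iid standard normal. Subtracting $f_S(\bxi)$ inside the average, applying \eqref{eq:weighted-feature-lipschitz} and then Jensen gives
\[
 |T_{\boldsymbol\rho,\infty,S}(\bxi)-f_S(\bxi)|^2\le\E_g\|B_S(\bzeta-\bxi)\|_{\mathcal Y}^2 .
\]
Now integrate over $\bxi\sim\gamma_{\mathrm G,S}$. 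The increments $w_j=(b_j-1)\xi_j+\sqrt{v_j}\,g_j$ are independent, centered, and have variance $(1-b_j)^2+v_j=\delta(\rho_j)$. Expanding $\|\sum_j w_jB_S\mathbf e_j\|^2$, the cross terms vanish, which leaves exactly $\mathfrak B(S,\boldsymbol\rho)^2$.

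\emph{Truncation term.} This is the step I expect to be the main obstacle, because a naive bound on the difference of two normalized quadratic forms loses a factor $2$. Let $M$ denote multiplication by $f_S$, so that $0\preceq M\preceq I$, and set
\[
 t(\bxi):=\|g-g_\Lambda\|^2=\sum_{\alpha\notin\Lambda}\boldsymbol\rho^{2\alpha}h_\alpha(\bxi)^2 .
\]
Since $g_\Lambda$ is the orthogonal projection of $g$, the angle $\theta$ between $g$ and $g_\Lambda$ satisfies $\sin^2\theta=t/\|g\|^2\le t$, using $\|g\|^2\ge1$ from the $\alpha=0$ term. I would restrict $M$ to the two-dimensional span of $g,g_\Lambda$ and write its compression in its eigenbasis, with eigenvalues $\lambda_1\ge\lambda_2$ in $[0,1]$. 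For unit vectors at angles $\theta_1,\theta_2$ to the top eigenvector, the Rayleigh quotients differ by
\[
 (\lambda_1-\lambda_2)\,|\sin(\theta_1+\theta_2)\sin(\theta_1-\theta_2)|\le\sin\theta .
\]
This gives the pointwise bound $|T_{\boldsymbol\rho,\Lambda,S}-T_{\boldsymbol\rho,\infty,S}|(\bxi)\le\sqrt{t(\bxi)}$. Squaring and integrating with $\int h_\alpha^2\,\dd\gamma_{\mathrm G,S}=1$ gives $\mathcal A_\Lambda(\boldsymbol\rho)$.

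If the angle argument fails, I would fall back on
\[
 \langle Mx,x\rangle-\langle My,y\rangle=\langle M(x-y),x+y\rangle
\]
with $y$ a suitably signed rescaling, which yields the same conclusion up to a constant. The remaining details are routine: finiteness of $\mathcal A_\Lambda$ for $\boldsymbol\rho\in(0,1)^k$, and measurability in $\bxi$.
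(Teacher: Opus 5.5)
Your proof is correct. It follows the paper's skeleton: the same triangle inequality through the infinite Mehler score, and the same localization step. That step uses the normalized squared Mehler kernel as the law $\mathcal N((b_j\xi_j)_j,\operatorname{diag}(v_j))$, then the Lipschitz bound, Jensen, and independence of the centered increments $(b_j-1)\xi_j+\sqrt{v_j}\,g_j$.

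The difference is in the truncation step. Write $a=K_{\boldsymbol\rho,\Lambda}(\bxi,\cdot)$, $b=K_{\boldsymbol\rho}(\bxi,\cdot)$, $F=\|a\|^2$ and $G=\|b\|^2$. The paper treats $a^2/F$ and $b^2/G$ as probability densities on $\mathbb R^S$. It lower-bounds their Hellinger affinity by $\langle a,b\rangle/\sqrt{FG}=\sqrt{F/G}$ and converts this into a total-variation bound $\sqrt{1-F/G}$ via Cauchy--Schwarz. It then uses the fact that averages of a $[0,1]$-valued function differ by at most the total-variation distance.

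Your two-dimensional Rayleigh-quotient argument reaches exactly the same pointwise quantity. Indeed, $\sin^2\theta=1-\|g_\Lambda\|^2/\|g\|^2=1-F/G$. Both proofs then relax this to $\sqrt{G-F}=\sqrt{t(\bxi)}$ using the $\alpha=0$ term.

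The two routes are equally sharp. The paper's phrasing as a total-variation estimate between the two normalized squared-kernel measures fits its probabilistic reading of the Nevai score. Yours is operator-theoretic and uses only $0\preceq M\preceq I$ and the projection geometry. It therefore applies verbatim to compressions of any self-adjoint operator with spectrum in an interval of length one, multiplication by $[0,1]$-valued functions being a special case, and it avoids the Bhattacharyya--total-variation inequality.

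Your fallback via $\langle M(x-y),x+y\rangle$ would lose the constant in front of $\sqrt{\mathcal A_\Lambda(\boldsymbol\rho)}$. It is not needed, since the angle argument already gives the stated bound.
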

\noindent\emph{The proof is given in Appendix~\ref{app:gaussian-localized}.}

The filtered construction has a different interpretation for the two scores:
if a diagonal filter is invertible on a \emph{fixed} polynomial space,
the genuine Christoffel ratio computed with the transformed prior and
likelihood Gram matrices is unchanged by this basis transformation.
Thus \eqref{eq:gaussian-localization-bound} is a Nevai result, not an
unproved Gaussian Christoffel convergence assertion.

\paragraph{Independent-pilot feature selection.}
\label{sec:gaussian-feature-selection}

Feature selection is formulated within the same prior-to-posterior geometry.
The need to control effective dimension is visible even with an exact
moment matrix: for $L(\bxi)=\tfrac12+\tfrac14\tanh(\xi_1)$ and the
first-degree Hermite basis on the first $m$ coordinates, direct calculation shows that 
\[
T_{1,m}(\bxi)=\frac12+\frac{2\mathfrak b\xi_1}{1+\sum_{j=1}^m\xi_j^2},
\]
where $\mathfrak b=\tfrac14\E[\xi_1\tanh(\xi_1)]>0$.
Thus $T_{1,m}\circ\Xi_m\to\tfrac12$ in $L^2(\mu_0)$ as $m\to\infty$, despite
$L$ depending only on the first coordinate.  This illustrates an
irrelevant-feature dilution effect at fixed polynomial degree, not a
failure of the function-space posterior itself.

For the Gaussian coordinate sequence $(\xi_j)_{j\ge1}$, let $\alpha$ range over finitely supported nonnegative integer sequences. Here $h_\alpha(q)$ abbreviates $\prod_j h_{\alpha_j}(\xi_j(q))$, where $h_n$ is the univariate normalized probabilists' Hermite polynomial. Write the orthogonal
Wiener--Hermite expansion $L=\sum_\alpha c_\alpha h_\alpha$ in $L^2(\mu_0)$,  $c_\alpha=\E_{\mu_0}[Lh_\alpha]$.  Whenever $\alpha$ belongs to the current polynomial index set,
$c_\alpha$ is the entry in column $0$ and the row corresponding to $h_\alpha$ in the likelihood-weighted Hankel matrix. For a finite coordinate set $S$, put $\eta_S:=\|L-L_S\|_{L^2(\mu_0)}$ using the conditional likelihood defined above. By orthogonality,
\begin{equation}
 \eta_S^2
 =\sum_{\operatorname{supp}\alpha\not\subseteq S}c_\alpha^2.
 \label{eq:feature-selection-energy}
\end{equation}
Hence this criterion measures lost \emph{likelihood information}, rather
than prior coordinate variance.

Fix a finite candidate Hermite index set $\Gamma$ containing $0$,
and restrict admissible coordinate sets $S$ to subsets of the finite pool
$J_\Gamma=\bigcup_{\alpha\in\Gamma}\operatorname{supp}\alpha$.
From $M_{\rm sel}$ iid prior draws construct
$\widehat c_\alpha=M_{\rm sel}^{-1}\sum_iL(Q_i)h_\alpha(Q_i)$.
Given a coordinate budget $k$, choose $\widehat S$ minimizing
\begin{equation}
 \widehat r_\Gamma(S)
 =\left(\sum_{\substack{\alpha\in\Gamma\\
                    \operatorname{supp}\alpha\not\subseteq S}}
                      \widehat c_\alpha^2\right)^{1/2},
 \quad S\subseteq J_\Gamma,\quad |S|\le k.
 \label{eq:adaptive-selector}
\end{equation}
An exact minimizer is used for the mathematical statement; the cost of
solving this combinatorial selection problem is not asserted to be small.

\begin{theorem}[Feature-selection and conditional Nevai error]
\label{thm:adaptive-hankel-nevai}
Let $\varepsilon_\Gamma=\|L-P_\Gamma L\|_2$, where $P_\Gamma$
denotes Hermite projection onto $\operatorname{span}\{h_\alpha:\alpha\in\Gamma\}$.
Then the selector \eqref{eq:adaptive-selector} satisfies
\begin{equation}
 \E_{\rm sel}\eta_{\widehat S}
 \le \inf_{\substack{S\subseteq J_\Gamma\\|S|\le k}}\eta_S
       +\varepsilon_\Gamma
       +2\sqrt{\frac{|\Gamma|}{M_{\rm sel}}}.
 \label{eq:feature-selection-bound}
\end{equation}
Conditionally on the first pilot, let $\boldsymbol{\rho},\Lambda$ be fixed by a
measurable selection rule on $\widehat S$, with
$D=\max_{\alpha\in\Lambda}|\alpha|$ bounded by a predetermined
integer $D_{\max}$. Assume the conditional likelihood $f_{\widehat S}$
satisfies \eqref{eq:weighted-feature-lipschitz} with $B_{\widehat S}$.
Use an independent second pilot of size $M_{\rm est}$ to construct the
empirical version $\widehat T_{\boldsymbol{\rho},\Lambda,\widehat S}$ of
\eqref{eq:filtered-nevai-score}. Then
\begin{equation}
\begin{aligned}
 \E\|\widehat T_{\boldsymbol{\rho},\Lambda,\widehat S}\circ\Xi_{\widehat S}-L\|_{L^2(\mu_0)}
 &\le\inf_{\substack{S\subseteq J_\Gamma\\|S|\le k}}\eta_S
       +\varepsilon_\Gamma+2\sqrt{\frac{|\Gamma|}{M_{\rm sel}}}\\
 &\quad+\E_{\rm sel}\!\left[\mathfrak B(\widehat S,\boldsymbol{\rho})
              +\sqrt{\mathcal A_\Lambda(\boldsymbol{\rho})}\right]
       +\frac{3^{D_{\max}}}{\sqrt{M_{\rm est}}}.
\end{aligned}
 \label{eq:adaptive-master-bound}
\end{equation}
In the last estimate, the empirical score is composed with $\Xi_{\widehat S}$, so both terms are functions on the original prior space.
\end{theorem}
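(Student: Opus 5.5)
The plan has two parts: a selection-oracle inequality for $\widehat S$, then a conditional three-term bound for the filtered Nevai score on $\widehat S$, glued by the independence of the two pilots. For the selection part, write the population version $r_\Gamma(S)=(\sum_{\alpha\in\Gamma,\ \operatorname{supp}\alpha\not\subseteq S}c_\alpha^2)^{1/2}$. By \eqref{eq:feature-selection-energy}, $\eta_S^2=r_\Gamma(S)^2+\sum_{\alpha\notin\Gamma,\ \operatorname{supp}\alpha\not\subseteq S}c_\alpha^2$. The second sum is at most $\varepsilon_\Gamma^2$, which gives $r_\Gamma(S)\le\eta_S\le r_\Gamma(S)+\varepsilon_\Gamma$ for every $S$. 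Both $r_\Gamma$ and $\widehat r_\Gamma$ are Euclidean norms of coordinate-restricted vectors, so the reverse triangle inequality gives the uniform bound
\[
\sup_S|\widehat r_\Gamma(S)-r_\Gamma(S)|\le\|\widehat{\mathbf c}_\Gamma-\mathbf c_\Gamma\|_2 .
\]
Take any competitor $S^\ast$ with $|S^\ast|\le k$. Since $\widehat S$ minimizes $\widehat r_\Gamma$,
\[
\eta_{\widehat S}\le r_\Gamma(\widehat S)+\varepsilon_\Gamma\le \widehat r_\Gamma(\widehat S)+\|\widehat{\mathbf c}-\mathbf c\|_2+\varepsilon_\Gamma\le r_\Gamma(S^\ast)+2\|\widehat{\mathbf c}-\mathbf c\|_2+\varepsilon_\Gamma ,
\]
and $r_\Gamma(S^\ast)\le\eta_{S^\ast}$. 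Taking expectations and using Jensen, $\E\|\widehat{\mathbf c}-\mathbf c\|_2^2=M_{\rm sel}^{-1}\sum_{\alpha\in\Gamma}\operatorname{Var}(Lh_\alpha)\le|\Gamma|/M_{\rm sel}$, by the same variance bound as in Theorem~\ref{thm:function-error}. Infimizing over $S^\ast$ then gives \eqref{eq:feature-selection-bound}.

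For \eqref{eq:adaptive-master-bound}, condition on the selection pilot, so that $\widehat S,\boldsymbol\rho,\Lambda$ are fixed. By the triangle inequality,
\[
\|\widehat T\circ\Xi_{\widehat S}-L\|_2\le\eta_{\widehat S}+\|T_{\boldsymbol\rho,\Lambda,\widehat S}-f_{\widehat S}\|_{L^2(\gamma_{\mathrm G,\widehat S})}+\|\widehat T-T\|_{L^2(\gamma_{\mathrm G,\widehat S})}.
\]
The middle term uses the pushforward identity $(\Xi_{\widehat S})_\#\mu_0=\gamma_{\mathrm G,\widehat S}$, which holds because the coordinates are iid standard Gaussian. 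Theorem~\ref{thm:gaussian-localized} bounds this middle term by $\mathfrak B(\widehat S,\boldsymbol\rho)+\sqrt{\mathcal A_\Lambda(\boldsymbol\rho)}$. For the last term, the filtered score is a normalized quadratic form $\mathbf u^\top H\mathbf u$ with unit vector $\mathbf u=\boldsymbol\psi/\|\boldsymbol\psi\|_2$, exactly as in \eqref{eq:T-op}. Hence it is bounded pointwise by $\|\widehat H_{\Lambda,\widehat S}-H_{\Lambda,\widehat S}\|_{\op}$. The estimation pilot is independent of the selection pilot, so conditionally the pilot draws remain iid $\mu_0$, and Proposition~\ref{prop:unbounded-feature} gives the conditional bound $\sqrt{\chi/M_{\rm est}}$ with $\chi=\E\|\boldsymbol\phi_{\Lambda,\widehat S}\|_2^4$. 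Taking the outer expectation and inserting the first part for $\E\eta_{\widehat S}$ completes the argument. If the empirical score is clipped, nonexpansiveness removes the clipping; otherwise no clipping is needed.

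The main obstacle is the bound $\chi\le 9^{D_{\max}}$, which must be uniform in the random $\widehat S$ and $\Lambda$ so that $\sqrt\chi\le 3^{D_{\max}}$. I would first try
\[
\E\|\boldsymbol\phi\|_2^4=\sum_{\alpha,\beta\in\Lambda}\E[h_\alpha^2h_\beta^2]\le\sum_{\alpha,\beta}\|h_\alpha\|_4^2\|h_\beta\|_4^2 ,
\]
together with Gaussian hypercontractivity, $\|h_\alpha\|_4\le 3^{|\alpha|/2}$. This yields $\chi\le\bigl(\sum_{\alpha\in\Lambda}3^{|\alpha|}\bigr)^2$. The difficulty is that this sum depends on $|\Lambda|$, not only on $D_{\max}$. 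So I expect to need either a structural restriction on $\Lambda$ implicit in the selection rule, or a sharper bound that uses $\boldsymbol\phi$ restricted to coordinates in $\widehat S$. If that fails, the conclusion should carry the factor $\sum_{\alpha\in\Lambda}3^{|\alpha|}$, which equals $3^{D_{\max}}$ only under a cardinality normalization. Checking precisely which convention makes the stated constant $3^{D_{\max}}$ correct is the step I would verify most carefully.
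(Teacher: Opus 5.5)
Your selection step is correct and is the paper's argument, and your three-term triangle decomposition for the score (feature error $\eta_{\widehat S}$, the localization term from Theorem~\ref{thm:gaussian-localized}, and the estimation error) is also right. The genuine gap is the estimation term. More careful estimation of $\chi$ will not close it, because any route through $\sup_{\bxi}|\widehat T-T|\le\|\widehat H_{\Lambda,\widehat S}-H_{\Lambda,\widehat S}\|_{\op}$ must grow with $|\Lambda|$. Via Proposition~\ref{prop:unbounded-feature}, Jensen gives $\chi=\E\|\boldsymbol\phi_{\Lambda,\widehat S}\|_2^4\ge(\E\|\boldsymbol\phi_{\Lambda,\widehat S}\|_2^2)^2=|\Lambda|^2$. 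This already exceeds $9^{D}$ for $\Lambda=\{0,\mathbf e_1,\ldots,\mathbf e_k\}$ with $k\ge3$. Even the operator norm itself grows: for constant $L$, $\widehat H-H$ is then a multiple of a Gaussian sample-covariance error of a $(k+1)$-vector, which is of order $\sqrt{k/M_{\rm est}}$. So the stated constant $3^{D_{\max}}$ is correct and needs no cardinality convention on $\Lambda$. Your fallback factor $\sum_{\alpha\in\Lambda}3^{|\alpha|}$ is an artifact of bounding a supremum over $\bxi$ when only an $L^2(\gamma_{\mathrm G,\widehat S})$ bound is required.

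The missing idea is to bound the error in mean square at each fixed $\bxi$ and then integrate. Expanding the quadratic form gives
$T_{\boldsymbol\rho,\Lambda,\widehat S}(\bxi)=\E_{\mu_0}[L\,p_{\bxi}(\Xi_{\widehat S})^2]$ and
$\widehat T_{\boldsymbol\rho,\Lambda,\widehat S}(\bxi)=M_{\rm est}^{-1}\sum_iL(Q_i)\,p_{\bxi}(\Xi_{\widehat S}(Q_i))^2$.
Here $p_{\bxi}:=K_{\boldsymbol\rho,\Lambda}(\bxi,\cdot)/\|\boldsymbol\psi_{\boldsymbol\rho,\Lambda}(\bxi)\|_2$ is a \emph{single} polynomial of total degree at most $D$ with $\|p_{\bxi}\|_{L^2(\gamma_{\mathrm G,\widehat S})}=1$. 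Conditionally on the selection pilot, $\widehat T(\bxi)$ is an unbiased iid mean, so
\[
\E_{\rm est}|\widehat T(\bxi)-T(\bxi)|^2\le\frac{\E[L^2p_{\bxi}^4]}{M_{\rm est}}\le\frac{\|p_{\bxi}\|_4^4}{M_{\rm est}}\le\frac{9^{D}}{M_{\rm est}}.
\]
The last step applies hypercontractivity $\|p\|_4\le3^{D/2}\|p\|_2$ to the whole polynomial $p_{\bxi}$, rather than to each $h_\alpha$ separately. The bound is uniform in $\bxi$. Integrating over $\bxi\sim\gamma_{\mathrm G,\widehat S}$, applying Jensen, and using $D\le D_{\max}$ gives $\E_{\rm est}\|\widehat T-T\|_{L^2(\gamma_{\mathrm G,\widehat S})}\le3^{D_{\max}}/\sqrt{M_{\rm est}}$, independent of $|\Lambda|$ and $|\widehat S|$. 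With this replacement, the rest of your argument goes through unchanged.
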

\noindent\emph{The proof is given in Appendix~\ref{app:adaptive-hankel-nevai}.}

By Theorem~\ref{thm:fixed-budget-regret} and
$\|L-\widehat T\|_1\le\|L-\widehat T\|_2$, the right-hand side of
\eqref{eq:adaptive-master-bound}, divided by $Z$, also bounds the expected
posterior-mass regret of the adaptive score at any fixed retained prior
fraction $0<\alpha<1$ (with ties resolved measurably).

\paragraph{Scope and computational cost.}
For $s=|\Lambda|$, assembling a dense empirical moment matrix by outer
products costs $O(M_{\rm est}s^2)$ arithmetic operations after the basis
has been evaluated, storing it costs $O(s^2)$, and evaluating one Nevai
quadratic form costs $O(s^2)$ (or less if structure is exploited).
The selector \eqref{eq:adaptive-selector} is combinatorial; no claim of
optimal computational complexity is made. The principal inverse-problem experiments compare the original scores and, where stated, fixed-filter variants. Section~\ref{sec:num-filtered} separately tests filtered localization and the independent-pilot selector on controlled Gaussian examples.

\section{Numerical experiments}
\label{sec:numerics}

The main experiments comprise a multimodal geometry check, a nonlinear PDE comparison against regression surrogates, and controlled Gaussian tests of the filtered extension. Additional inverse-source and diffusion diagnostics are provided below. Every pilot uses deterministic likelihood values. Reference likelihoods are used only for retrospective validation and are never supplied to fitting, hyperparameter selection, or candidate ranking by a learned method. Posterior-mass capture is a ratio of weighted sums, not an exact posterior probability. We distinguish variability over independent pilots from uncertainty due to the finite reference cloud.
For evaluation candidates $Q_1^{\rm ref}, \ldots, Q_N^{\rm ref}$ and a retained index set $I_k(s)$ containing the $k$ largest scores, we report
\begin{equation}
 \widehat C_s(k/N)
 =\frac{\sum_{i\in I_k(s)}L(Q_i^{\rm ref})}
        {\sum_{i=1}^N L(Q_i^{\rm ref})}.
 \label{eq:empirical-capture}
\end{equation}
Ties are resolved using candidate indices without consulting their likelihoods. Exact-likelihood ranking takes $s_i=L(Q_i^{\rm ref})$ and maximizes this empirical capture at each $k$; it is a validation benchmark requiring all candidate forward solves. Random selection of $k$ indices has expected capture $k/N$. Unless otherwise stated, Nevai and Christoffel values are clipped to $[0,1]$ before ranking; clipping can create ties, which are handled by the same rule.

\subsection{A nonlinear multimodal inverse problem}

Let \(q=(q_1,q_2)\in[-1,1]^2\) have the uniform prior and consider the nonlinear forward map
\begin{equation}
  \mathcal G(q)=(q_1^2,q_2^2).
  \label{eq:generic-forward}
\end{equation}
We use data $y=(0.498,0.244)$, which are a small perturbation of \(\mathcal G(0.7,-0.5)=(0.49,0.25)\).  Because \eqref{eq:generic-forward} loses both signs, the posterior has four separated modes.  This makes the example deliberately unfavorable to a single local Gaussian approximation and tests whether a global polynomial moment score can recover disconnected posterior-relevant regions.

For the sharp screening experiment we take independent Gaussian noise with \(\sigma=0.07\), so
$L(q; y)=\exp\!\left[-\frac{\|\mathcal G(q)-y\|_2^2}{2\sigma^2}\right]$.
A degree-four Legendre basis has \(s_d=15\). All Christoffel computations in this example use the same diagonal ridge $10^{-9}$, and the independent evaluation cloud contains $N=120000$ candidates.  The empirical Hankel matrix is built from \(M=20000\) iid prior candidates using their deterministic likelihood values as weights.

Figure~\ref{fig:general-fields} compares the reference likelihood with the empirical Nevai and Christoffel score fields.  For visualization, each field is min--max rescaled to $[0,1]$ and overlaid with contours; the likelihood omits the data-only Gaussian normalization constant. Both scores recover all four modes.  Table~\ref{tab:general-screening} reports quantitative ranking results: $c_{10}$ and $c_{20}$ denote capture at $10\%$ and $20\%$ retention, while $\mathrm{Corr}_{\rm S}^{\rm N}$ and $\mathrm{Corr}_{\rm S}^{\rm C}$ denote Spearman correlations with the reference likelihood. The comparison uses either the $M=20000$ pilot matrix or a $120\times120$ Gauss--Legendre population-matrix approximation.  The top \(20\%\) of prior candidates selected by the degree-four Nevai score contains about \(99.5\%\) of the reference posterior mass.

\begin{figure}[t]
\centering
\includegraphics[width=0.98\textwidth]{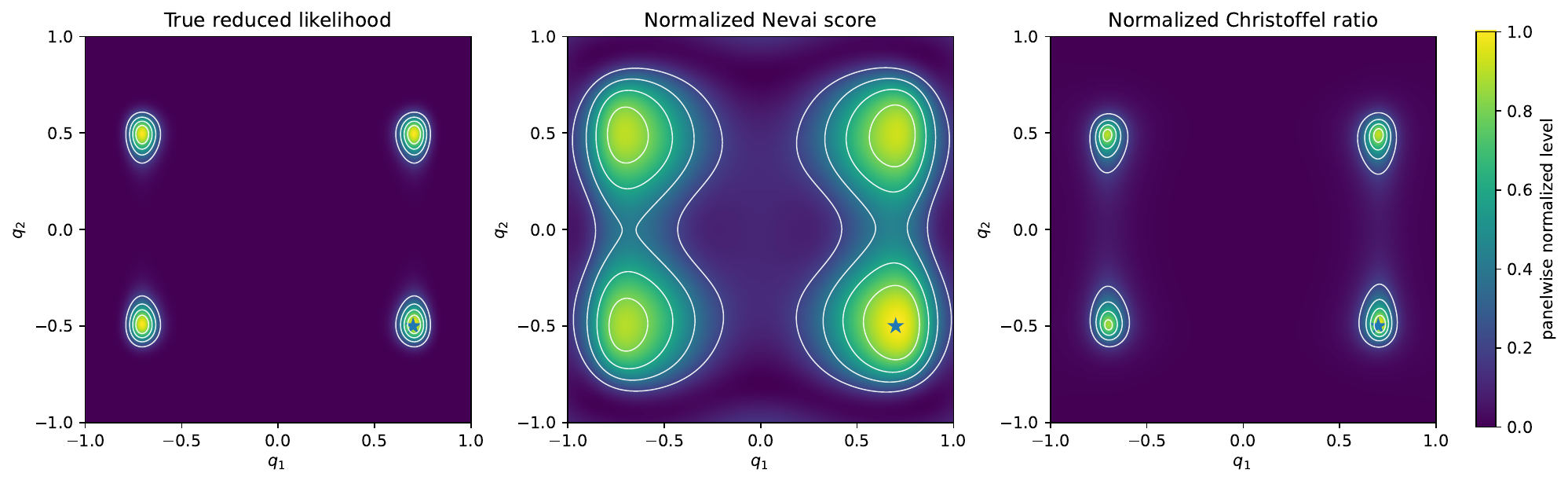}
\caption{Multimodal example: reference likelihood, Nevai score, and Christoffel ratio. Panels are independently rescaled to $[0,1]$; the star marks the generating parameter.}
\label{fig:general-fields}
\end{figure}

\begin{table}[t]
\centering
\footnotesize
\caption{Multimodal screening: Spearman correlations and captured reference mass at $d=4$.}
\label{tab:general-screening}
\begin{tabular}{lcccccc}
\toprule
pilot & $\mathrm{Corr}_{\rm S}^{\rm N}$ & $\mathrm{Corr}_{\rm S}^{\rm C}$ & \(c_{10}^{\rm N}\) & \(c_{20}^{\rm N}\) & \(c_{10}^{\rm C}\) & \(c_{20}^{\rm C}\)\\
\midrule
likelihood-weighted pilot & 0.961 & 0.930 & 0.965 & 0.995 & 0.974 & 0.999\\
quadrature reference & 0.963 & 0.928 & 0.969 & 0.995 & 0.974 & 0.999\\
\bottomrule
\end{tabular}
\end{table}

Figure~\ref{fig:general-curves} extends this comparison across retained fractions. The exact-likelihood ranking is the upper benchmark for this reference cloud; both learned rankings approach it as the retained fraction increases.

Figure~\ref{fig:general-convergence} examines degree and pilot resolution separately. For the degree sweep we use a smoother likelihood, $\sigma=0.20$, and a fixed pilot of $M=50000$. The plotted total-variation distance compares the normalized surrogate measure $\widetilde T_{d, M}\mu_0/\int\widetilde T_{d, M}\,\dd\mu_0$ with $\mu^y$, using $120\times120$ quadrature. It measures a reweighting approximation, not the sampling law of retained candidates. For the pilot sweep, $\sigma=0.07$ and $d=4$ are fixed; operator errors relative to the quadrature matrix are averaged over 20 pilots per size. These finite-range diagnostics illustrate improvement without identifying an asymptotic rate.
\begin{figure}[!htbp]
\centering
\begin{subfigure}{\textwidth}
\centering
\includegraphics[width=0.4\textwidth]{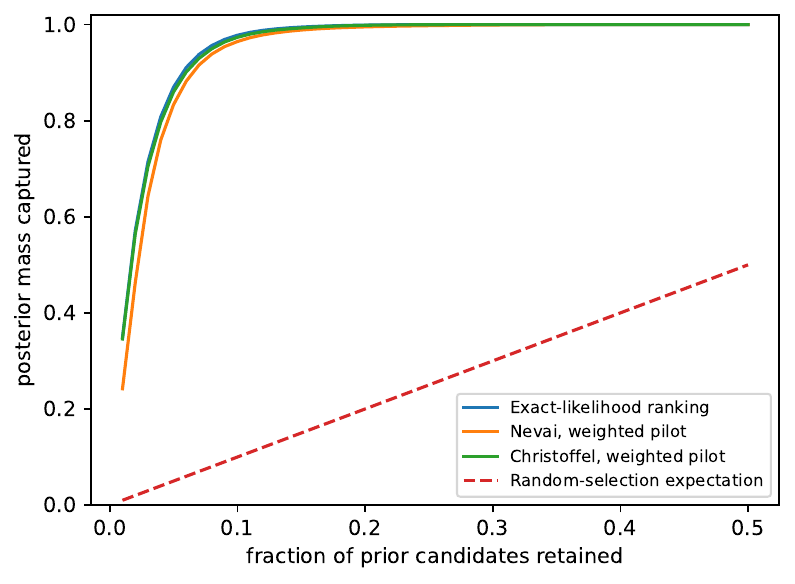}
\caption{Captured mass versus retained fraction ($\sigma=0.07$, $d=4$, $M=20000$). The dashed line denotes random-selection expectation.}
\label{fig:general-curves}
\end{subfigure}
\smallskip
\begin{subfigure}{\textwidth}
\includegraphics[width=0.8\textwidth]{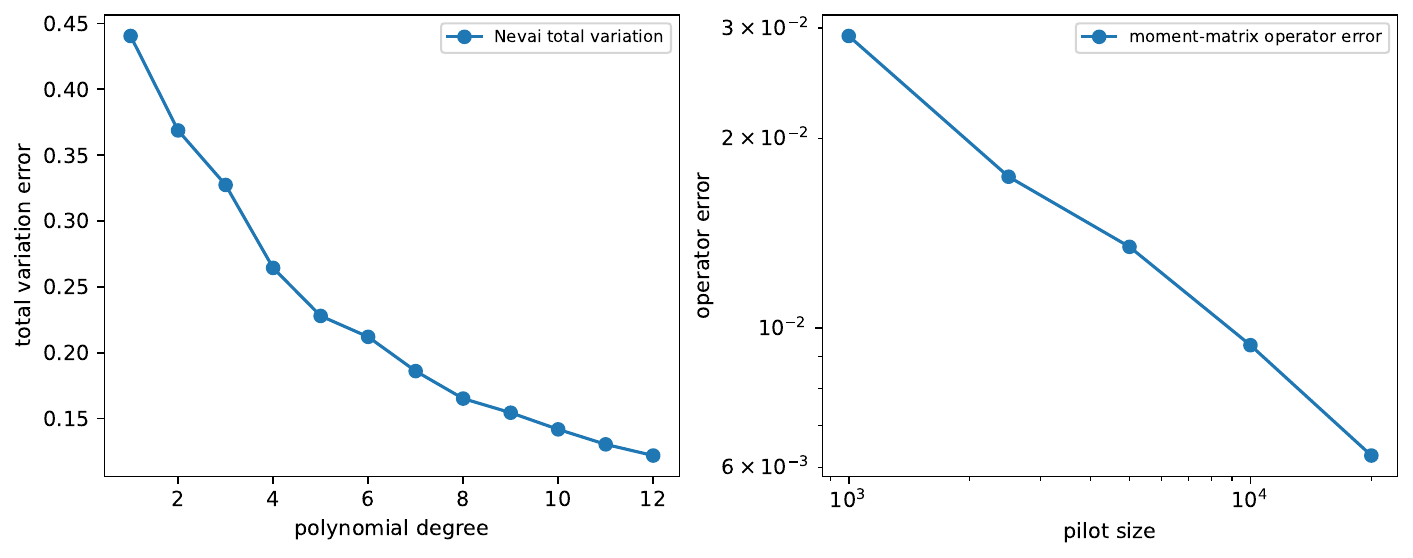}
\caption{Resolution diagnostics: total-variation error versus degree (left) and mean matrix error versus pilot size (right).}
\label{fig:general-convergence}
\end{subfigure}
\smallskip
\caption{Multimodal screening and resolution diagnostics.}
\end{figure}

\FloatBarrier
\subsection{A nonlinear PDE with a function-valued reaction coefficient}
\label{sec:num-nonlinear-function}
To test a function-valued unknown beyond an analytically solvable forward map, consider
\begin{equation}
 -u''(x)+6\exp(q(x))u(x)^3=30\sin(\pi x),\qquad
 u(0)=u(1)=0,\qquad x\in(0,1),
 \label{eq:nonlinear-function-pde}
\end{equation}
with five observations $u(s_j)$ at $s=(0.18,0.32,0.50,0.68,0.82)$ and independent Gaussian noise of standard deviation $0.045$. The numerical prior is a 24-mode Gaussian Fourier field,
\begin{equation}
 q(x)=\sum_{k=1}^{24}0.90\,k^{-0.7}\xi_k\sqrt2\sin(k\pi x),
 \qquad \xi_k\stackrel{\rm iid}{\sim}\gamma_{\mathrm G}.
 \label{eq:nonlinear-function-prior}
\end{equation}
The truth uses the first nine coordinates $(1,-0.85,0.70,-0.60,0.50,-0.42,0.34,-0.28,0.20)$ and zeros thereafter. Fixed observation perturbations are $(0.008,-0.009,0.004,-0.006,0.007)$. We solve the semilinear boundary value problem by centered differences on 64 interior nodes and damped Newton iteration to a maximum discrete residual below $10^{-6}$. This is a finite-resolution nonlinear PDE benchmark; the reported numerical scores do not claim exact continuum inference.

\paragraph{Common budgets and independent validation.}
We use $N=120000$ independent reference prior draws and five independent pilots, with nested sizes $M\in\{1000,4000,8000\}$ within each replicate. The likelihood standard deviations are $\sigma=0.045$ and $0.09$, with the same fixed data and forward outputs; this varies posterior concentration, not the data realization. All learned scores use the first $m=8$ coordinates. The total-degree-two Hermite space has dimension 45. The Christoffel ridge is $10^{-10}$ on the unnormalized moment matrix. Filtered Nevai scores use preassigned $\rho=0.6$ or $0.85$ in every coordinate. These filters are not selected using reference results.

A 64-to-128-node check on 100 independent draws changes any sensor output by at most $6.5\times10^{-4}$, compared with the smaller noise standard deviation $0.045$. It does not establish continuum or parameter-truncation convergence. The reference importance effective sample sizes are 1491 and 8720 for the two noise levels, respectively. All comparisons use the same reference cloud and matched pilot draws.

\paragraph{Regression competitors and information access.}
For both total-degree-two polynomial ridge regression and a rank-128 Nystr\"om approximation to the Gaussian radial-basis kernel \cite{WilliamsSeeger2001}, we fit three targets: $L$, $\log L$, and the vector of five forward outputs. Log-likelihoods are computed stably from the residual, including where exponentiation underflows. Forward surrogates use the same solves but more than their scalar likelihood summaries. All regressors use an unpenalized intercept. On a fixed 80/20 split within each pilot, the sum-loss ridge parameter is chosen from $\{10^{-6},10^{-3},1,100\}$; kernel bandwidth parameters in $\exp(-\vartheta\|\bxi-\bzeta\|_2^2)$ are chosen from $\{0.025,0.1,0.4\}$. Validation minimizes squared error for the fitted target, with no reference likelihoods. Nystr\"om landmarks are sampled from the training split during tuning and from the full pilot for refitting. Each chosen model is refitted using all $M$ pilot draws. These are specified, reproducible baselines, not an exhaustive comparison with surrogate modelling.

Likelihood regressions are clipped to $[0,1]$. Log-likelihood predictions are ranked directly. Forward surrogates are ranked by their Gaussian log-likelihood with the stated noise level. Thus likelihood underflow does not artificially tie their candidate rankings.

\begin{table}[htbp]
\centering\small
\caption{Nonlinear PDE: captured mass at $\sigma=0.045$, $M=4000$, $m=8$; mean $\pm$ sample standard deviation over five pilots.}
\label{tab:nonlinear-function}
\resizebox{\textwidth}{!}{\begin{tabular}{llcc}
\toprule
Score & Pilot information & Top $5\%$ & Top $20\%$\\
\midrule
% BEGIN_JUQ_comparison
Christoffel & Scalar likelihood & $0.92177\pm 0.00710$ & $0.99992\pm 0.00003$ \\
Nevai & Scalar likelihood & $0.42165\pm 0.01612$ & $0.82461\pm 0.03466$ \\
Clipped projection & Scalar likelihood & $0.23592\pm 0.05173$ & $0.80509\pm 0.03054$ \\
Filtered Nevai, $\rho=0.6$ & Scalar likelihood & $0.42216\pm 0.01995$ & $0.83533\pm 0.03528$ \\
Filtered Nevai, $\rho=0.85$ & Scalar likelihood & $0.42633\pm 0.01566$ & $0.83056\pm 0.03451$ \\
Polynomial fit to $L$ & Scalar likelihood & $0.23214\pm 0.05353$ & $0.80047\pm 0.03045$ \\
Polynomial fit to $\log L$ & Scalar likelihood & $0.07760\pm 0.01565$ & $0.91839\pm 0.02957$ \\
Kernel fit to $L$ & Scalar likelihood & $0.38265\pm 0.06362$ & $0.78091\pm 0.05635$ \\
Kernel fit to $\log L$ & Scalar likelihood & $0.28672\pm 0.05299$ & $0.94697\pm 0.01357$ \\
Polynomial forward surrogate & Forward outputs & $0.95935\pm 0.00085$ & $0.99998\pm 1.66\!\times\!10^{-6}$ \\
Kernel forward surrogate & Forward outputs & $0.97097\pm 0.00186$ & $0.99987\pm 0.00024$ \\
% END_JUQ_comparison
\bottomrule
\end{tabular}}
\end{table}

\paragraph{Screening performance and uncertainty.}
Table~\ref{tab:nonlinear-function} and Figure~\ref{fig:nonlinear-function} show that Christoffel captures $0.922$ of the reference mass at $5\%$ retention for $\sigma=0.045$, compared with $0.422$ for Nevai. The tested direct likelihood regressions are less effective at this budget. However, polynomial and kernel forward surrogates capture $0.959$ and $0.971$. At $\sigma=0.09$, the corresponding Christoffel and forward-surrogate values are $0.626$, $0.682$, and $0.689$. Thus the comparison does not establish a general advantage over forward surrogates. Fitting a scalar likelihood, fitting its logarithm, and fitting the forward response are materially different approximation tasks even under the same solve budget.

For reference uncertainty, we use 200 paired Poisson bootstrap resamples of reference candidates, keeping the fitted pilots fixed and recomputing ranks, integer budgets, and self-normalized mass estimates. The percentile interval for the mean over the five fixed Christoffel fits at $5\%$ retention is $[0.9145,0.9279]$, while that for the kernel forward surrogate is $[0.9667,0.9749]$. The paired difference has interval $[0.0440,0.0540]$. These intervals assess reference Monte Carlo variability conditionally on the fitted scores; they are not joint confidence intervals over both pilot and reference sampling. The table reports pilot variability separately; values rounded close to one do not imply exact recovery of posterior mass. The shaded bands and error bars in the comparison figures likewise represent pilot variability, not reference uncertainty. Neither uncertainty measure includes PDE discretization, finite-mode approximation, or a change of observed data.

Figure~\ref{fig:juq-pilot} varies pilot size without changing the reference cloud. Fixed-degree filtering does not remove the main Nevai limitation here: at $M=4000$, the two filters capture $0.422$ and $0.426$ at $5\%$, close to the unfiltered value. A Gaussian localization theorem concerns a coordinated choice of filter and polynomial resolution; it does not guarantee that filtering a degree-two score improves its ranking.

\begin{figure}[!htbp]
\centering
\begin{subfigure}{\textwidth}
\centering
\includegraphics[width=0.88\textwidth]{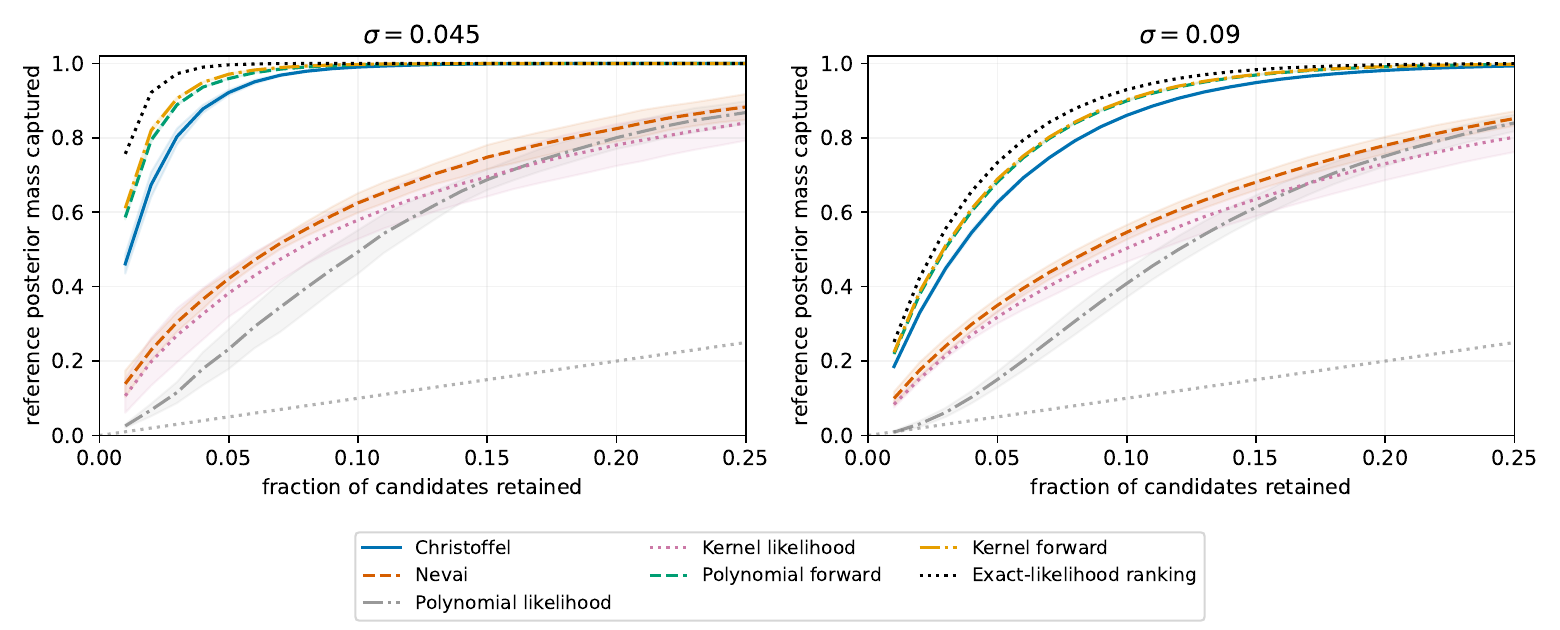}
\caption{Fixed pilot budget $M=4000$: mean capture and one-standard-deviation bands over five pilots.}
\label{fig:nonlinear-function}
\end{subfigure}
\smallskip
\begin{subfigure}{\textwidth}
\includegraphics[width=0.88\textwidth]{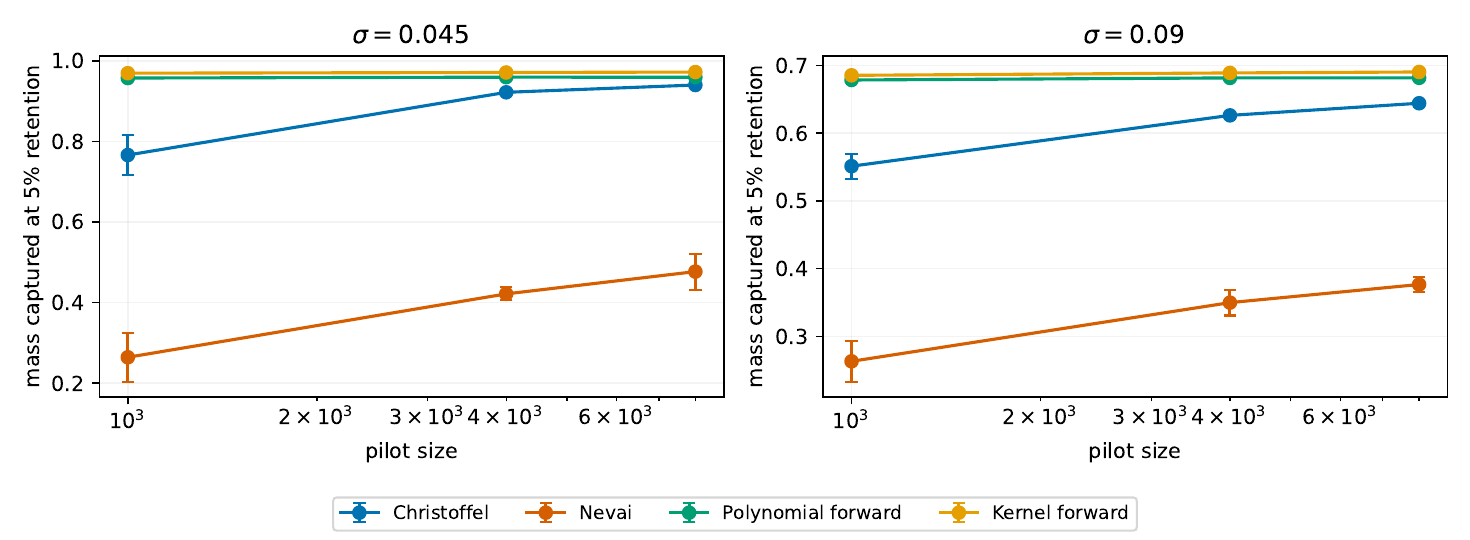}
\caption{Varying pilot budget: capture at $5\%$ retention; error bars show one pilot standard deviation.}
\label{fig:juq-pilot}
\end{subfigure}
\smallskip
\caption{Nonlinear PDE screening at fixed and varying pilot budgets.}
\end{figure}

\paragraph{Forward-solve accounting and measured costs.}
Screening $N$ new candidates and then evaluating the $k$ retained candidates requires $M+k$ forward solves, in addition to fitting and scoring. A simple cost comparison is
\begin{equation}
 (M+k)c_{\rm f}+t_{\rm fit}+t_{\rm score}+t_{\rm sort}<Nc_{\rm f},
 \label{eq:screening-cost}
\end{equation}
where $c_{\rm f}$ is an average solve cost. This is an accounting condition, not an accuracy guarantee; expensive feature extraction must also be charged. In Table~\ref{tab:juq-cost}, scoring includes feature evaluation for every method. For moment scores, the shared Hermite construction is charged separately to each standalone method. Fitting includes internal tuning for regressors. Timings use one BLAS thread and the first pilot; solve times are medians of three runs. Total time includes fitting, scoring, sorting, pilot solves, and retained-candidate solves. The break-even solve cost follows from \eqref{eq:screening-cost}. Reference solves used solely to evaluate the experiment are excluded from deployment costs.

\begin{table}[htbp]
\centering\small
\caption{Nonlinear PDE costs for $M=4000$, $N=120000$, $k=6000$. Times are in seconds; break-even solve costs are in microseconds.}
\label{tab:juq-cost}
\begin{tabular}{lrrrr}
\toprule Method & Fit & Score & Total & Break-even\\\midrule
% BEGIN_JUQ_cost
Christoffel & 0.003 & 0.144 & 0.465 & 1.44 \\
Nevai & 0.003 & 0.104 & 0.434 & 1.07 \\
Polynomial likelihood & 0.008 & 0.047 & 0.390 & 0.59 \\
Kernel likelihood & 0.066 & 0.153 & 0.552 & 2.08 \\
Polynomial forward & 0.008 & 0.057 & 0.381 & 0.69 \\
Kernel forward & 0.069 & 0.172 & 0.559 & 2.29 \\
% END_JUQ_cost
\bottomrule
\end{tabular}
\end{table}

On this small batched solver, evaluating all reference candidates takes about $2.60$ seconds. The assembled Christoffel screening-and-retained-evaluation cost is $0.465$ seconds and uses 10000 instead of 120000 solves. The polynomial forward surrogate is faster here ($0.381$ seconds) and captures more mass. These measurements support a conditional saving in forward evaluations, not superiority in total runtime or exact posterior sampling. They exclude neither pilot construction nor feature evaluation, but remain hardware- and implementation-dependent. A retained subset can omit posterior mass even when its evaluation is inexpensive.

\subsection{Controlled filtered Gaussian localization and feature selection}
\label{sec:num-filtered}
The following tests isolate the filtered construction from PDE and pilot effects. First, take the prior $\gamma_{\mathrm G}$ and the likelihood $L(\xi)=\exp[-(\xi-1)^2/(2\tau^2)]$, $\tau=0.5$. For $\Lambda=\{0,\ldots,d\}$, the infinite filtered score has the explicit form
\[
 T_{\rho,\infty}(\xi)=\frac{\tau}{\sqrt{\tau^2+v}}
 \exp\!\left[-\frac{(b\xi-1)^2}{2(\tau^2+v)}\right].
\]
Here $b$ and $v$ are the one-coordinate specializations of $b_j$ and $v_j$ above. We compute population moment matrices and $L^2(\gamma_{\mathrm G})$ errors by 240-node Gauss--Hermite quadrature, checking against 160 nodes. The maximum change in the reported likelihood-approximation errors is $1.1\times10^{-7}$. For $d\in\{2,4,8,16,32,64\}$ and $\rho\in\{0.5,0.8,0.95\}$, the finite-to-infinite errors satisfy the numerical checks against $\rho^{d+1}/\sqrt{1-\rho^2}$, and the total errors satisfy Theorem~\ref{thm:gaussian-localized} with Lipschitz constant $e^{-1/2}/\tau$. These checks illustrate, rather than prove, the bounds. At fixed $\rho<1$, increasing $d$ approaches a nonzero localization-error floor, as shown in Figure~\ref{fig:filtered-validation}; increasing degree alone does not imply consistency.

Second, consider eight independent Gaussian coordinates and the bounded likelihood $L(\bxi)=\tfrac12+0.15\sin\xi_1+0.15\sin\xi_5$. With $\Gamma=\{0,\mathbf e_1,\ldots,\mathbf e_8\}$ and a two-coordinate budget, minimizing \eqref{eq:adaptive-selector} amounts to retaining the two largest squared empirical linear coefficients. We use independent selection and estimation pilots of 2000 draws each, 20 independent pilot pairs, and 40000 independent reference candidates. The selection pilot identifies $\{1,5\}$ in all 20 runs; this is an observation on this test, not a universal selection guarantee. Degree-two scores on the selected coordinates, the first two coordinates, and all eight coordinates use the same estimation pilot. The fixed-coordinate baselines therefore leave the 2000 selection evaluations unused; this test isolates the feature choice rather than optimizing each method's total sample allocation.

In Figure~\ref{fig:filtered-validation}, $\rho=1$ denotes the unfiltered comparator, and feature-selection results are means with one standard deviation over the 20 runs. At $\rho=0.6$, mean $L^2$ errors are $0.0582$ for selected coordinates, $0.1076$ for the first two, and $0.0889$ for all eight. For the unfiltered selected score the error is $0.0626$. Thus relevant-feature selection has a larger effect than the filter change in this example. The experiment tests the specified finite candidate selector and independent-pilot construction; it does not establish dimension-independent performance or validate the potentially loose $3^D/\sqrt{M}$ estimation bound sharply.

\begin{figure}[htbp]
\centering
\includegraphics[width=0.90\textwidth]{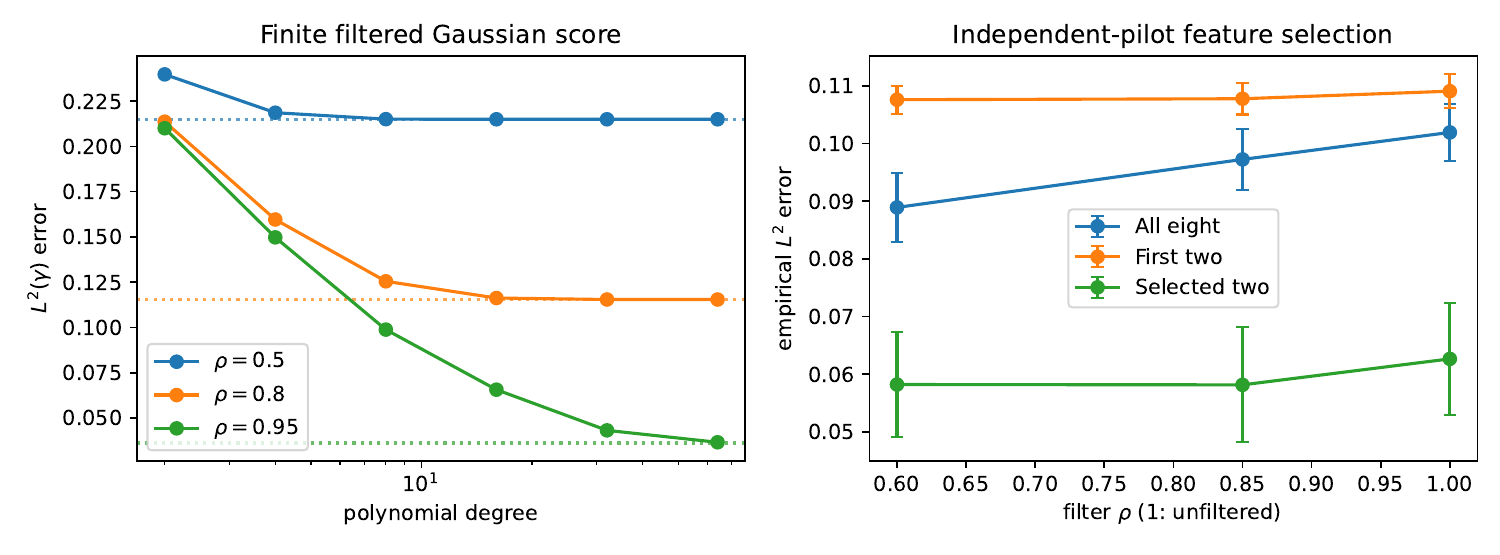}
\caption{Filtered Gaussian tests: population localization error (left) and independent-pilot feature selection (right). Dotted lines show error floors; error bars show one standard deviation.}
\label{fig:filtered-validation}
\end{figure}

\subsection{Function-space inverse source: feature resolution and screening}
\label{sec:num-function}

We next test the function-space theory. Consider
\[
-u''(x)=q(x),\qquad u(0)=u(1)=0,\qquad x\in(0,1).
\]
with Gaussian source prior
\begin{equation*}
q(x)=\sum_{k\ge1}\sqrt{\lambda_k}\,\xi_k\sqrt2\sin(k\pi x),
 \qquad
 \xi_k\stackrel{\rm iid}{\sim}\gamma_{\mathrm G},
 \qquad
 \lambda_k=0.7^2 k^{-1.3}.
\end{equation*}
We observe the flux $u'(s_j)$ at five interior sensors. Since
$u'(s_j)=\sum_{k\ge1} \frac{\sqrt{2\lambda_k}}{k\pi}\cos(k\pi s_j)\,\xi_k$,
the forward map is linear in the Gaussian coordinates. The numerical reference retains 96 modes; this is used only to evaluate the benchmark, not to define the screening theory. All reported finite-dimensional reference quantities, including the conditional likelihoods below, condition within this 96-mode model; they are not exact numerical evaluations of the infinite-series posterior.

The feature vector is $\bxi_{1:m}=\Xi_m(q)=(\xi_1(q),\ldots,\xi_m(q))^\top$. Because the omitted tail remains Gaussian and enters the observation linearly, the conditional likelihood is available analytically (with the tail truncated at mode 96 in the numerical benchmark). If $A_m$ is the observation matrix for the first $m$ coordinates and $\Sigma_{>m}=A_{>m}A_{>m}^\top$ is the omitted observation covariance, then
\begin{equation}
 \bar L_m(\bxi_{1:m})
 =\det\!\left(I+\frac{\Sigma_{>m}}{\sigma^2}\right)^{-1/2}
 \exp\!\left[
 -\frac{1}{2\sigma^2}\mathbf{r}_m^\top
 \left(I+\frac{\Sigma_{>m}}{\sigma^2}\right)^{-1}\mathbf{r}_m
 \right],
 \label{eq:conditional-linear-gaussian}
\end{equation}
where $\mathbf{r}_m=A_m\bxi_{1:m}-y$. This formula serves as a direct reference for Theorem~\ref{thm:feature-posterior}.

We use $\sigma=0.06$, an independent evaluation cloud of $N=60000$ draws, and nested prefixes of one $100000$-draw pilot for the pilot-size sweep. Sensor positions, the fixed data perturbation, and random seeds are specified in the reproduction scripts. The conditional likelihood is analytic, while its variance, relevance energies, and captured mass are estimated on the evaluation cloud. Table~\ref{tab:function-space} and the left panel of Figure~\ref{fig:function-space-screening} show the resolved fraction $\mathfrak R_m$ from \eqref{eq:resolved-fraction}. Eight coordinates resolve $92.3\%$ of the likelihood variance, twelve resolve $96.6\%$, and twenty-four resolve $99.3\%$. Screening is easier than global likelihood reconstruction: ranking candidates by the exact feature likelihood $\bar L_8$ and keeping only the top $20\%$ already captures $99.99\%$ of the 96-mode reference posterior mass in the evaluation cloud.

To test the empirical Hankel step, we use a total-degree-two Hermite cylinder basis at $m=8$. Its dimension is 45. With $M=50000$ weighted pilot candidates, the empirical Nevai score has Spearman correlation $0.763$ with the 96-mode reference likelihood and its top $20\%$ prior fraction captures $93.9\%$ of the 96-mode reference posterior mass. The center panel of Figure~\ref{fig:function-space-screening} shows that the mass capture improves as the pilot size grows. The difference from conditional-likelihood ranking may reflect both Nevai localization and finite-pilot errors in \eqref{eq:nevai-three-scale}; the experiment does not separately identify their contributions.

Finally, the right panel of Figure~\ref{fig:function-space-screening} plots the relevance energies $\mathcal E_m$, normalized by the sum of the first 24 estimates on the reference cloud. They decay rapidly but not in exact proportion to the prior eigenvalues. This is the intended interpretation of the spectrum: it measures how much a mode changes prior-to-posterior relevance, rather than how much prior variance the mode carries.

\begin{figure}[tbp]
\centering
\includegraphics[width=0.98\textwidth]{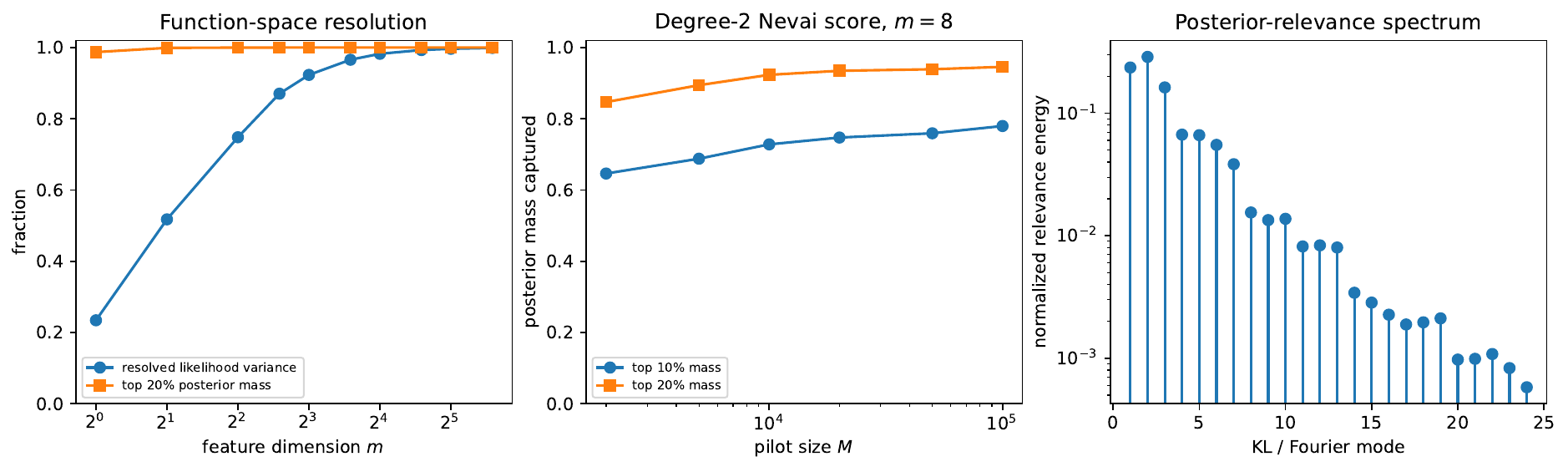}
\caption{Gaussian inverse-source model: feature resolution (left), pilot-size dependence (center), and normalized relevance energies (right).}
\label{fig:function-space-screening}
\end{figure}

\begin{table}[tbp]
\centering
\small
\caption{Gaussian inverse-source model: resolved likelihood variance and mass captured by exact feature-likelihood ranking.}
\label{tab:function-space}
\begin{tabular}{rccc}
\toprule
$m$ & resolved fraction $\mathfrak R_m$ & top 10\% mass & top 20\% mass\\
\midrule
2  & 0.52 & 0.95 & 0.999\\
4  & 0.75 & 0.98 & 0.9997\\
8  & 0.92 & 0.98 & 0.9999\\
12 & 0.97 & 0.99 & 0.9999\\
24 & 0.99 & 0.99 & 0.9999\\
\bottomrule
\end{tabular}
\end{table}

To expose the interaction among feature resolution, polynomial degree, and pilot size, Table~\ref{tab:function-space-grid} uses the first $M=10000$ draws of the same pilot and the same evaluation cloud, varying both $m$ and $d$.  The dependence on $d$ is not uniformly monotone in this finite-pilot experiment; at larger $m$, a larger polynomial space can increase the empirical moment-estimation burden.  The comparison is compatible with Theorem~\ref{thm:function-nevai}: adding features reduces conditional-likelihood error, but may increase both fixed-degree localization error and pilot error. This single-pilot grid does not separate those effects.

\begin{table}[tbp]
\centering
\small
\caption{Joint feature and polynomial resolution: Nevai mass capture at $20\%$ retention with $M=10000$.}
\label{tab:function-space-grid}
\begin{tabular}{rrrr}
\toprule
$m$ & degree $d$ & basis size $s_{d,m}$ & top $20\%$ mass\\
\midrule
4 & 1 & 5 & 0.896 \\
4 & 2 & 15 & 0.925 \\
4 & 3 & 35 & 0.943 \\
8 & 1 & 9 & 0.903 \\
8 & 2 & 45 & 0.923 \\
8 & 3 & 165 & 0.926 \\
12 & 1 & 13 & 0.878 \\
12 & 2 & 91 & 0.893 \\
12 & 3 & 455 & 0.876 \\
\bottomrule
\end{tabular}
\end{table}

The experiment therefore separates three ideas that are often mixed together. A small feature dimension can already contain nearly all posterior-relevant information; increasing polynomial degree can improve localization at fixed feature resolution; and accurately estimating the resulting moment geometry requires a pilot size that grows with basis complexity.  These are distinct terms in \eqref{eq:nevai-three-scale}; the numerical comparisons do not isolate their individual sizes.

Figure~\ref{fig:function-calibration} compares score values with the exact feature likelihood on 7000 randomly selected candidates from the same evaluation cloud. The score has a strongly compressed range and is far from the diagonal, despite useful ranking performance. Thus this experiment supports candidate screening at the stated resolution, rather than accurate pointwise likelihood reconstruction.
\begin{figure}[tbp]
\centering
\includegraphics[width=0.4\textwidth]{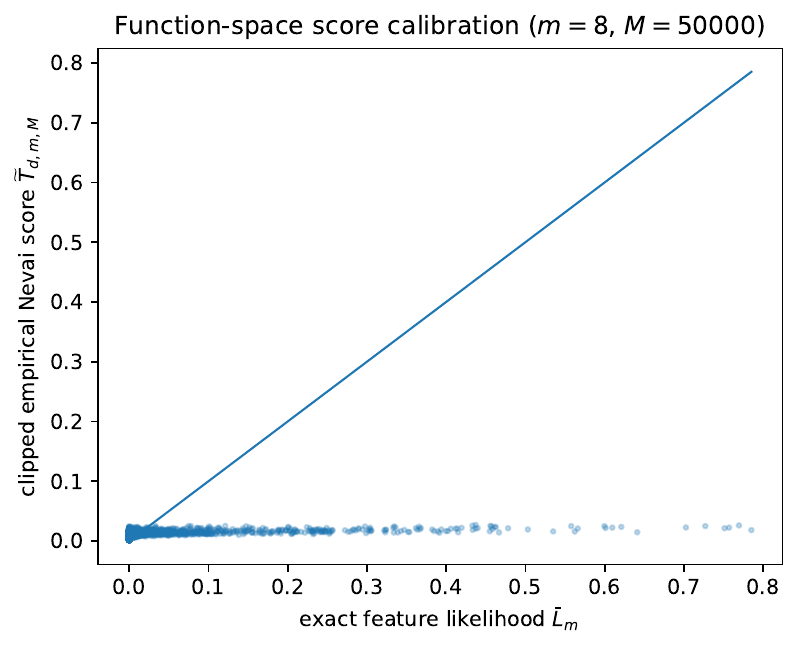}
\caption{Nevai score versus exact feature likelihood ($m=8$, $d=2$, $M=50000$). The line is the identity; score values are not rescaled.}
\label{fig:function-calibration}
\end{figure}
\FloatBarrier

\subsection{A diffusion-coefficient inverse problem}
\label{sec:num-diffusion}

We conclude with a two-parameter diffusion inverse problem using the same screening construction. Here $q=\theta\in\mathbb R^2$.  On \((0,1)\), consider
\[
-\frac{\mathrm d}{\mathrm d x}\!\left(a_\theta(x)\frac{\mathrm d u}{\mathrm d x}\right)=1,\qquad u(0)=u(1)=0.
\]
with
$a_\theta(x)=\exp\!\left\{0.9\theta_1\sin(\pi x)+0.7\theta_2\cos(2\pi x)\right\}$,   $\theta\in[-1,1]^2$.
The unknown enters the leading diffusion coefficient.  The deterministic forward map is evaluated by high-order Gauss--Legendre quadrature using the one-dimensional flux identity
$a_\theta(x)u_\theta'(x)=C_\theta-x$,  $C_\theta= \frac{\int_0^1 x/a_\theta(x)\,\dd x} {\int_0^1 1/a_\theta(x)\,\dd x}$.

We observe \(u_\theta\) at \(x=0.2,0.4,0.6,0.8\), use the uniform prior on \([-1,1]^2\), and take $\theta^\dagger=(0.55,-0.40)$, $\sigma=0.007$. The noise realization is fixed across all computations.  A weighted pilot of \(M=20000\) prior candidates uses the ordinary reduced Gaussian likelihood, and the score is formed from a degree-five total Legendre basis of size 21.  On an independent evaluation cloud of $N=60000$ prior draws, the Nevai score has Spearman rank correlation \(0.772\) with the exact reduced likelihood.  More importantly for screening, its top \(10\%\) and \(20\%\) prior fractions contain respectively \(88.9\%\) and \(98.3\%\) of the reference posterior mass.

\begin{figure}[t]
\centering
\includegraphics[width=0.98\textwidth]{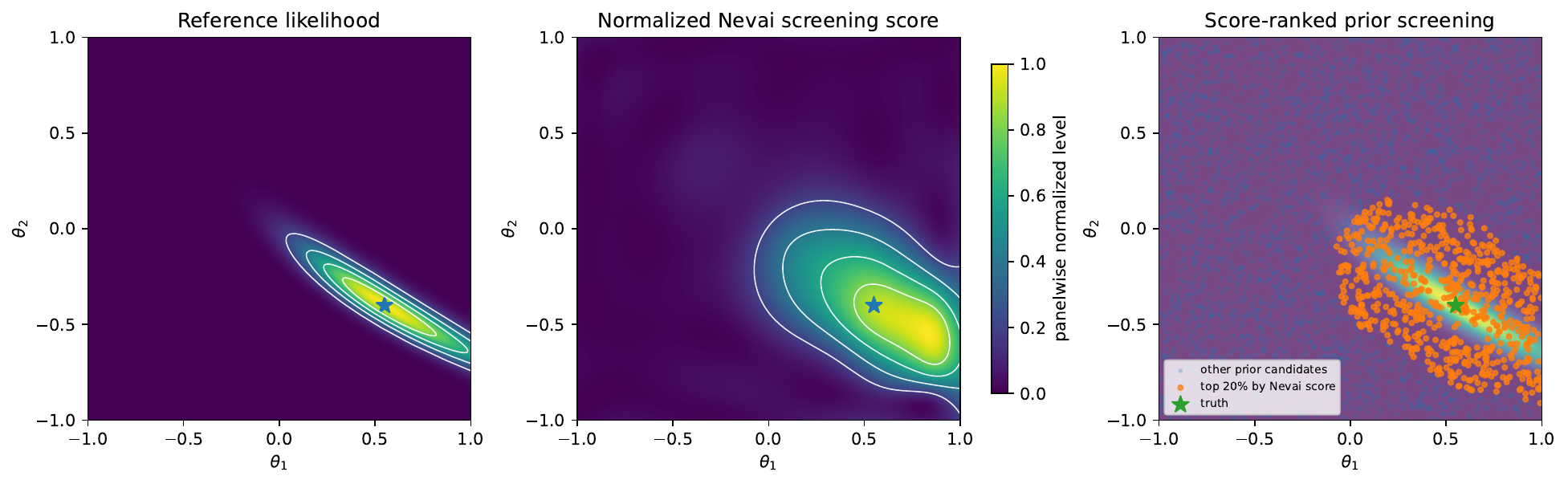}
\caption{Diffusion inversion: reference likelihood (left), Nevai score (center), and top-$20\%$ candidates (right). Field panels are independently rescaled to $[0,1]$.}
\label{fig:diffusion-screening}
\end{figure}

Figure~\ref{fig:diffusion-screening} uses separate min--max scaling for the likelihood and Nevai fields and common contour levels; candidate ranking uses the unrescaled score. It shows that the score is smoother than the likelihood. The reported capture values indicate that this smoothing still permits useful ranking in this example; they do not establish accurate likelihood reconstruction.  The example also illustrates why the Nevai score is attractive in concentrated posteriors: it uses the posterior-weighted moment matrix linearly and does not require inversion of an increasingly ill-conditioned matrix.

\section{Conclusion}
\label{sec:conclusion}

We have developed a posterior-moment view of Bayesian inverse problems in which a likelihood-weighted Hankel matrix records how the data reshape prior geometry, using ordinary deterministic likelihood evaluations at prior pilot draws.  Christoffel and Nevai constructions turn this matrix into complementary relevance scores: the former emphasizes inverse-moment contrast, while the latter is a bounded population average of the likelihood with direct matrix-perturbation control.  The construction uses no derivatives of the forward map.

For function-valued unknowns, nested feature maps link the finite-resolution construction to an exact feature posterior.  The conditional likelihood $L_m$ preserves the posterior information visible through the selected features, and the associated posterior converges to the full posterior as the filtration becomes complete.  The Hankel matrix is a finite-dimensional compression of the likelihood multiplication operator, while the empirical-Nevai estimate separates naturally into feature, polynomial-localization, and pilot errors.  The nonlinear function-coefficient PDE and the Gaussian inverse-source experiment illustrate the resulting tradeoff: a small feature set can already capture most of the posterior-relevant variation, but accurate screening still requires enough polynomial and pilot resolution.

The expanded nonlinear PDE comparison separates scalar-likelihood screening from forward-response approximation. Christoffel screening performs better than the tested direct likelihood regressions at small retention budgets, but the polynomial and kernel forward surrogates perform better still. The scores therefore provide an alternative when scalar likelihood information is the available interface; the experiments do not establish a general advantage over surrogate modelling. Pilot variability and conditional reference-bootstrap uncertainty are reported separately. Measured costs include fitting, feature construction, scoring, sorting, and the retained forward solves, and demonstrate savings only for the specified screening task.

The filtered Gaussian diagnostics clarify the role of smoothing. Increasing polynomial degree at a fixed filter approaches a smoothed likelihood, while feature selection can reduce error more than filtering at a fixed small degree. The independent-pilot construction is implemented on a controlled sparse-coordinate example. This supports the interpretation of the error terms, without claiming a Gaussian Christoffel convergence rate or broad adaptive performance on nonlinear PDEs.

Adequate prior coverage of posterior-relevant regions remains necessary in both the pilot and candidate pool. Moment construction cannot recover regions absent from the available information, and small evidence can make an absolute-error regret bound uninformative. The numerical PDE has a 24-mode coefficient representation; the experiments do not establish continuum accuracy, mesh-independent performance, or an end-to-end exact posterior sampling advantage. Extensions to adaptive pilots and corrected inference should be assessed separately from the screening guarantees proved here.

\appendix
\section{Proofs of selected results}
\label{app:proofs}
The proofs below complete the results stated in the main text.

\subsection{Matrix concentration}\label{app:matrix-concentration}
\begin{proof}[Proof of Theorem~\ref{thm:matrix-concentration}]
Set $\boldsymbol{\Phi}_i=\boldsymbol{\phi}_d(Q_i)$,
$X_i=L(Q_i; y)\boldsymbol{\Phi}_i\boldsymbol{\Phi}_i^\top$, and
$Y_i=X_i-H_d$. Then $\E Y_i=0$ and
$\widehat H_d^{(M)}-H_d=M^{-1}\sum_iY_i$.
Since $\|X_i\|_{\op}\le\kappa_d$ and $0\preceq H_d\preceq I$,
we have $\|Y_i\|_{\op}\le\kappa_d+1\le2\kappa_d$.
Moreover,
\[
 X_i^2=L(Q_i;y)^2\|\boldsymbol{\Phi}_i\|_2^2
 \boldsymbol{\Phi}_i\boldsymbol{\Phi}_i^\top\preceq\kappa_dX_i.
\]
Consequently,
\begin{align*}
\begin{aligned}
 &\E Y_i^2=\E X_i^2-H_d^2\preceq\kappa_dH_d\preceq\kappa_d I,\\
 &\left\|\sum_{i=1}^M\E Y_i^2\right\|_{\op}\le M\kappa_d.
\end{aligned}
\end{align*}
  The self-adjoint matrix Bernstein inequality \cite{Tropp2012},
applied to $\sum_iY_i$ at threshold $Mt$, proves
\eqref{eq:matrix-concentration}.
\end{proof}

\subsection{Legendre localization}\label{app:legendre-localization}
\begin{lemma}
\label{lem:legendre-diagonal}
For the uniform probability measure $\dd\nu_{\rm U}(t)=\tfrac12\,\dd t$
on $[-1,1]$ and its degree-$k$ orthonormal Legendre kernel,
\begin{equation}
 K_k^{\rm L}(t,t)\ge \frac{k+1}{2\pi},
 \qquad t\in[-1,1],\quad k\ge0.
 \label{eq:legendre-diagonal-lemma}
\end{equation}
\end{lemma}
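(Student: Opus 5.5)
The plan is to write the kernel diagonal explicitly and then bound it below using the Bernstein-type inequality for Legendre polynomials, checking the endpoints separately. With respect to $\dd\nu_{\rm U}=\tfrac12\dd t$, the orthonormal Legendre polynomials are $\mathsf P_j=\sqrt{2j+1}\,P_j$, where $P_j$ is the classical Legendre polynomial with $P_j(1)=1$. Hence
\[
 K_k^{\rm L}(t,t)=\sum_{j=0}^k(2j+1)P_j(t)^2 .
\]

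The first observation is that the endpoints are trivial. Since $P_j(\pm1)^2=1$, we get $K_k^{\rm L}(\pm1,\pm1)=(k+1)^2\ge(k+1)/(2\pi)$. The work is therefore in the interior.

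For the interior I would use the variational characterization of the kernel, $K_k^{\rm L}(t,t)=\sup\{r(t)^2/\int r^2\dd\nu_{\rm U}: r\in\mathcal P_k,\ r\neq0\}$. It then suffices to exhibit, for each $t=\cos\theta$, a single polynomial $r$ of degree at most $k$ with $r(t)^2\ge\frac{k+1}{2\pi}\int r^2\dd\nu_{\rm U}$.

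The simplest choice is $r=P_j$ for a single $j$, which gives $K\ge(2j+1)P_j(t)^2$. This fails near zeros of $P_j$, so I would instead sum over all $j$ and use an averaged lower bound on $P_j(\cos\theta)^2$.

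The main tool I would try is the asymptotic/Bernstein-type identity behind the classical Christoffel function estimate. Writing $\sin\theta\,P_j(\cos\theta)^2$ in terms of $\cos^2((j+\tfrac12)\theta-\pi/4)$, the average over consecutive $j$ of $\cos^2$ should be about $\tfrac12$. This would give $\sum_j(2j+1)P_j^2\gtrsim\sum_j\frac{2j+1}{\pi(j+1/2)\sin\theta}\cdot\tfrac12\approx\frac{k+1}{\pi\sin\theta}\ge\frac{k+1}{2\pi}$. The slack factor $2$ in $2\pi$ is presumably there to absorb the error terms. Making this pairing rigorous with explicit constants is the main obstacle.

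A cleaner fallback avoids asymptotics by comparing with the Chebyshev measure. For any $r\in\mathcal P_k$,
\[
 \int r^2\,\tfrac12\dd t=\int_0^\pi r(\cos\theta)^2\,\tfrac12\sin\theta\,\dd\theta\le\int_0^\pi r(\cos\theta)^2\,\tfrac12\dd\theta ,
\]
so $\tfrac12\dd t\le\tfrac{\pi}{2}\,\dd\mu_{\rm Ch}$, where $\dd\mu_{\rm Ch}=\frac{\dd t}{\pi\sqrt{1-t^2}}$ is the Chebyshev probability measure. Monotonicity of Christoffel functions in the measure then gives $\Lambda^{\nu_{\rm U}}_k\le\frac{\pi}{2}\Lambda^{\rm Ch}_k$, that is,
\[
 K_k^{\rm L}(t,t)\ge\frac{2}{\pi}K_k^{\rm Ch}(t,t).
\]
The Chebyshev orthonormal basis is $1,\sqrt2T_j$, so $K_k^{\rm Ch}(t,t)=1+2\sum_{j=1}^k\cos^2(j\theta)=k+1+\sum_{j=1}^k\cos(2j\theta)$. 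This sum can dip below $k+1$; the Dirichlet kernel is bounded below by $-\tfrac12$ plus a correction, so the bound is not immediately $\ge(k+1)/2$ everywhere.

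To repair this, I would use a test polynomial that is a Fejér-type square: take $r(\cos\varphi)$ to be the degree-$k$ Fejér kernel centered at $\theta$, viewed as a cosine polynomial. This gives $r(t)^2/\int r^2\dd\mu_{\rm Ch}\ge c(k+1)$ with $c\ge\tfrac14$ after symmetrizing the $\pm\theta$ images. Combined with the factor $\tfrac{2}{\pi}$ above, this yields $K_k^{\rm L}(t,t)\ge\frac{k+1}{2\pi}$. Verifying the constant $\tfrac14$ in the Fejér step, uniformly including $\theta$ near $0$ and $\pi$ where the two images overlap, is the step I expect to need care.
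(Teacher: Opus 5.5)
Your fallback is the paper's proof up to its final step. The paper also uses $\tfrac12\,\dd t\le\tfrac{\pi}{2}\,\dd\mu_{\rm Ch}$ to get $K_k^{\rm L}(t,t)\ge\tfrac{2}{\pi}K_k^{\rm Ch}(t,t)$, and it writes the same closed form $K_k^{\rm Ch}(t,t)=k+\tfrac12+\tfrac{\sin((2k+1)\theta)}{2\sin\theta}$. As written, however, your proposal never proves the one inequality that matters, $K_k^{\rm Ch}(t,t)\ge(k+1)/4$. You abandon the closed form because it is not obviously $\ge(k+1)/2$, but that is more than you need. The Fejér constant $\tfrac14$ is then only asserted. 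So the proposal is incomplete at exactly its decisive step.

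The paper closes this step directly. By the symmetry $\theta\mapsto\pi-\theta$, take $0<\theta\le\pi/2$. If $\sin((2k+1)\theta)<0$, then $\theta>\pi/(2k+1)$, and $\sin\theta\ge2\theta/\pi$ gives $\tfrac{1}{2\sin\theta}\le\tfrac{2k+1}{4}$. Hence $K_k^{\rm Ch}(t,t)\ge k+\tfrac12-\tfrac{2k+1}{4}=\tfrac{2k+1}{4}\ge\tfrac{k+1}{4}$.

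Your Fejér route also closes, and the overlap of the two images near $\theta=0,\pi$ is harmless. Let $F_k(x)=\sum_{|j|\le k}\bigl(1-\tfrac{|j|}{k+1}\bigr)e^{ijx}\ge0$ and $R(\varphi)=F_k(\varphi-\theta)+F_k(\varphi+\theta)$. This is an even trigonometric polynomial of degree $k$, so $R(\varphi)=r(\cos\varphi)$ with $r\in\mathcal P_k$.

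Nonnegativity of $F_k$ gives $r(\cos\theta)\ge F_k(0)=k+1$. For the norm, $\int r^2\,\dd\mu_{\rm Ch}=\tfrac{1}{2\pi}\int_{-\pi}^{\pi}R^2$. Then $(a+b)^2\le2a^2+2b^2$ and Parseval bound this by $4\sum_{|j|\le k}\bigl(1-\tfrac{|j|}{k+1}\bigr)^2=\tfrac{4(2k^2+4k+3)}{3(k+1)}$.

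Hence $K_k^{\rm Ch}(t,t)\ge\tfrac{3(k+1)^3}{4(2k^2+4k+3)}\ge\tfrac{k+1}{4}$, because $3(k+1)^2-(2k^2+4k+3)=k^2+2k\ge0$. This test-function argument is a valid replacement for the paper's Dirichlet-kernel estimate. It is longer, though, and only just reaches the required constant.

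The asymptotic pairing you proposed first is unnecessary. It would also require error bounds that are uniform up to the endpoints, so you should drop it.
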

\begin{proof}
Compare $\nu_{\rm U}$ with the Chebyshev probability measure
$\dd\nu_{\rm C}(t)=[\pi\sqrt{1-t^2}]^{-1}\dd t$.
Since $\dd\nu_{\rm U}/\dd\nu_{\rm C}\le\pi/2$, the variational
characterization of the Christoffel function implies
$K_k^{\rm L}(t, t)\ge(2/\pi)K_k^{\rm C}(t, t)$.
For $t=\cos\theta$ with $0<\theta<\pi$, the orthonormal Chebyshev basis satisfies
\[
\begin{aligned}
 K_k^{\rm C}(t,t)
 &=1+2\sum_{j=1}^k\cos^2(j\theta)\\
 &=k+\frac12+\frac{\sin((2k+1)\theta)}{2\sin\theta}.
\end{aligned}
\]
The endpoint values are understood by continuity.
By symmetry we may assume $0<\theta\le\pi/2$.
If the last term is negative, then $\theta\ge\pi/(2k+1)$.
Since $\sin\theta\ge2\theta/\pi$ on this interval, 
$$\frac{\sin((2k+1)\theta)}{2\sin\theta} \ge-\frac{2k+1}{4},$$
 whereas if it is nonnegative the same lower bound is immediate.
Consequently $$K_k^{\rm C}(t, t)\ge(2k+1)/4\ge(k+1)/4,$$
and the comparison proves \eqref{eq:legendre-diagonal-lemma}.
\end{proof}

\begin{proof}[Proof of Theorem~\ref{thm:legendre-localization}]
Using $\mathsf P_k(t)=\sqrt{2k+1}P_k(t)$, where $P_k(1)=1$ is the classical Legendre normalization, the three-term recurrence has off-diagonal coefficient $a_{k+1}=(k+1)/\sqrt{(2k+1)(2k+3)}$, with $a_{k+1}^2\le1/3$. For the total-degree kernel, telescoping the one-dimensional recurrence in coordinate $j$ leads to the following boundary identity, where $\mathbf{e}_j$ is the $j$-th coordinate unit vector:
\begin{equation*}
(\zeta_j-\xi_j)K_{d,m}(\bxi,\bzeta)
 =\sum_{|\alpha|=d}a_{\alpha_j+1}
 \left[p_\alpha(\bxi)p_{\alpha+\mathbf{e}_j}(\bzeta)
       -p_{\alpha+\mathbf{e}_j}(\bxi)p_\alpha(\bzeta)\right].
\end{equation*}
Orthogonality therefore yields
\begin{equation}
 \int (\zeta_j-\xi_j)^2K_{d,m}(\bxi,\bzeta)^2\dd\rho_m^0(\bzeta)
 =\sum_{|\alpha|=d}a_{\alpha_j+1}^2
 \left[p_\alpha(\bxi)^2+p_{\alpha+\mathbf{e}_j}(\bxi)^2\right].
 \label{eq:boundary-square}
\end{equation}

It remains to control the diagonal denominator.  If $k=\lfloor d/m\rfloor$, then the tensor index set $\{0, \ldots, k\}^m$ is contained in the total-degree index set.  Hence $\mathcal K_{d, m}(\bxi) \ge\prod_{j=1}^m K_k^{\rm L}(\xi_j,\xi_j)$. By Lemma~\ref{lem:legendre-diagonal}, the one-dimensional Legendre kernel satisfies
$K_k^{\rm L}(t, t)\ge\frac{k+1}{2\pi}$,  $-1\le t\le1$,
so
\begin{equation}
  \mathcal K_{d,m}(\bxi)
  \ge
  \left(\frac{\lfloor d/m\rfloor+1}{2\pi}\right)^m.
  \label{eq:total-degree-diagonal-lower}
\end{equation}
Summing \eqref{eq:boundary-square} over $j$, dividing by \eqref{eq:total-degree-diagonal-lower}, integrating in $\bxi$, and using
$\int p_\alpha^2\dd\rho_m^0=1$ and
$\#\{\alpha:|\alpha|=d\}=\binom{d+m-1}{m-1}$ establishes
\eqref{eq:Vdm-exact-bound}.

For $d\ge m$, 
\begin{align*}
\binom{d+m-1}{m-1} \le\frac{(2d)^{m-1}}{(m-1)!},\quad \lfloor d/m\rfloor+1\ge d/m, 
\end{align*}
which proves \eqref{eq:Vdm-Cm} after using $d^{-1}\le2(d+1)^{-1}$.
Finally, the Nevai representation implies $|T_{d, m}(\bxi)-\bar L_m(\bxi)|^2 \le \ell_m^2\mathfrak m_{2, d, m}(\bxi)$, and integration proves \eqref{eq:nevai-legendre-rate}.
\end{proof}

\subsection{Comparison of the scores}\label{app:score-discrepancy}
Let $\mathbf{u}_{\bxi}:=\frac{\boldsymbol p_{d,m}(\bxi)}{\sqrt{\mathcal K_{d,m}(\bxi)}}$ and complete $\mathbf{u}_{\bxi}$ to an orthogonal matrix $U_{\bxi}=[\mathbf{u}_{\bxi},U_{\bxi}^\perp]$.  Write
\begin{equation}
 U_{\bxi}^\top H_{d,m}U_{\bxi}
 =\begin{pmatrix}
 T_{d,m}(\bxi)&\mathbf{b}_{\bxi}^\top\\
 \mathbf{b}_{\bxi}&C_{\bxi}
 \end{pmatrix}.
 \label{eq:score-block-matrix}
\end{equation}
The Schur complement identity reads
\begin{equation}
  T_{d,m}(\bxi)-R_{d,m}(\bxi)
  =\mathbf{b}_{\bxi}^\top C_{\bxi}^{-1}\mathbf{b}_{\bxi}\ge0,
  \label{eq:schur-score-gap}
\end{equation}

\begin{proof}[Proof of Proposition~\ref{thm:score-discrepancy}]
Condition \eqref{eq:positive-feature-likelihood} implies
$a_mI\preceq H_{d,m}\preceq I$, hence $C_{\bxi}\succeq a_mI$.  The top-left block of the inverse of \eqref{eq:score-block-matrix} is the reciprocal of the Schur complement, so $R_{d,m}(\bxi) =T_{d,m}(\bxi)-\mathbf{b}_{\bxi}^\top C_{\bxi}^{-1}\mathbf{b}_{\bxi}$, which proves \eqref{eq:schur-score-gap}.

Let $k_{\bxi}(\bzeta):=\frac{K_{d, m}(\bxi,\bzeta)}{\sqrt{\mathcal K_{d, m}(\bxi)}}$. Then $\|k_{\bxi}\|_{L^2(\rho_m^0)}=1$ and
$T_{d,m}(\bxi)=\int\bar L_m(\bzeta)k_{\bxi}(\bzeta)^2\dd\rho_m^0(\bzeta)$.  The vector $\mathbf{b}_{\bxi}$ is the coefficient vector of the component orthogonal to $k_{\bxi}$ after projecting $\bar L_m k_{\bxi}$ onto the degree-$d$ polynomial space in $L^2(\rho_m^0)$. Therefore $\|\mathbf{b}_{\bxi}\|_2^2 \le \|(\bar L_m-T_{d,m}(\bxi))k_{\bxi}\|_{L^2(\rho_m^0)}^2 =\operatorname{Var}_{\pi_{d,m,\bxi}}(\bar L_m)$. Since $C_{\bxi}^{-1}\preceq a_m^{-1}I$, this proves
\eqref{eq:variance-score-gap}.  The Lipschitz estimate follows by bounding the variance by
$\int(\bar L_m(\bzeta)-\bar L_m(\bxi))^2\dd\pi_{d,m,\bxi}(\bzeta)\le\ell_m^2\mathfrak m_{2,d,m}(\bxi)$.
\end{proof}
\subsection{Filtered Gaussian localization}\label{app:gaussian-localized}
\begin{proof}[Proof of Theorem~\ref{thm:gaussian-localized}]
Let $K_{\boldsymbol{\rho}}$ denote the infinite Mehler kernel, and let $T_{\boldsymbol{\rho},S}$ be its normalized squared-kernel average of $f_S$. Let $\eta_{\bxi}$ be Gaussian with mean $(b_j\xi_j)_{j\in S}$ and covariance $\operatorname{diag}(v_j)$.
By orthogonality and the Mehler formula,
\[
 \frac{K_{\boldsymbol{\rho}}(\bxi,\bzeta)^2\,\dd\gamma_{\mathrm G,S}(\bzeta)}
      {\int K_{\boldsymbol{\rho}}(\bxi,\bzeta)^2\,\dd\gamma_{\mathrm G,S}(\bzeta)}
 =\dd\eta_{\bxi}(\bzeta).
\]
Let $\mathbf G_1,\mathbf G_2$ be independent with law $\gamma_{\mathrm G,S}$, and write $B_0=\operatorname{diag}(b_j)$ and $V=\operatorname{diag}(v_j)$. By \eqref{eq:weighted-feature-lipschitz} and Jensen's inequality,
\[
\begin{aligned}
 \|T_{\boldsymbol\rho,S}-f_S\|_{L^2(\gamma_{\mathrm G,S})}
 &\le\bigl\{\E\|B_S[(B_0-I)\mathbf G_1+V^{1/2}\mathbf G_2]\|_{\mathcal Y}^2\bigr\}^{1/2}\\
 &=\left[\sum_{j\in S}\|B_S\mathbf e_j\|_{\mathcal Y}^2((1-b_j)^2+v_j)\right]^{1/2}.
\end{aligned}
\]

For completeness, fix $\bxi$, write $a=K_{\boldsymbol{\rho},\Lambda}(\bxi,\cdot)$,
$b=K_{\boldsymbol{\rho}}(\bxi,\cdot)$ in $L^2(\gamma_{\mathrm G,S})$, and put
$F=\|a\|_2^2$, $G=\|b\|_2^2$.  Orthogonal projection onto the selected
Hermite indices implies $\langle a, b\rangle=F$ and
$G-F=\sum_{\alpha\notin\Lambda}\boldsymbol{\rho}^{2\alpha}h_\alpha(\bxi)^2$.
The Hellinger affinity of the probability densities $a^2/F$ and $b^2/G$
is at least $|\langle a, b\rangle|/\sqrt{FG}=\sqrt{F/G}$.  For probability
densities $p, q$, Cauchy--Schwarz applied to
$|p-q|=|\sqrt p-\sqrt q|(\sqrt p+\sqrt q)$ implies
$\|P-Q\|_{\rm TV}\le\sqrt{1-(\int\sqrt{p q})^2}$.  Hence the two
kernel probabilities have total-variation distance at most
$\sqrt{1-F/G}\le\sqrt{G-F}$, since $G\ge F\ge1$.
The averages of any $[0,1]$-valued function differ by at most that distance.
After squaring and integrating over $\bxi\sim\gamma_{\mathrm G,S}$, we obtain
\[
 \|T_{\boldsymbol{\rho},\Lambda,S}-T_{\boldsymbol{\rho},S}\|_2^2
 \le\sum_{\alpha\notin\Lambda}\boldsymbol{\rho}^{2\alpha}
 =\mathcal A_\Lambda(\boldsymbol{\rho}).
\]
The triangle inequality completes the proof.
\end{proof}

\subsection{Independent-pilot feature selection}\label{app:adaptive-hankel-nevai}
\begin{proof}[Proof of Theorem~\ref{thm:adaptive-hankel-nevai}]
Let $\mathbf{c}_\Gamma=(c_\alpha)_{\alpha\in\Gamma}$, $\widehat{\mathbf{c}}_\Gamma=(\widehat c_\alpha)_{\alpha\in\Gamma}$, and
$\mathbf{e}_\Gamma=\widehat{\mathbf{c}}_\Gamma-\mathbf{c}_\Gamma$.
Since $0\le L\le1$ and the Hermite basis is orthonormal,
$\E\|\mathbf{e}_\Gamma\|_2^2\le |\Gamma|/M_{\rm sel}$.
Let $\mathsf Q_S$ be the coordinate projection of candidate coefficients onto
indices with $\operatorname{supp}\alpha\subseteq S$.
The minimizing property of $\widehat S$ and the triangle inequality yield,
for any admissible $S$, 
\[
\|(I-\mathsf Q_{\widehat S})\mathbf{c}_\Gamma\|_2 \le\|(I-\mathsf Q_S)\mathbf{c}_\Gamma\|_2+2\|\mathbf{e}_\Gamma\|_2.
\]
 Since $\eta_S^2=\|(I-\mathsf Q_S)\mathbf{c}_\Gamma\|_2^2+
 \sum_{\alpha\notin\Gamma,\operatorname{supp}\alpha\not\subseteq S}c_\alpha^2$,
we have
$\eta_{\widehat S}\le\|(I-\mathsf Q_{\widehat S})\mathbf{c}_\Gamma\|_2
 +\varepsilon_\Gamma$ and
$\|(I-\mathsf Q_S)\mathbf{c}_\Gamma\|_2\le\eta_S$.
Take infima and expectations to obtain \eqref{eq:feature-selection-bound}.

For the estimation step, condition on the first pilot.  For each feature
value $\bxi$, the normalized filtered kernel is a polynomial $p_{\bxi}$ of total
degree at most $D$ with $\|p_{\bxi}\|_{L^2(\gamma_{\mathrm G,\widehat S})}=1$.
Gaussian hypercontractivity \cite{Janson1997} yields
$\|p_{\bxi}\|_4\le3^{D/2}$.
The variance of its independently sampled weighted-square mean is thus
bounded by $9^D/M_{\rm est}$, uniformly in $\bxi$.
Integrating over $\bxi$ and applying Jensen's inequality, we obtain 
\[
\E_{\rm est}\|\widehat T_{\boldsymbol{\rho},\Lambda,\widehat S} -T_{\boldsymbol{\rho},\Lambda,\widehat S}\|_2 \le 3^D/\sqrt{M_{\rm est}}.
\]
 The feature, localization, and truncation terms are controlled by
\eqref{eq:feature-selection-energy} and
Theorem~\ref{thm:gaussian-localized}.  Apply the triangle inequality,
then average over the selection pilot.
\end{proof}

\begingroup\small\sloppy
\bibliographystyle{plain}
\bibliography{references}
\endgroup
\end{document}